\let\realverbatim=\verbatim
\let\realendverbatim=\endverbatim
\renewcommand\verbatim{\par\addvspace{6pt plus 2pt minus 1pt}\realverbatim}
\renewcommand\endverbatim{\realendverbatim\addvspace{6pt plus 2pt minus 1pt}}
         \let\leq=\leqslant
         \let\geq=\geqslant
\newcommand\Real{\mbox{Re}} % cf plain TeX's \Re and Reynolds number
\newcommand\Imag{\mbox{Im}} % cf plain TeX's \Im
\newsavebox{\astrutbox}
\sbox{\astrutbox}{\rule[-5pt]{0pt}{20pt}}
\newcommand\ud{\mathrm{d}}
\newtheorem{proposition}{Proposition}[section]
\newdefinition{remark}{Remark}[section]
\newdefinition{assumption}{Assumption}[section]
\newdefinition{definition}{Definition}[section]
\newdefinition{conjecture}{Conjecture}[section]
\renewcommand{\t}[1]{\text{#1}}
\newcommand{\lt}[1]{\mathrm{LT}\left[#1\right]}
\title[European Journal of Applied Mathematics]
\author[X. Zhang et al.]{%
  Xiaozhu\ns Z\ls H\ls A\ls N\ls G$\,^{1,2,3}$\ns
  \and
  Marc\ns T\ls I\ls M\ls M\ls E$\,^{3,4}$\ns
}
\affiliation{%
  $^1$MOE Key Laboratory of Advanced Micro-Structured Materials and School of Physics Science and Engineering, Tongji University, Shanghai 200092, P. R. China\\
  $^2$Frontiers Science Center for Intelligent Autonomous Systems, Tongji University, Shanghai 200092, P. R. China\\
  $^3$Chair for Network Dynamics, Center for Advancing Electronics Dresden (cfaed) and Institute for Theoretical Physics, Technical University of Dresden, 01062 Dresden, Germany\\
  $^4$Lakeside Labs, Lakeside B04b, 9020 Klagenfurt, Austria\\
    email\textup{\nocorr: \texttt{xiaozhu.zhang@tongji.edu.cn;marc.timme@tu-dresden.de}}}
\date{\today}
\begin{document}

\label{firstpage}
\maketitle

\begin{abstract}
Networked dynamical systems, i.e., systems of dynamical units coupled via nontrivial interaction topologies, constitute models of broad classes of complex systems, ranging from gene regulatory and metabolic circuits in our cells to pandemics spreading across continents. Most of such systems are driven by irregular and distributed fluctuating input signals from the environment. Yet how networked dynamical systems collectively respond to such fluctuations depends on the location and type of driving signal, the interaction topology and several other factors and remains largely unknown to date.
As a key example, modern electric power grids are undergoing a rapid and systematic transformation towards more sustainable systems, signified by high penetrations of renewable energy sources. These in turn introduce significant fluctuations in power input and thereby pose immediate challenges to the stable operation of power grid systems. How power grid systems dynamically respond to fluctuating power feed-in as well as other temporal changes is critical for ensuring a reliable operation of power grids yet not well understood.
In this work, we systematically introduce a linear response theory for fluctuation-driven networked dynamical systems. The derivations presented not only provide approximate analytical descriptions of the dynamical responses of networks, but more importantly, allows to extract key qualitative features about spatio-temporally distributed response patterns.   Specifically, we provide a general formulation of a linear response theory for perturbed networked dynamical systems, explicate how dynamic network response patterns arise from the solution of the linearized response dynamics, and emphasize the role of linear response theory in predicting and comprehending power grid responses on different temporal and spatial scales and to various types of disturbances.
Understanding such patterns from a general, mathematical perspective enables to estimate network responses quickly and intuitively, and to develop guiding principles for, e.g., power grid operation, control and design.
\end{abstract}

\begin{keywords}
networked dynamical systems, linear response theory, fluctuations, dynamic responses, pattern formation, perturbation spreading, oscillatory networks, power grid dynamics
\end{keywords}

\begin{subjclass}[2020]
Primary: 34C15, 94C15; Secondary: 93A14, 34E10
% Authors should include up to five `key subject categories', listed in order of importance, taken from the Mathematics Subject Classification \\ (see https://zbmath.org/static/msc2020.pdf).
% 37A14 (Primary); 37A15 (Secondary)
\end{subjclass}

\section{Introduction}

Networked dynamical systems abound around and in us. From the circuits supporting metabolism and gene regulation in our cells to the neural networks in our brains, from the power grids that supply electric energy to most of our technical infrastructure to the internet that connects our computers, all of these systems are driven externally, often by irregular, time-dependent and spatially heterogeneous signals. How networked dynamical systems respond to such perturbations, driving signals, or other types of time-dependent inputs is hardly understood to date. In this article, we offer a general introduction to the basic theory of analyzing response features of networked dynamical systems by linear response theory and focus on applications in the realm of power grids.

A reliable supply with electric power fundamentally underlies most aspects of modern society. As the shares of renewable electric energy supply grow and consumer dynamics is increasingly influenced by digital technologies, fluctuations impinge on power grids, making them intrinsically driven, non-equilibrium systems, with distributed and often non-stationary response dynamics \citet{witthaut_collective_2022}. If fluctuations become large, they may destabilize grid dynamics, affect normal operations of parts of the grid, cause cascades of failures or even total blackouts [\citet{investigation_committee_final_2007,wilde_ofgem_2020,entso-e_continental_2021}]. To predict, control or mitigate the fluctuating and distributed responses of such networks resulting from input (and output) power fluctuations, we need to understand their nature in network dynamical systems. 
    
How can we systematically characterize fluctuating responses that are distributed across meshed networks? How can we predict their influences and at which nodes in a network is their impact most profound? Linear response theory (LRT) relates sufficiently small time-dependent driving signals to time-dependent responses. The theory approximates a resulting system dynamics near some operating point to first order in the strength of the driving signals. It has traditionally found applications across physics, chemistry and engineering [\citet{kubo_statistical-mechanical_1957,cammi_linear_1999,ikeguchi_protein_2005,ruelle_review_2009,majda_linear_2010,pan_non-hermitian_2020}], and recently also in power grid models [\citet{dorfler_synchronization_2013,witthaut_critical_2016,kettemann_delocalization_2016,manik_network_2017,tamrakar_propagation_2018,haehne_propagation_2019,tyloo_key_2019}]. However, a framework of linear response theory uncovering spatiotemporal response patterns in systems with multi-dimensional dynamics of units that simultaneously interact via intricate network topologies is not yet well established.

In this article, we introduce a general formalism of linear response theory for networked dynamical systems and demonstrate its applications in stationary and non-stationary models of power grids. Specifically, in Sec.~\ref{sec:general}, we present the main ideas of the linear response theory for networks by first analyzing the linear responses of networked dynamical systems with the most general settings (Sec.~\ref{subsec:general_formulation}) and then boiling down step-by-step to a specific LRT which provides a direct link between the temporal and the spatial features of the dynamic network responses (Sec.~\ref{subsec:general_operators}). In Sec.~\ref{sec:LRT_models}, we demonstrate the LRT for networks by applying it to two models of power grids: a stationary model for the DC approximation of the AC power flow distributions in power grids (Sec.~\ref{subsec:LRT_models_DC}), and a non-stationary model, the (second-order) oscillator model, commonly used to describe the dynamics of the high-voltage AC power transmission networks (Sec.~\ref{subsec:LRT_oscillator_model}). The next two sections, Sec.~\ref{sec:patterns} and Sec.~\ref{sec:role_LRT}, focus on the extraction and the interpretation of spatiotemporal response patterns of power grids from the LRT, as well as the analytical techniques needed therein. In Sec.~\ref{sec:patterns}, we elaborate how distinctive steady-state response patterns emerge in three frequency regimes (Sec.~\ref{subsec:steady_patterns}), how the transient spreading pattern of a perturbation entangles with the underlying network topology (Sec.~\ref{subsec:transient_patterns}), and how LRT helps to estimate the long-term risk of individual nodes from external fluctuations (Sec.~\ref{subsec:dvi}). In Sec.~\ref{sec:role_LRT} we summarize the role of LRT in uncovering such patterns by comparing the analyses for patterns in the steady-state response vs. the transient responses (Sec.~\ref{subsec:role_steady_transient}), the patterns in the deterministic responses to a given perturbation vs. the cumulative responses to a random signal (Sec.~\ref{subsec:role_deterministic_stochastic}), and the small responses close to the base operation state vs. the large responses further away (Sec.~\ref{subsec:role_small_large}). In the last section (Sec.~\ref{sec:outlook}), we point out several directions for extending the present theory for a better understanding of the response dynamics of networked dynamical systems and for a safer and more reliable operation of future power grids.

\section{Main ideas of the linear response theory for networks}
\label{sec:general}
In this section we introduce the main ideas of the linear response theory framework. Sec.~\ref{subsec:general_formulation} formulates the LRT on a general model of networks of one-dimensional dynamical units. Sec.~\ref{subsec:general_operators} highlights the linear operators, represented as matrices, arising in the LRT for specific network settings and how basic information of the underlying network system such as its topology and features defining the dynamically determined operating point enter those operators. In Sec.~\ref{subsec:general_operators} we also derive the explicit linear responses of the general network model of one-dimensional units introduced in Sec.~\ref{subsec:general_formulation} and extend the results for networks of higher-order nodal dynamics.

\subsection{General formulation of LRT for networked dynamical systems}
\label{subsec:general_formulation}

We now illustrate the main idea underlying LRT by starting with a general setting of networked dynamical systems where each unit's dynamics is one-dimensional. We consider a dynamical process involving $N$ interacting variables, whose state is represented by a vector $\bm{x}=(x_1,\cdots,x_N)\in \mathbb{R}^N$. The autonomous dynamics of the $N$-dimensional dynamical system is governed by
\begin{equation}
    \dot{\bm{x}}=\bm{f}(\bm{x}),
    \label{eq:dynamics_general}
\end{equation}
where $\bm{f}:\mathbb{R}^N\rightarrow\mathbb{R}^N$ is a function that in general depends on the states $\bm{x}$ of all units and does not explicitly depend on time. Let us consider a system that exhibits a fixed point at $\bm{x}=\bm{x}^*$ where $\bm{f}(\bm{x}^*)=\bm{0}$, and is influenced by an external dynamic driving vector $\bm{D}(t)\in\mathbb{R}^N$ at the fixed point $\bm{x}^*$. We investigate how the system dynamically responds to $\bm{D}(t)$, i.e. how the system's deviation $\bm{X}(t):=\bm{x}(t)-\bm{x}^*$ from the fixed point  evolves in time. In general, the dynamics of the system's response $\bm{X}$ follows
\begin{equation}
    \dot{\bm{X}}=\bm{f}(\bm{x}^*+\bm{X})+\bm{D}(t),
    \label{eq:response_dynamics_general}
\end{equation}
where both functions $\bm{f}$ and $\bm{D}$ can be highly nonlinear so that an exact analytical solution of the system's response $\bm{X}$ typically does not exist or is unknown.

Important information about the response dynamics (\ref{eq:response_dynamics_general}) is given by  the stability operator at the fixed point obtained from the linearization of the function $\bm{f}$ at $\bm{x}=\bm{x}^*$, i.e. the Jacobian matrix $\mathcal{J}$ with $\mathcal{J}_{ij}:=\frac{\partial f_i}{\partial x_j}|_{\bm{x}=\bm{x}^*}$. The system's response dynamics thus approximately follows the linearized differential equation
\begin{equation}
    \dot{\bm{X}}=\mathcal{J}\bm{X}+\bm{D}(t).
    \label{eq:LRT_general}
\end{equation}
The signs of the real parts of the Jacobian eigenvalues $\bm{w}^{[\ell]}$ (with $\ell$ being the index) indicate whether the fixed point $\bm{x}^*$ is linearly stable ($\Real[\bm{w}^{[\ell]}]<0$), unstable ($\Real[\bm{w}^{[\ell]}]>0$) or neutrally stable ($\Real[\bm{w}^{[\ell]}]=0$) in the eigendirection/eigenspace corresponding to the eigenvalue $\bm{w}^{[\ell]}$. Below, we typically focus on the dynamics near stable fixed points, where all $\Real[\bm{w}^{[\ell]}]<0$ (or marginally stable ones where one $\Real[\bm{w}^{[\ell]}]=0$ if $\mathcal{J}$ is a Laplacian operator) to justify the assumption that the solution $\bm{X}(t)$ of the linearized equation \eqref{eq:LRT_general} reasonably well approximates the full solution of \eqref{eq:response_dynamics_general}.

Let us now focus on systems with \textit{pairwise} interactions between the $N$ units such that the function $f_i(\bm{x})$ controling the time evolution of unit $i$ can be written as
\begin{equation}
    f_i(\bm{x})=h_i(x_i)+\sum_{j=1; \ j\neq i}^N K_{ij}g_{ij}(x_i,x_j),
    \label{eq:pairwise_interaction}
\end{equation}
where $h_i$ denotes the intrinsic dynamics depending on the variable $x_i$ itself and the coupling term $K_{ij}g_{ij}(x_i,x_j)$ represents the pairwise interaction between variable $x_i$ and $x_j$ with $i\neq j$. Here $K_{ij}\in\mathbb{R}^+_0$ is the non-negative coupling strength and $g_{ij}$ is the coupling function depending on the state of the pair of variables $(x_i, x_j)$ and $g_{ij}(x_i,x_j)\not \equiv 0$. 

\begin{remark}
Pairwise interactions (\ref{eq:pairwise_interaction}) induce an interaction structure of the dynamical system (\ref{eq:dynamics_general}) that can be represented by a \textit{graph} $G(V,E)$ where the set of vertices $V$ consists of $N$ variables $x_1,\cdots, x_N$ and the set of edges $E$ consists of all node pairs with the pairwise coupling strength being nonzero, i.e. $E=\{(i,j)\in V^2 | \  (i\neq j) \land (K_{ij}\neq0)\}$. 
\end{remark}

The linearization of the pairwise interaction (\ref{eq:pairwise_interaction}) yields the Jacobian matrix of the response dynamics (\ref{eq:response_dynamics_general}) of the driven networked system
\begin{subeqnarray}
    \mathcal{J}_{ij} &=& \frac{\partial}{\partial x_j}\left.\left({h}_i({x}_i)+\sum_{k\neq i} K_{ik}{g}_{ik}\left({x}_i,{x}_k\right)\right)\right|_{\bm{x}=\bm{x}^*}\\[3mm]
	&=& \left\{\begin{array}{ll}
	\left.\dfrac{\ud {h}_i}{\ud {x}_i}\right|_{\bm{x}=\bm{x}^*}+\displaystyle\sum_{k\neq i} K_{ik}\left.\dfrac{\partial{g}_{ik}}{\partial {x}_i}\right|_{\bm{x}=\bm{x}^*} &\mathrm{for}\quad j=i \\
	K_{ij}\left.\dfrac{\partial{g}_{ij}}{\partial {x}_j}\right|_{\bm{x}=\bm{x}^*} &\mathrm{for}\quad j\neq i.
	\end{array}\right.
	\label{eq:Jacobian_pairwise}
\end{subeqnarray}
It is clear that the topology of the interaction network $G$ explicitly enters the Jacobian matrix through the matrix of the coupling strength $K\in\mathbb{R}^{N\times N}$ with its $ij$-th element being defined as $K_{ij}$, which is effectively a weighted adjacency matrix of graph $G$. 

The combination of the linearized dynamics of a general $N$-dimensional networked dynamical system (\ref{eq:dynamics_general}) near a fixed point (\ref{eq:LRT_general}) and the arising Jacobian matrix that explicitly depends on the system's topology (\ref{eq:Jacobian_pairwise}) due to pairwise interactions (\ref{eq:pairwise_interaction}) provides a general form of LRT for networked dynamical systems. For a specific system with given forms of or constraints for the intrinsic nodal dynamics $h_i(x_i)$, coupling strengths $K_{ij}$ and coupling function $g_{ij}(x_i,x_j)$, the solution of the matrix equation (\ref{eq:LRT_general}), $X(t)$, can be characterized by evaluating the specific spectral properties of the Jacobian matrix $\mathcal{J}$. Thereby the LRT provides a powerful tool to describe the dynamic response of a networked system temporally and spatially at once: the solution $X(t)$ explicitly depends on time, and at the same time its relation to the topology-dependent Jacobian matrix can be exploited to reveal the temporally and spatially distributed patterns of the network response.

\subsection{Arising linear operators and network topology}
\label{subsec:general_operators}

As discussed in the last section, the dependence of the Jacobian matrix of network response dynamics on a weighted adjacency matrix of the underlying interaction network gives the first hint on how temporal and spatial features of dynamic network responses are intertwined. In this section we further show that, under a few commonly satisfied conditions such as diffusive coupling between units, interesting results of network response dynamics emerge. Especially, another important graph-theoretical matrix, the Laplacian matrix, arises in the network response dynamics (\ref{eq:LRT_general}), providing us powerful tools for characterizing the spatiotemporal patterns in dynamic network responses.

Diffusive coupling is a very common type of coupling present in many physiological and chemical systems [\citet{hale_diffusive_1997,larter_coupled_1999,postnov_neural_2006,casagrande_synchronization_2006,stankovski_coupling_2017}], in particular also appearing in the Kuramoto model [\citet{kuramoto_chemical_1984}] and its variations  [\citet{filatrella_analysis_2008,acebron_kuramoto_2005}]. A diffusive coupling function $\tilde{g}_{ij}$ mediating the interaction between a pair of nodes $(i,j)$ is characterized by its dependence on the state difference $x_j-x_i$ of the node pair, i.e. $g_{ij}(x_i,x_j)=\tilde{g}_{ij}(x_j-x_i)$ in \eqref{eq:Jacobian_pairwise}. For notational simplicity, we again denote the functions $\tilde{g}_{ij}$ just by $g_{ij}$.

\begin{proposition}[Stability of diffusively coupled networks: a special case]\label{prop:stability}
A networked dynamical system with evolution function (\ref{eq:pairwise_interaction}) and interaction network $G(V,E)$ is at least neutrally stable at its fixed point $\bm{x}=\bm{x}^*$ if a) the intrinsic nodal dynamics $h_i$ satisfies $\frac{\ud h_i}{\ud x_i}|_{\bm{x}=\bm{x}^*}\leq0$ for all nodes $i\in V$, b) the coupling function $g_{ij}$ is diffusive and satisfies $\frac{\ud g_{ij}}{\ud(x_j-x_i)}|_{\bm{x}=\bm{x}^*}\geq0$ for all node pairs $(i,j)\in E$.
\end{proposition}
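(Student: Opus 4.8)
The plan is to substitute the diffusive form $g_{ij}(x_i,x_j)=\tilde g_{ij}(x_j-x_i)$ into the Jacobian formula \eqref{eq:Jacobian_pairwise}, read off the sign structure of the resulting matrix, and then localize its spectrum with the Gershgorin circle theorem. Writing $\tilde g_{ij}'$ for $\frac{\ud g_{ij}}{\ud(x_j-x_i)}\big|_{\bm{x}=\bm{x}^*}$, one has $\frac{\partial g_{ij}}{\partial x_j}\big|_{\bm{x}^*}=\tilde g_{ij}'$ and $\frac{\partial g_{ij}}{\partial x_i}\big|_{\bm{x}^*}=-\tilde g_{ij}'$. Introduce the abbreviations $a_{ij}:=K_{ij}\tilde g_{ij}'$ and $b_i:=\frac{\ud h_i}{\ud x_i}\big|_{\bm{x}^*}$. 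By hypothesis (b) together with $K_{ij}\in\mathbb{R}^+_0$ we have $a_{ij}\geq0$ for every pair $(i,j)\in E$, while $a_{ij}=0$ for $(i,j)\notin E$ since then $K_{ij}=0$; by hypothesis (a), $b_i\leq0$ for all $i$. Then \eqref{eq:Jacobian_pairwise} becomes $\mathcal{J}_{ij}=a_{ij}$ for $j\neq i$ and $\mathcal{J}_{ii}=b_i-\sum_{k\neq i}a_{ik}$, i.e. $\mathcal{J}=B-L$, where $B=\mathrm{diag}(b_1,\dots,b_N)$ is negative semidefinite and $L$ (with $L_{ii}=\sum_{k\neq i}a_{ik}$ and $L_{ij}=-a_{ij}$) is a Laplacian matrix of the — in general directed, nonnegatively weighted — interaction graph $G$.

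Next I would isolate the three features of $\mathcal{J}$ that drive the argument: (i) $\mathcal{J}_{ij}=a_{ij}\geq0$ for $j\neq i$; (ii) $\mathcal{J}_{ii}=b_i-\sum_{k\neq i}a_{ik}\leq0$; and (iii) the row sums satisfy $\sum_{j}\mathcal{J}_{ij}=b_i\leq0$, because the Laplacian part $L$ has vanishing row sums. From these, Gershgorin's theorem closes the argument: every eigenvalue $\bm{w}^{[\ell]}$ of $\mathcal{J}$ lies in some disk centered at a diagonal entry $\mathcal{J}_{ii}$ with radius $\sum_{j\neq i}|\mathcal{J}_{ij}|=\sum_{j\neq i}a_{ij}=b_i-\mathcal{J}_{ii}\geq0$. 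The point of that disk with the largest real part therefore has real part $\mathcal{J}_{ii}+(b_i-\mathcal{J}_{ii})=b_i\leq0$, so $\Real[\bm{w}^{[\ell]}]\leq\max_i b_i\leq0$ for every $\ell$. Hence no eigenvalue of $\mathcal{J}$ has positive real part, which is exactly the claim that $\bm{x}^*$ is at least neutrally stable.

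The computation itself is routine; the one point to be careful about — and what makes Gershgorin the natural tool rather than a quadratic-form estimate — is that $\mathcal{J}$ is generically \emph{non}-symmetric, since neither $K_{ij}=K_{ji}$ nor $\tilde g_{ij}'=\tilde g_{ji}'$ need hold, so there is no Rayleigh quotient to appeal to; equivalently, one observes that $-\mathcal{J}$ is weakly diagonally dominant with nonnegative diagonal, hence a (possibly singular) M-matrix. It is also worth flagging the boundary behaviour built into the statement: if $B=0$ and the weighted graph is connected, then $L\bm{1}=\bm{0}$ yields a genuine zero eigenvalue, recovering the marginally stable Laplacian case mentioned after \eqref{eq:LRT_general}, and whenever some $\mathcal{J}_{ii}=0$ the corresponding Gershgorin disk is merely tangent to the imaginary axis — both consistent with ``neutrally stable.'' Since the proposition phrases stability purely through the sign of $\Real[\bm{w}^{[\ell]}]$, nothing further is required; upgrading this to Lyapunov (neutral) stability of the nonlinear system would additionally demand that the purely imaginary eigenvalues be semisimple, which the proposition does not assert.
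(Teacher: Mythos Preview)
Your proof is correct and follows essentially the same route as the paper: you both compute the Jacobian explicitly from the diffusive form, observe that it is (row) diagonally dominant with nonpositive diagonal, and then localize the spectrum in the closed left half-plane via Gershgorin. Your additional remarks on the non-symmetry of $\mathcal{J}$, the M-matrix interpretation, and the distinction between spectral and Lyapunov neutral stability go somewhat beyond the paper's version but do not alter the core argument.
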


\begin{proof}
The diffusive form of the coupling function $g_{ij}(x_j-x_i)$ yields a useful relation $\frac{\partial g_{ij}}{\partial x_j}=-\frac{\partial g_{ij}}{\partial x_i}=\frac{\ud g_{ij}}{\ud(x_j-x_i)}$. With this particular symmetry, the Jacobian matrix (\ref{eq:Jacobian_pairwise}) of the system at the fixed point $\bm{x}=\bm{x}^*$ takes the following form
\begin{equation}
    \mathcal{J}_{ij}=\left\{\begin{array}{ll}
	-\beta_i-\displaystyle\sum_{k\neq i} K_{ik}\gamma_{ik} &\mathrm{for}\quad i=j \\
	K_{ij}\gamma_{ij} &\mathrm{for}\quad i\neq j,
	\end{array}\right.
	\label{eq:Jacobian_diffusive}
\end{equation}
where $\beta_i:=-\left.\frac{\ud h_i}{\ud x_i}\right|_{\bm{x}=\bm{x}^*}$ and $\gamma_{ij}:=\left.\frac{\ud g_{ij}}{\ud(x_j-x_i)}\right|_{\bm{x}=\bm{x}^*}$. Given that $\beta_i\geq0$, $\gamma_{ij}\geq0$ and $K_{ij}\geq0$ by definition, the Jacobian $\mathcal{J}$ is diagonally dominant:
\begin{equation}
    \left|\mathcal{J}_{ii}\right|=\left|-\beta_i-\displaystyle\sum_{k\neq i} K_{ik}\gamma_{ik}\right|
	=\beta_i+\displaystyle\sum_{k\neq i} K_{ik}\gamma_{ik}
	\geq\displaystyle\sum_{k\neq i} K_{ik}\gamma_{ik}
	=\displaystyle\sum_{k\neq i} \left|K_{ik}\gamma_{ik}\right|
	=\displaystyle\sum_{j\neq i} \left|\mathcal{J}_{ij}\right|.
	\label{eq:Jacobian_diagonally_dominant}
\end{equation}
According to the Gershgorin circle theorem, every eigenvalue of the Jacobian matrix lies within at least one of the $N$ Gershgorin discs $D_i(\mathcal{J}_{ii},r_i)=\left\lbrace z\in\mathbb{C} |\, \,\left|z-\mathcal{J}_{ii}\right|\leq r_i \right\rbrace$ with $r_i=\sum_{k\neq i}|\mathcal{J}_{ki}|$ in the complex plane. Relation (\ref{eq:Jacobian_diagonally_dominant}) indicates that all $N$ Gershgorin discs lie in the left half of the complex plane, i.e. in $\left\lbrace  z\in \mathbb{C}\left|\,\Real\left(z\right)\leq 0\right.\right\rbrace$, because the centers of the discs $(\mathcal{J}_{ii},0)$ lie on the negative real axis and the radius of the discs $r_i=\sum_{k\neq i}|\mathcal{J}_{ki}|\leq|\mathcal{J}_{ii}|$. The discs touch the imaginary axis from the half plane $\left\lbrace  z\in \mathbb{C}\left|\,\Real\left(z\right)\leq 0\right.\right\rbrace$ only if $\beta_i=0$. Therefore, all Jacobian eigenvalues can only have non-positive real parts and consequently the networked dynamical system is at least neutrally stable at the fixed point $\bm{x}=\bm{x}^*$.
\end{proof}

\begin{remark}[Homogeneous nodal dynamics and graph Laplacian]
\label{remark:homogeneous_dynamics_Laplacian}
We remark that in case of identical intrinsic nodal dynamics at all nodes, a weighted Laplacian matrix $\mathcal{L}$ of the interaction graph explicitly enters the linearized response dynamics of the network (\ref{eq:LRT_general}). Assuming $\beta_i=\beta\in\mathbb{R}$ for all nodes $i$, we can express the Jacobian matrix as
\begin{equation}
    \mathcal{J}=- \beta \bm{1} - \mathcal{L},
    \label{eq:Jacobian_identical_nodal_dynamics}
\end{equation}
where the weighted graph Laplacian $\mathcal{L}$ is defined as
\begin{equation}
    \mathcal{L}_{ij}:=\left\{\begin{array}{ll}
	\displaystyle\sum_{k\neq i} K_{ik}\gamma_{ik} &\mathrm{for}\quad i=j \\
	-K_{ij}\gamma_{ij} &\mathrm{for}\quad i\neq j.
	\end{array}\right.
	\label{eq:Laplacian_general}
\end{equation}
Here $K_{ij}\gamma_{ij}$ is considered a weight of edge $(i,j)$, containing the coupling strength $K_{ij}$ and the linearized coupling function $\gamma_{ij}$ at the fixed point. 
\end{remark}

\begin{remark}[Symmetry and linear responses in Laplacian eigenbasis]
\label{remark:symmetry_response_Laplacian}
In general, the interaction network can be directed, meaning that for an edge $(i,j)$ the coupling strength $K_{ij}$ and the derivative of the coupling function $\gamma_{ij}$ at the fixed point, i.e. the sensitivity of the diffusive coupling function $g_{ij}(x_j-x_i)$ to a change in the state difference $x_j-x_i$, can differ from their counterparts $K_{ji}$ and $\gamma_{ji}$ for the edge $(j,i)$ with the opposite direction. This asymmetry leads to an asymmetric weighted graph Laplacian (\ref{eq:Laplacian_general}). However, for undirected networks with symmetric strengths ($K_{ij}=K_{ji}$) and symmetric sensitivities of coupling functions ($\gamma_{ij}=\gamma_{ji}$), or more generally, a symmetric combination $K_{ij}\gamma_{ij}=K_{ji}\gamma_{ji}$, the weighted graph Laplacian $\mathcal{L}$ is symmetric. Its eigenvectors thus form an orthogonal basis which allows us to solve for the linear network responses in (\ref{eq:LRT_general}) and (\ref{eq:Jacobian_identical_nodal_dynamics}) by expressing the response vector $\bm{X}(t)$ in terms of the eigenvalues and eigenvectors of the Laplacian.

We first analyze a system that is perturbed by only one sinusoidal signal with magnitude $\varepsilon>0$ and frequency $\omega>0$ at node $k$, i.e. $D^{(k)}_i(t)=\delta_{ik}\varepsilon e^{\imath(\omega t+\varphi)}$, where $\delta_{ik}$ is the Kronecker delta with $\delta_{ik}=1$ if $i=k$ and $\delta_{ik}=0$ otherwise. We solve for the linear network response vector $\bm{X}^{(k)}(\omega,t)$ governed by
\begin{equation}
    \dot{\bm{X}}^{(k)}=-\beta\bm{X}^{(k)}-\mathcal{L}\bm{X}^{(k)}+\bm{D}^{(k)}(t).
    \label{eq:response_dynamics_Laplacian}
\end{equation}
Expressing the response vector in the constant eigenbasis of Laplacian \begin{equation}
    \bm{X}^{(k)}(t)=\sum_{\ell=0}^{N-1}c^{[\ell]}(t)\bm{v}^{[\ell]}
    \label{eq:linear_response_Laplacian_basis}
\end{equation} 
and the time-dependent coefficients $c^{[\ell]}(t)$ and exploiting orthogonality of the Laplacian eigenvectors, we obtain the ordinary differential equations 
\begin{equation}
    \dot{c}^{[\ell]}=(-\beta-\lambda^{[\ell]})c^{[\ell]}+\varepsilon v^{[\ell]}_k e^{\imath(\omega t+\varphi)}
    \label{eq:coefficient_ODE}
\end{equation}
for the coefficients.
Here the $N$ Laplacian eigenvalues $\lambda^{[\ell]}$ and eigenvectors $\bm{v}^{[\ell]}$ are indexed according to $0=\lambda^{[0]}\leq\lambda^{[1]}\leq\cdots\leq\lambda^{[N-1]}$. If the graph is connected, the zero eigenvalue is unique, such that all other eigenvalues are positive real numbers [\citet{bapat_graphs_2010}]. The solution of differential equation \eqref{eq:coefficient_ODE} for the coefficients is 
\begin{equation}
    c^{[\ell]}(\omega,t)=\left(\bm{X}_0^{(k)}\cdot\bm{v}^{[\ell]}\right)e^{\left(-\beta-\lambda^{[\ell]}\right)t}+\dfrac{\varepsilon v^{[\ell]}_k e^{\imath\varphi}}{\beta+\lambda^{[\ell]}+\imath\omega}\left(-e^{\left(-\beta-\lambda^{[\ell]}\right)t}+e^{\imath\omega t}\right),
    \label{eq:coefficient_general}
\end{equation}
where $\bm{X}_0^{(k)}$ denotes the initial response vector at $t=0$ given node $k$ is perturbed. The linear network response is thus given by (\ref{eq:linear_response_Laplacian_basis}) and (\ref{eq:coefficient_general}).
\end{remark}

\begin{remark}[LRT for distributed arbitrary perturbations]
\label{remark:distributed_arbitrary_perturbations}
In case perturbations with arbitrary temporal structures are distributed across the network, that is, multiple elements of the perturbation vector $\bm{D}(t)$ are arbitrary time series, the linear network response can be obtained by summing up the single-signal single-frequency response (\ref{eq:linear_response_Laplacian_basis} and \ref{eq:coefficient_general}) over all frequency components $\omega$ and all perturbation signals $k$ by resorting to the linearity of the dynamics (\ref{eq:LRT_general})
\begin{equation}
    \bm{X}(t)=\sum_{k}\left(\bm{X}^{(k)}_{\omega=0}(t)+\sum_{\omega>0}\bm{X}^{(k)}(\omega,t)\,\ud\omega\right).
\end{equation}
Here $\bm{X}^{(k)}_{\omega=0}(t)$ denotes the linear response to a constant ($\omega=0$) perturbation $\varepsilon$, which shares the same form as $\bm{X}^{(k)}(\omega,t)$ with $\omega=0$ for $\beta>0$ but not for $\beta=0$. Moreover, the sum becomes an integral for continuously distributed frequencies, with $\bm{X}^{(k)}(\omega,t)$ replaced by the associated response density per unit frequency.

For systems where nodal damping vanishes ($\beta=0$), the coefficient for the $0$-th eigenmode $c^{[0]}$ diverges for a constant perturbation as $\omega \rightarrow 0$, so it needs to be solved separately and takes the form of $\bm{X}_0^{(k)}\cdot\bm{v}^{[0]}+\varepsilon v_k^{[0]} t$, inducing a linear drift and thus unbounded growth with time. As the dynamics LRT describes is only approximate for nonlinear dynamical system, unbounded growth typically induces the approximation of the linear response to the real system to break down, i.e. become useless in practice due to larger errors between approximate and exact solution.
\end{remark}

\begin{remark}[LRT for higher-order nodal dynamics]
    In the above paragraphs we discussed the main ideas of the LRT for networked dynamical systems with first-order nodal dynamics. For more general systems with second- or higher-order nodal dynamics, the straightforward relation (\ref{eq:Jacobian_identical_nodal_dynamics}) between the Jacobian matrix and a weighted graph Laplacian does not hold any more. Nevertheless, a symmetric weighted graph Laplacian still arises in the linearized response dynamics for diffusively-coupled undirected networks with symmetric coupling strengths and symmetric sensitivities of coupling functions as discussed in \autoref{remark:symmetry_response_Laplacian}. If the higher-order time derivatives of the state variables has homogeneous coefficients for individual nodes, i.e. the response dynamics to a perturbation $\bm{D}(t)$ has the form of 
    \begin{equation}
        \sum_d\kappa_dD_t^{d}\bm{X}=-\mathcal{L}\bm{X}+\bm{D}(t),
    \end{equation}
    an explicit solution of the linear responses in the eigenbasis of $\mathcal{L}$ can still be obtained following the routine in \autoref{remark:symmetry_response_Laplacian}, if the corresponding ODEs for the time-dependent coefficients
    \begin{equation}
        \sum_d\kappa_dD_t^{d}{c}^{[\ell]}=-\lambda^{[\ell]}c^{[\ell]}+\bm{D}(t)\cdot\bm{v}^{[\ell]}
    \end{equation}
    are solvable. Here $\kappa_d\in\mathbb{R}$ are constant coefficients and we use Euler's notation for derivatives, where $D_t^{d}x$ denotes the $d$-th time derivative of variable $x$.

\end{remark}

In summary, if a networked dynamical system consisting of $N$ units
\begin{enumerate}
    \item has a fixed point at $\bm{x}=\bm{x}^*$,
    \item in the neighbourhood of $\bm{x}=\bm{x}^*$ the intrinsic nodal dynamics $h_i(x_i)$ gives homogeneous non-positive feedback to the respective nodes, i.e. $\beta_i:=-\left.\frac{\ud h_i}{\ud x_i}\right|_{\bm{x}=\bm{x}^*}=\beta\geq0$ for all $i$, and
    \item the coupling function $g_{ij}(x_j-x_i)$ is diffusive and the coupling term's sensitivity to small changes at $\bm{x}=\bm{x}^*$ is symmetric and non-negative, i.e. $K_{ij}=K_{ji}$ and $\gamma_{ij}:=\left.\frac{\ud g_{ij}}{\ud(x_j-x_i)}\right|_{\bm{x}=\bm{x}^*}=\gamma_{ji}\geq0$ for all edges $(i,j)$,
\end{enumerate}
then i) the dynamical system is at least neutrally stable (\autoref{prop:stability}) and ii) a symmetric weighted graph Laplacian (\ref{eq:Laplacian_general}) arises in the network response dynamics (\ref{eq:response_dynamics_Laplacian}) and enables the expression of linear network responses in the Laplacian eigenbasis (\ref{eq:linear_response_Laplacian_basis}) and (\ref{eq:coefficient_general}) (\autoref{remark:homogeneous_dynamics_Laplacian} and \autoref{remark:symmetry_response_Laplacian}). As we will show in Section \ref{sec:patterns}, the explicit dependence of the linear network response on Laplacian eigenvalues and eigenvectors provides a powerful tool to reveal how the dynamic responses spatially distributed across the network.

\section{LRT for power grid models}
\label{sec:LRT_models}
We now discuss how the LRT for general networked dynamical systems introduced in Section \ref{sec:general} applies to stationary and non-stationary models of power grids and helps reveal static and dynamic responses of power grid systems to external perturbations.

\subsection{LRT for the DC power flow model}
\label{subsec:LRT_models_DC}

We first demonstrate how LRT works in a minimal model, the DC power flow model, and how it helps to compute the systemic stationary response of a power transmission network to perturbations in power injections and withdrawals. 

For common AC power grids, the full power flow analysis poses several challenges such as possible difficulties in finding a solution in ill-conditioned cases and the existence of multiple solutions due to the inherent nonlinearities [\citet{milano_power_2010}]. By linearizing the AC power flow equations at an operation point, the DC power flow model\footnote{The DC power flow model here must not be confused with models for high-voltage direct-current (HVDC) transmission grids, which actually uses DC, as opposed to alternating current (AC), for the transmission of electrical power.} provides a relatively simple and computational inexpensive way to compute the power flows. 

In an AC power transmission grid, the total complex power flow $S_j$ from unit $j$ to unit $i$ reads
\begin{equation}
    S_{ij}=U_{j} I^*_{ij}
    =U_j\left(\dfrac{U_j-U_i}{Z_{ij}}\right)^*
    =\dfrac{|U_j|e^{\imath\theta_j}(|U_j|e^{-\imath\theta_j}-|U_i|e^{-\imath\theta_i})}{R_{ij}-\imath X_{ij}},
    \label{eq:Sij_definition}
\end{equation}
where $U_j=|U_j|e^{\imath\theta_j}$ and $I_{ij}=(U_j-U_i)/Z_{ij}$ denote the voltage at node $j$ and the current between nodes $j$ and $i$, respectively. Both are expressed as complex numbers to reflect the oscillating nature of AC power generation. The asterisks in, e.g., $I_{ij}^*$ indicate the complex conjugates (e.g., of $I_{ij}$). Moreover,  $Z_{ij}=R_{ij}+\imath X_{ij}$ denotes the impedance of the transmission line $(i,j)$ between unit $i$ and $j$, with $R_{ij}$ 
denoting the resistance and $X_{ij}$ the reactance of $(i,j)$. The complex power flow $S_{ij}=P_{ij}+\imath Q_{ij}$ consists of two parts, the active power $P_{ij}$ that results in net energy transfer and the reactive power $Q_{ij}$ that returns to the source in each cycle, not doing any work, but supporting the voltage stability of the power system [\citet{machowski_power_2008}].

The DC power flow model is based on the following assumptions on the parameters and the operating state of the power grid systems:
\begin{assumption}[Perfect voltage support]
The voltage amplitude is constant and identical for each node in the power grid network, $|U_i|\equiv U$ for all $i$, and the management of the reactive power is not considered.
\label{assump:perfect_voltage}
\end{assumption}
\begin{assumption}[Lossless lines]
Transmission losses on the lines are neglected, implying that line resistances are negligible compared to line reactances: $R_{ij}/ X_{ij}\rightarrow 0$ for all lines $(i,j)$.
\label{assump:lossless_lines}
\end{assumption}
\begin{assumption}[Low line loads]
Loads on all transmission lines are low, that is, the voltage angle differences between all neighboring nodes are much smaller in magnitude than $\pi/2$ such  that $\sin(\theta_j-\theta_i)\approx\theta_j-\theta_i$ and $\cos(\theta_j-\theta_i)\approx1$.
\label{assump:low_loads}
\end{assumption}

With the above mentioned assumptions in mind, the complex power flow (\ref{eq:Sij_definition}) between neighboring nodes simplifies to
\begin{equation}
    S_{ij}=\dfrac{U^2(1-e^{\imath(\theta_j-\theta_i)})}{-\imath X_{ij}}
    =\dfrac{U^2}{X_{ij}}\left(\theta_j-\theta_i\right) = P_{ij}.
    \label{eq:Sij_DCapproximation}
\end{equation}
Here the complex power flow $S_{ij}$ naturally reduces to the active power flow $P_{ij}$ since the imaginary part vanishes. The equation (\ref{eq:Sij_DCapproximation}) resembles the expression of the direct current carried by a ``resistor'' $X_{ij}/U^2$ influenced by a ``voltage drop'' $\theta_j-\theta_i$ according to Ohm's law, hence the name ``DC power flow model''.

For an AC power grid system consisting of $N$ units, the active power flow $P_i$ injected at unit $i$ is the sum 
\begin{equation}
    P_i=\sum_{j=1}^{N} P_{ij}=\sum_{j=1}^N K_{ij}\left(\theta_i-\theta_j\right)
    \label{eq:electrical_power_transmission}
\end{equation}
over all connected units. Equation \eqref{eq:electrical_power_transmission} is the core of the DC power flow model as it yields the pattern of power flows $K_{ij}\theta_j$ across the grid network.
Here we follow the notation introduced in Sec.~ \ref{subsec:general_formulation} and define the coupling strength as $K_{ij}=U^2/X_{ij}$ if there exists a transmission line between unit $i$ and $j$ and $K_{ij}=0$ if there is not. Denoting the nodal active power injections and nodal voltage angles as vectors, $\bm{P}:=(P_1,\cdots,P_N)$ and  $\bm{\theta}:=(\theta_1,\cdots,\theta_N)$, we express the linear relation between them by a weighted graph Laplacian $\mathcal{L}$ introduced in Section \ref{subsec:general_operators}, i.e.
\begin{equation}
    \bm{P}=\mathcal{L}\bm{\theta},
    \label{eq:DC_matrix_equation}
\end{equation}
Here $\mathcal{L}$ is defined similarly as in (\ref{eq:Laplacian_general}), only with $\gamma_{ij}\equiv 1$ for all edges $(i,j)$. 

Assuming that the power transmission network runs at a normal operation state where the voltage angles $\bm{\theta}^*$ are stationary at all nodes, the fixed voltage angle differences determine a specific power flow pattern  $\bm{P}^*$ across the network through the linear operator $\mathcal{L}$ such that $\bm{P}^*=\mathcal{L}\bm{\theta}^*$. If the nodal power injections are perturbed as $\bm{P}(t)=\bm{P}^*+\bm{D}(t)$ by a shift vector $\bm{D}$(t) that in general is time dependent and reflects an increase or decrease 
%of the capacity 
of power generation for consumption at some of the nodes, the nodal voltage angles $\bm{\theta}=\bm{\theta}^*+\bm{\Theta}$ change accordingly through $\mathcal{L}$ due to the linear operation in (\ref{eq:DC_matrix_equation}). The response vector of the voltage angles $\bm{\Theta}$ is given by
\begin{equation}
    \bm{\Theta}(t)=\mathcal{L}^{+}\bm{D}(t),
    \label{eq:LRT_DC_power_flow}
\end{equation}
where $\mathcal{L}^{+}$ denotes the Moore-Penrose inverse of the weighted graph Laplacian $\mathcal{L}$.

\begin{remark}
The weighted graph Laplacian $\mathcal{L}$ defined in (\ref{eq:Laplacian_general}) is singular since it always has an eigenvalue $\lambda_{0}=0$ with the corresponding eigenvector $\bm{v}_0=(1,\cdots,1)$ satisfying $\mathcal{L}\bm{v}_0=\bm{0}$ by construction. Therefore, to compute the voltage angle shifts (\ref{eq:LRT_DC_power_flow}) we need the generalized inverse matrix $\mathcal{L}^{+}$, which can be computed by, e.g., the singular value decomposition. Alternatively, we can remove one dimension of the system by treating the phase $\theta_k$ of one of the units $k$ as the reference for voltage angle, i.e. by setting  $\theta_i\rightarrow\theta_i-\theta_k$ for all $i$. If the network is connected, then the $(N-1)$-dimensional Laplacian matrix is invertible and both of the matrix equations, (\ref{eq:DC_matrix_equation}) and (\ref{eq:LRT_DC_power_flow}), have a unique solution respectively.
\end{remark}

\subsection{LRT for the oscillator model of AC power grid dynamics}
\label{subsec:LRT_oscillator_model}

In this section we discuss how LRT applies for the oscillator model of AC power grids, a widely used model for analysing the dynamics of AC power grids, and thereby provides a way to accurately determine the high-dimensional dynamic responses of an arbitrary power grid network to fluctuating power injections and withdrawals.

The dynamics of the high-voltage AC power transmission networks is essentially captured by an \textit{oscillator model}  (or second order model) of AC power grids, of which synchronization in terms of networked dynamical systems have been initially studied in references \citeasnoun{filatrella_analysis_2008, rohden_self-organized_2012} and \citeasnoun{motter_spontaneous_2013}. This model allows for analytical understanding of the dynamics of power grids and has yielded fruitful research results over the past decade [\citet{rohden_self-organized_2012,motter_spontaneous_2013,dorfler_synchronization_2013, witthaut_critical_2016,tyloo_key_2019}]. As the name suggests, in the oscillator model, each unit of AC power grids, a synchronous machine, is represented by an oscillator and the power transmission lines are represented by the pairwise couplings between the oscillators. The normal operation state of a power grid corresponds to the synchronization of all oscillators, where all units rotate at the same frequency $\Omega_0^{m}$ corresponding to the nominal grid frequency $\Omega_0=2\pi\times50$ or $2\pi\times60$ Hz. 

For each unit in the oscillator model, a synchronous machine, any change of the angular velocity of rotation results from the imbalance of the torques acting on the rotor operated at the nominal grid frequency. Its dynamics is governed by the so-called swing equation [\citet{kundur_power_1994,machowski_power_2008}]:
\begin{equation}
    I\ddot{\theta}^{\t{m}}+D^{\t{m}}\dot{\theta}^{\t{m}}=T^{\t{m}}-T^{\t{e}},
	\label{eq:swing_equation_original}
\end{equation}
where $\theta^{\t{m}}$ denotes the mechanical rotor angle deviation from the rotating reference frame $\Omega_0 t$, $I$ denotes the moment of inertia of the rotor and the connected turbine, $D^{\t{m}}$ denotes the coefficient of the damping torque resulting from the velocity-dependent friction at the air gap between the rotor and the stator in the synchronous machine. $T^{\t{m}}$ and $T^{\t{e}}$ denote respectively the net mechanical torque and the counteracting electromagnetic torque acting on the rotor. 

For deviations of the angular velocity $\dot{\theta}^{\t{m}}$, the local frequency deviation is small compared to the nominal grid frequency $\Omega_0^{\t{m}}$, i.e. $(\dot{\theta}^{\text{m}}+\Omega_0^{\text{m}})^{-1}\approx(\Omega_0^{\text{m}})^{-1}$, so that a torque $T$ acting on the rotor can be expressed in terms of the power $P$ of the synchronous machine as $T=(\dot{\theta}^{\text{m}}+\Omega_0^{\text{m}})^{-1}P\approx(\Omega_0^{\text{m}})^{-1}P$. \st{and using the relations} Also introducing the effective quantities $\theta(t)=\theta^{\t{m}}(t)/(p/2)$ and $\Omega_0=\Omega^{\t{m}}_0/(p/2)$ between the mechanical quantities and their electrical counterparts for a synchronous machine with $p$ poles per phase, we obtain the more common version of the swing equation describing the dynamical relation between the rate of change of the electrical load angle and the power transmission between units:
\begin{equation}
    M\ddot{\theta}+\tilde{D}\dot{\theta}=P^{\text{m}}-P^{\text{e}}.
    \label{eq:swing_equation}
\end{equation}
Here $M:=I\Omega_0/(p/2)^2$ and $\tilde{D}:=D^{\t{m}}\Omega_0/(p/2)^2$ are respectively the angular momentum of the rotor operated at the nominal grid frequency and the damping coefficient of the machine. On the right hand side of (\ref{eq:swing_equation}), $P^{\text{m}}$ denotes the net injected mechanical power (positive when the machine is operated as a generator and negative when operated as a motor), and $P^{\text{e}}$ denotes the electrical power injected to the grid by the synchronous machine. In the oscillator model we again assume perfect voltage support (\autoref{assump:perfect_voltage}) and lossless transmission lines (\autoref{assump:lossless_lines}), which yield $P^{\text{e}}=\sum_{j=1}^NK_{ij}(\theta_j-\theta_i)$ [cf. $P_i$ in the DC power flow model (\ref{eq:electrical_power_transmission})]. Putting everything together, we obtain the governing equations of the oscillator model of AC power grids
\begin{equation}
    \ddot{\theta}_i=P_i-\alpha_i\dot{\theta}_i+\sum_{j=1}^NK_{ij}\sin(\theta_j-\theta_i),\quad \text{for } i\in\{1,\cdots,N\}
    \label{eq:oscillator_model}
\end{equation}
with the parameters $P_i:=P_i^{\t{m}}/M_i$, $\alpha_i:=\tilde{D}_i/M_i$, and $K_{ij}:=U^2/X_{ij}$. 

\begin{proposition}[Linear stability of the oscillator model]
    AC power grid systems described by the oscillator model (\ref{eq:oscillator_model}) with underlying interaction topology $G(V,E)$ is at least neutrally stable at a fixed point $\bm{\theta}^*$, if all edges are not overloaded, i.e. $|\theta_j^*-\theta_i^*|\leq\frac{\pi}{2}$ for all $(i,j)\in E$.
\end{proposition}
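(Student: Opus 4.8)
The plan is to linearize the oscillator model (\ref{eq:oscillator_model}) about the stationary state $(\bm\theta^*,\dot{\bm\theta}^*=\bm 0)$, recast the resulting second-order linear system in first-order (companion) form, and show that every eigenvalue $\mu$ of the associated stability operator satisfies $\Real(\mu)\le 0$. Writing $\theta_i=\theta_i^*+\Theta_i$ and keeping only terms linear in $\bm\Theta$, the fixed-point relation $P_i+\sum_j K_{ij}\sin(\theta_j^*-\theta_i^*)=0$ cancels the zeroth-order contribution and one is left with
\begin{equation}
  \ddot{\bm\Theta}=-A\dot{\bm\Theta}-\mathcal{L}\bm\Theta,
  \label{eq:linearized_oscillator}
\end{equation}
where $A:=\mathrm{diag}(\alpha_1,\dots,\alpha_N)$ and $\mathcal{L}$ is the weighted graph Laplacian (\ref{eq:Laplacian_general}) with edge weights $K_{ij}\gamma_{ij}$, $\gamma_{ij}:=\cos(\theta_j^*-\theta_i^*)$. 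Two structural facts make (\ref{eq:linearized_oscillator}) tractable. First, $\alpha_i=\tilde D_i/M_i\ge 0$, so $A$ is a symmetric (diagonal) positive-semidefinite matrix. Second, the no-overload hypothesis $|\theta_j^*-\theta_i^*|\le\pi/2$ forces $\gamma_{ij}\ge 0$; since $K_{ij}=K_{ji}$ and $\cos$ is even, the weights satisfy $K_{ij}\gamma_{ij}=K_{ji}\gamma_{ji}\ge 0$, so $\mathcal{L}$ is a symmetric positive-semidefinite matrix with $\bm v^{*}\mathcal{L}\bm v=\tfrac12\sum_{i\ne j}K_{ij}\gamma_{ij}\,|v_i-v_j|^2\ge 0$ for every $\bm v\in\mathbb{C}^N$. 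This parallels \autoref{prop:stability}: the diffusive structure again produces a graph Laplacian, and $\pi/2$ is precisely the threshold keeping the linearized coupling sensitivities non-negative.

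Next I would pass to the first-order system $\dot{\bm y}=\mathcal{J}\bm y$ with $\bm y=(\bm\Theta,\dot{\bm\Theta})$ and
\begin{equation}
  \mathcal{J}=\begin{pmatrix}0&I\\-\mathcal{L}&-A\end{pmatrix},
\end{equation}
and observe that $\mathcal{J}(\bm v,\bm w)=\mu(\bm v,\bm w)$ with $(\bm v,\bm w)\ne(\bm 0,\bm 0)$ is equivalent to $\bm w=\mu\bm v$ together with $(\mu^2 I+\mu A+\mathcal{L})\bm v=\bm 0$, $\bm v\ne\bm 0$; indeed $\det(\mu I-\mathcal{J})=\det(\mu^2 I+\mu A+\mathcal{L})$, so the $2N$ eigenvalues of $\mathcal{J}$ are exactly the roots of this quadratic matrix pencil. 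Taking the Hermitian inner product of $(\mu^2 I+\mu A+\mathcal{L})\bm v=\bm 0$ with $\bm v$ yields the scalar quadratic $n\mu^2+a\mu+k=0$ with $n:=\|\bm v\|^2>0$, $a:=\bm v^{*}A\bm v\ge 0$ and $k:=\bm v^{*}\mathcal{L}\bm v\ge 0$, all real. If $a^2-4nk<0$ then both roots have $\Real(\mu)=-a/(2n)\le 0$; if $a^2-4nk\ge 0$ the roots are real and $\mu=\tfrac{1}{2n}\bigl(-a\pm\sqrt{a^2-4nk}\bigr)\le 0$ because $\sqrt{a^2-4nk}\le a$ (using $a\ge 0$, $k\ge 0$). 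Hence $\Real(\mu)\le 0$ for every eigenvalue of $\mathcal{J}$, i.e.\ the fixed point is at least neutrally stable. Equality is attained — e.g.\ for $\bm v=(1,\dots,1)$ one gets $k=0$ and the eigenvalue $\mu=0$, reflecting the global phase-shift symmetry — which is why the statement asserts only at-least-neutral, not asymptotic, stability.

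The main obstacle is that the heterogeneous damping matrix $A$ need not commute with $\mathcal{L}$, so one cannot simply diagonalize (\ref{eq:linearized_oscillator}) in the Laplacian eigenbasis as in \autoref{remark:symmetry_response_Laplacian}; this is what forces the argument through the quadratic pencil and the Rayleigh-quotient-type identity above rather than through a direct mode decomposition. (Equivalently, the result is the classical fact that a linear oscillator $\ddot{\bm\Theta}+C\dot{\bm\Theta}+K\bm\Theta=\bm 0$ with symmetric positive-semidefinite $C$ and $K$ admits no exponentially growing mode.) The remaining care points are bookkeeping: the pencil eigenvectors $\bm v$ are in general complex, so the Hermitian form $\bm v^{*}(\cdot)\bm v$ must be used throughout; and the degenerate cases $a=0$ and $a^2=4nk$ must be checked explicitly to confirm that the conclusion is $\Real(\mu)\le 0$ rather than $<0$.
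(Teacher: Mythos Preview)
Your proposal is correct and follows essentially the same route as the paper: linearize to the block-companion Jacobian, reduce the eigenvalue problem to the quadratic pencil $\mu^2 I+\mu A+\mathcal{L}$, and contract with the (complex) eigenvector to obtain a scalar quadratic with non-negative real coefficients, whence $\Real(\mu)\le 0$. Your write-up is in fact a bit more careful than the paper's in spelling out the discriminant case split and the role of the $\mu=0$ mode, but the underlying argument is identical.
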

\begin{proof}
    At a fixed point of the system $\bm{\theta}=\bm{\theta}^*$, a small deviation of the oscillators' angles $\bm{\Theta}:=\bm{\theta}-\bm{\theta}^*$ follows the linear dynamics
    \begin{align}
        \dfrac{\mathrm{d}}{\mathrm{d}t}\begin{pmatrix} \bm{\Theta} \\ \dot{\bm{\Theta}} \end{pmatrix}
	    =\mathcal{J}\begin{pmatrix} \bm{\Theta} \\ \dot{\bm{\Theta}} \end{pmatrix},
	    \label{eq:thetaFirstOrderSystem}
    \end{align}
    where the Jacobian matrix $\mathcal{J}\in\mathbb{R}^{2N\times2N}$ of the $2N$-dimensional dynamical system is given by
    \begin{align}
	\mathcal{J}=
	\begin{pmatrix}
		\bm{0}_N     &  \bm{I}_N\\
		-\mathcal{L} &  -\mathcal{A}.
	\end{pmatrix}
    \end{align}
    Here $\mathcal{L}$ is a weighted graph Laplacian as defined in (\ref{eq:Laplacian_general}) with $\gamma_{ij}=\cos(\theta^*_j-\theta^*_i)$, $\mathcal{A}$ is an $N\times N$ diagonal matrix with $\mathcal{A}_{ii}:=\alpha_i$, and $\bm{0}_N$ and $\bm{I}_N$ are respectively the $N\times N$ zero matrix and identity matrix.
    
    Let $\bm{w}=(\bm{w}_1,\bm{w}_2)\in\mathbb{C}^{2N}$ with $\bm{w}_1,\bm{w}_2\in\mathbb{C}^{N}$ be an eigenvector of  $\mathcal{J}$ corresponding to eigenvalue $\mu\in\mathbb{C}$. By definition we have $\mathcal{J}\bm{w}=\mu\bm{w}$, which by writing \eqref{eq:thetaFirstOrderSystem} as a second order differential equation implies
    \begin{equation}
        \mu^2\bm{w}_1+\mu\mathcal{A}\bm{w}_1+\mathcal{L}\bm{w}_1=\bm{0}.
        \label{eq:oscillator_model_Jacobian_eigenvalue}
    \end{equation}
    Multiplying both sides of the equation above with the conjugate transpose, $\bm{w}_1^{\dagger}$ of $\bm{w}_1$ from the left, we obtain an expression of the eigenvalue $\mu=\left(-\chi_2\pm\sqrt{\chi_2^2-4\chi_1\chi_3}\right)/2\chi_1$, with $\chi_1=\bm{w}^{\dagger}_1\bm{w_1}\geq0$, $\chi_2=\bm{w}^{\dagger}_1\mathcal{A}\bm{w}_1\geq0$ and $\chi_3=\bm{w}^{\dagger}_1\mathcal{L}\bm{w}_1\geq0$. $\chi_2$ and $\chi_3$ are non-negative since $\mathcal{A}_{ii}=\alpha_i>0$ and $\mathcal{L}$ is positive semi-definite because $\gamma_{ij}\geq0$ is ensured through $|\theta_j^*-\theta_i^*|\leq\frac{\pi}{2}$ for all $(i,j)\in E$ (see \autoref{prop:stability} and \autoref{remark:homogeneous_dynamics_Laplacian}) [cf. \citet{manik_supply_2014}]. Therefore the eigenvalue always has a non-positive real part, implying that the networked dynamical system is at least neutrally stable at the fixed point.
\end{proof}

\begin{remark}[Neutral stability and global phase shift]
  Connected AC power grids described by the oscillator model (\ref{eq:oscillator_model}) is neutrally stable at a fixed point only when the deviation $\bm{\Theta}$ is a global phase shift. Because (\ref{eq:oscillator_model_Jacobian_eigenvalue}) indicates that the Jacobian eigenvalue $\mu=0$ only when $\chi_3=0$, which implies $\mathcal{L}\bm{w}_1=\bm{0}$ and thus $\bm{w}_1$ lies in the Laplacian eigenspace corresponding to the eigenvalue $\lambda^{[0]}=0$. For connected graphs there is only one eigenvector $\bm{v}^{[0]}=(1,\cdots,1)$ corresponding to $\lambda^{[0]}=0$, therefore the system is only neutrally stable when all nodes undergo a phase shift with the same magnitude. A global phase shift has no effects on the power flow pattern in the network since pairwise phase differences across edges remain the same.
\end{remark}

\begin{proposition}[LRT of the oscillator model and homogeneous nodal damping]
    Consider an AC power grid oscillator model with arbitrary topology (\ref{eq:oscillator_model}) with homogeneous nodal damping $\alpha_i=\alpha\geq0$ for all nodes $i$. Then the network-wide linear responses to arbitrary external perturbations near a normal operation state $\bm{\theta}^*$ can be expressed explicitly in the eigenbasis of a weighted graph Laplacian: i) The network response to time-independent distributed perturbations $\bm{D}^*$ is
    \begin{equation}
        \bm{\Theta}(t)=\bm{D}^*\cdot\bm{v}^{[0]}\left(\dfrac{e^{-\alpha t}}{\alpha^2}-\dfrac{1}{\alpha^2}+\dfrac{t}{\alpha}\right)\bm{v}^{[0]}+\sum_{\ell=1}^{N-1}\dfrac{\bm{D}^*\cdot\bm{v}^{[\ell]}}{\lambda^{[\ell]}}\left(\dfrac{\Delta_-^{[\ell]}e^{\Delta_+^{[\ell]}t}-\Delta_+^{[\ell]}e^{\Delta_-^{[\ell]}t}}{2\eta^{[\ell]}}+1\right)\bm{v}^{[\ell]},
        \label{eq:LRT_oscillator_step}
    \end{equation}
    and ii) the network response to a single sinusoidal perturbation given by $\bm{D}(t)$ with $D_{i}(t)=\delta_{ik}\varepsilon e^{\imath(\omega t+\varphi)}$ is
    \begin{equation}
        \bm{\Theta}^{(k)}(t)=\sum_{\ell=0}^{N-1}\dfrac{\varepsilon v_k^{[\ell]}e^{\imath\varphi}}{-\omega^2+\imath\alpha\omega+\lambda^{[\ell]}}\left[
        \dfrac{\left(\Delta^{[\ell]}_--\imath\omega\right)e^{\Delta^{[\ell]}_+t}-\left(\Delta^{[\ell]}_+-\imath\omega\right)e^{\Delta^{[\ell]}_-t}}{2\eta^{[\ell]}}+e^{\imath\omega t}\right]\bm{v}^{[\ell]}
        \label{eq:LRT_oscillator_sinusoidal}
    \end{equation}
    with $\Delta^{[\ell]}_{\pm}:=-\alpha/2\pm\eta^{[\ell]}$ and $\eta^{[\ell]}:=\sqrt{\alpha^2/4-\lambda^{[\ell]}}$.
\end{proposition}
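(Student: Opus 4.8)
The plan is to linearize the oscillator model at the operating point, diagonalize the linearized dynamics in the eigenbasis of the weighted Laplacian, reduce the problem to a decoupled family of scalar damped-harmonic-oscillator ODEs, solve each one by the standard particular-plus-homogeneous recipe, and reassemble; essentially all the content is the bookkeeping of degenerate cases.

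First I would record that, by the computation in the proof of the preceding stability proposition, the deviation $\bm{\Theta}=\bm{\theta}-\bm{\theta}^*$ driven by $\bm{D}(t)$ satisfies, to first order, $\ddot{\bm{\Theta}}+\alpha\dot{\bm{\Theta}}+\mathcal{L}\bm{\Theta}=\bm{D}(t)$, where $\mathcal{L}$ is the weighted graph Laplacian with $\gamma_{ij}=\cos(\theta_j^*-\theta_i^*)$; homogeneity of the damping, $\alpha_i\equiv\alpha$, is exactly what makes the damping term a scalar multiple of the identity so that it does not mix modes. Since the network is undirected with symmetric $K_{ij}\gamma_{ij}$, the matrix $\mathcal{L}$ is real symmetric; let $\{\bm{v}^{[\ell]}\}_{\ell=0}^{N-1}$ be an orthonormal eigenbasis with $0=\lambda^{[0]}\le\lambda^{[1]}\le\cdots$, the zero mode $\bm{v}^{[0]}\propto(1,\dots,1)$ being simple for a connected graph. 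Expanding $\bm{\Theta}(t)=\sum_\ell c^{[\ell]}(t)\bm{v}^{[\ell]}$ and taking inner products with each $\bm{v}^{[\ell]}$ decouples the dynamics into the scalar ODEs $\ddot{c}^{[\ell]}+\alpha\dot{c}^{[\ell]}+\lambda^{[\ell]}c^{[\ell]}=\bm{D}(t)\cdot\bm{v}^{[\ell]}$.

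Next I would solve these scalar equations. The characteristic polynomial $s^2+\alpha s+\lambda^{[\ell]}$ has roots $\Delta_\pm^{[\ell]}=-\alpha/2\pm\eta^{[\ell]}$ with $\eta^{[\ell]}=\sqrt{\alpha^2/4-\lambda^{[\ell]}}$, giving the homogeneous solution $A^{[\ell]}e^{\Delta_+^{[\ell]}t}+B^{[\ell]}e^{\Delta_-^{[\ell]}t}$; I would use the identities $\Delta_+-\Delta_-=2\eta$, $\Delta_++\Delta_-=-\alpha$, $\Delta_+\Delta_-=\lambda$ repeatedly. For case (ii) the forcing $\varepsilon v_k^{[\ell]}e^{\imath\varphi}e^{\imath\omega t}$ has the particular solution $C^{[\ell]}e^{\imath\omega t}$ with $C^{[\ell]}=\varepsilon v_k^{[\ell]}e^{\imath\varphi}/(-\omega^2+\imath\alpha\omega+\lambda^{[\ell]})$; imposing rest initial conditions $c^{[\ell]}(0)=\dot{c}^{[\ell]}(0)=0$, solving the resulting $2\times2$ system for $A^{[\ell]},B^{[\ell]}$, and using $2\eta+\Delta_-=\Delta_+$ yields exactly the bracket in \eqref{eq:LRT_oscillator_sinusoidal}. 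For case (i) with $\ell\ge1$, the constant forcing $\bm{D}^*\cdot\bm{v}^{[\ell]}$ has the constant particular solution $(\bm{D}^*\cdot\bm{v}^{[\ell]})/\lambda^{[\ell]}$, and the same bookkeeping (or simply the $\omega\to0$ limit of case (ii)) reproduces the $\ell\ge1$ terms of \eqref{eq:LRT_oscillator_step}. The zero mode is genuinely exceptional: since $\lambda^{[0]}=0$ the roots degenerate to $0$ and $-\alpha$, the forcing is resonant with the $s=0$ root, and the particular solution is secular; integrating $\ddot{c}^{[0]}+\alpha\dot{c}^{[0]}=\bm{D}^*\cdot\bm{v}^{[0]}$ from rest gives $c^{[0]}(t)=(\bm{D}^*\cdot\bm{v}^{[0]})(e^{-\alpha t}/\alpha^2-1/\alpha^2+t/\alpha)$, the first term of \eqref{eq:LRT_oscillator_step}. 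Substituting all $c^{[\ell]}(t)$ back into $\bm{\Theta}(t)=\sum_\ell c^{[\ell]}(t)\bm{v}^{[\ell]}$ gives the claimed expressions.

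The main obstacle is not the integration but checking that these compact formulas remain meaningful in the degenerate regimes. When $\alpha^2/4=\lambda^{[\ell]}$ one has $\eta^{[\ell]}=0$ and $\Delta_+^{[\ell]}=\Delta_-^{[\ell]}$, so the quotients $(\cdots)/(2\eta^{[\ell]})$ in \eqref{eq:LRT_oscillator_step}--\eqref{eq:LRT_oscillator_sinusoidal} are nominally $0/0$; one must verify these are removable singularities whose limit reproduces the critically-damped solution (a $t\,e^{\Delta t}$ contribution), equivalently that the displayed expressions are entire in $\eta^{[\ell]}$ and hence valid verbatim in the underdamped regime $\lambda^{[\ell]}>\alpha^2/4$ where $\eta^{[\ell]}$ is imaginary. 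Likewise \eqref{eq:LRT_oscillator_step} as written presumes $\alpha>0$ (the $1/\alpha^2$, $1/\alpha$ and $1/\lambda^{[\ell]}$ factors), so the borderline $\alpha=0$ case --- where the zero mode drifts quadratically and a pure resonance $\omega^2=\lambda^{[\ell]}$ makes $C^{[\ell]}$ blow up --- must be handled as a separate limit, as already flagged in \autoref{remark:distributed_arbitrary_perturbations}. Finally, the response to a nonzero initial state just adds the homogeneous terms depending on the projections of $\bm{\Theta}(0)$ and $\dot{\bm{\Theta}}(0)$ onto each $\bm{v}^{[\ell]}$, exactly as in \autoref{remark:symmetry_response_Laplacian}; apart from these checks the argument is routine.
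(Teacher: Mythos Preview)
Your proposal is correct and follows essentially the same route as the paper: linearize at $\bm{\theta}^*$, use homogeneity of $\alpha$ to decouple in the orthonormal Laplacian eigenbasis, solve each scalar damped oscillator $\ddot{c}^{[\ell]}+\alpha\dot{c}^{[\ell]}+\lambda^{[\ell]}c^{[\ell]}=\bm{D}(t)\cdot\bm{v}^{[\ell]}$ from rest initial data, and reassemble. Your additional discussion of the $\eta^{[\ell]}=0$ and $\alpha=0$ degeneracies goes beyond what the paper's proof spells out, but is consistent with it.
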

\begin{proof}
    Vectorizing the linear response $\bm{\Theta}(t)$ of the system (\ref{eq:oscillator_model}) to a perturbation vector $\bm{D}(t)$, we obtain the matrix equation describing the response dynamics of the oscillator model of AC power grids
    \begin{equation}
        \ddot{\bm{\Theta}}+\bm{\alpha}\circ\dot{\bm{\Theta}}=-\mathcal{L}\bm{\Theta}+\bm{D}(t)
	\label{eq:response_dynamics_oscillator_general},
    \end{equation}
    where $\bm{\alpha}:=(\alpha_1,\cdots,\alpha_N)$ denotes the vector of damping parameters and $``\circ"$ denotes the Schur (element-wise) product of two vectors. Let $\alpha_i=\alpha$ for all $i$, the term $\bm{\alpha}\circ\dot{\bm{\Theta}}$ reduces to a scalar multiplication $\alpha\dot{\bm{\Theta}}$, thereby all terms involving the variable $\bm{\Theta}$ in equation (\ref{eq:response_dynamics_oscillator_general}) can be expressed as linear combinations of Laplacian eigenvectors. Because the $\mathcal{L}$ here is real and symmetric so that we can write $\bm{\Theta}(t)=\sum_{\ell=0}^{N-1}c^{[\ell]}(t)\bm{v}^{[\ell]}$. Using the same trick as in \autoref{remark:symmetry_response_Laplacian}, we obtain equations for the time-dependent coefficients $c^{[\ell]}(t)$ given by 
    \begin{equation}
    	\ddot{c}^{[\ell]}+\alpha\dot{c}^{[\ell]}+\lambda^{[\ell]} c^{[\ell]}=\bm{D}(t)\cdot\bm{v}^{[\ell]}\quad \text{for }\ell\in\{0,\cdots,N-1\}.
    	\label{eq:coefficients_oscillator_model}
    \end{equation}
    Assuming at $t=0$ the AC power grid system operates normally at the fixed point $\bm{\theta}^*$, we have initial conditions for the coefficients $c^{[\ell]}(0)=0$ and $\dot{c}^{[\ell]}(0)=0$ for all $\ell$, thereby obtain explicitly solutions for the coefficients and for the network linear responses. 
    
    For perturbations independent of time,  $\bm{D}(t)=\bm{D}^*$, such as constant shifts in power injection and consumption, the linear response of a power grid system is given by directly solving for the coefficients in (\ref{eq:coefficients_oscillator_model}):
    \begin{equation}
        \bm{\Theta}(t)=\bm{D}^*\cdot\bm{v}^{[0]}\left(\dfrac{e^{-\alpha t}}{\alpha^2}-\dfrac{1}{\alpha^2}+\dfrac{t}{\alpha}\right)\bm{v}^{[0]}+\sum_{\ell=1}^{N-1}\dfrac{\bm{D}^*\cdot\bm{v}^{[\ell]}}{\lambda^{[\ell]}}\left(\dfrac{\Delta_-^{[\ell]}e^{\Delta_+^{[\ell]}t}-\Delta_+^{[\ell]}e^{\Delta_-^{[\ell]}t}}{2\eta^{[\ell]}}+1\right)\bm{v}^{[\ell]},
    \end{equation}
    with $\Delta^{[\ell]}_{\pm}:=-\alpha/2\pm\eta^{[\ell]}$ and $\eta^{[\ell]}:=\sqrt{\alpha^2/4-\lambda^{[\ell]}}$. For time-dependent perturbations $\bm{D}(t)$, such as fluctuating power injections from renewables, we obtain the network linear response based on the responses to each single frequency components at each perturbed nodes, as discussed in \autoref{remark:distributed_arbitrary_perturbations}. Similarly, we let $D_{i}(t)=\delta_{ik}\varepsilon e^{\imath(\omega t+\varphi)}$ and obtain the oscillator model's linear response to a sinusoidal signal at node $k$ as
    \begin{equation}
        \bm{\Theta}^{(k)}(t)=\sum_{\ell=0}^{N-1}\dfrac{\varepsilon v_k^{[\ell]}e^{\imath\varphi}}{-\omega^2+\imath\alpha\omega+\lambda^{[\ell]}}\left[
        \dfrac{\left(\Delta^{[\ell]}_--\imath\omega\right)e^{\Delta^{[\ell]}_+t}-\left(\Delta^{[\ell]}_+-\imath\omega\right)e^{\Delta^{[\ell]}_-t}}{2\eta^{[\ell]}}+e^{\imath\omega t}\right]\bm{v}^{[\ell]}.
    \end{equation}
\end{proof}

\begin{remark}[Low dissipation regime and grid eigenfrequencies]
\label{remark:oscillator_eigenfrequencies}
    In case the dissipation in the system is low enough such that $\alpha<2\sqrt{\lambda^{[\ell]}}$ for the $\ell$-th eigenvalue, in the solution of the linear response (\ref{eq:LRT_oscillator_sinusoidal}) the corresponding $\eta^{[\ell]}$ for the same eigenmode becomes imaginary, suggesting this mode is oscillating under-damped in the power grid system with an exponentially decaying amplitude proportional to $e^{-\frac{\alpha}{2}t}$, i.e., with a time constant $\tau=2/\alpha$. Such intrinsic oscillation modes can also be identified by looking at the eigenvalues of the Jacobian matrix at the fixed point. Let $\bm{w}_1$ in (\ref{eq:oscillator_model_Jacobian_eigenvalue}) be the $\ell$-th eigenvector of the Laplacian $\mathcal{L}$, we can see the corresponding Jacobian eigenvalue $\mu^{[\ell]}=-\alpha/2\pm\sqrt{\alpha^2/4-\lambda^{[\ell]}}=\Delta^{[\ell]}_{\pm}$, which indicates the corresponding eigenfrequency $\omega^{[\ell]}_{\text{eigen}}:=\Imag[\mu^{[\ell]}]=\sqrt{\lambda^{[\ell]}-\alpha^2/4}$ of the power grid system. Since for a connected networked system with $N$ nodes has $N-1$ positive Laplacian eigenvalues, it also has a band of $N-1$ eigenfrequencies if the dissipation is sufficiently low satisfying $\alpha<2\sqrt{\lambda^{[\ell]}}$ for all $N-1$ Laplacian eigenvalues.
\end{remark}

\section{Emerging network response patterns from LRT}
\label{sec:patterns}

In Sec.~\ref{sec:LRT_models} we applied LRT on power grid models and obtained explicit solutions for linear network responses to perturbations at a normal operation state. The solutions are expressed in terms of the eigensystem of a weighted graph Laplacian. These Laplacians and thus their eigensystems contain information about the underlying network topology as well as the base operating state of the system, enabling us to understand and to manipulate how complex networked systems such as power grids collectively respond to external perturbation signals.

In this section we focus on the dynamic responses of AC power grids to time-varying perturbations based on the oscillator model (see Sec.~\ref{subsec:LRT_oscillator_model}) and explicate how steady-state response patterns constituted by the set of nodal response magnitudes as well as transient response patterns describing the spatio-temporal spreading of a perturbation in power grids are mathematically extracted from the solution given by the LRT. 

\subsection{Frequency regimes of steady-state response patterns}
\label{subsec:steady_patterns}

After a transient phase characterized by a dissipation-related time constant $\tau=2/\alpha$ (cf. \autoref{remark:oscillator_eigenfrequencies}), the perturbed power grid systems reside in a second regime of network responses, where the entire network respond periodically to perturbation signals for $t\gg\tau$ (see \ref{eq:LRT_oscillator_step} and \ref{eq:LRT_oscillator_sinusoidal}). We thus call the network responses for such large times \textit{steady-state responses}. We remark that the steady-state responses here are not necessarily stationary, meaning that the nodal responses themselves can vary with time, but their characteristics, such as the amplitude and the phase of sinusoidal responses, constitute network-wide response patterns that are time-independent.

\begin{proposition}[Steady-state response pattern for a constant perturbation]
\label{prop:osillator_pattern_constant}
    For AC power grids with arbitrary topologies (\ref{eq:oscillator_model}), the steady-state responses to time-independent perturbations $\bm{D}(t)=\bm{D}^*$, i.e. with a perturbation frequency $\omega=0$, near a normal operation state $\bm{\theta}^*$ are constituted by a homogeneous shift of grid frequency 
    \begin{equation}
        \delta\dot{\theta}_i=\dfrac{1}{N\alpha}\displaystyle\sum_{j=1}^N D^*_j\quad\t{for }i\in\{1,\cdots,N\},
        \label{eq:pattern1_oscillator_step}
    \end{equation} 
    and a topology-dependent phase shift
    \begin{equation}
        \delta\bm{\theta}=-\dfrac{1}{N\alpha^2}\displaystyle\sum_{i=1}^N D^*_i\bm{1}+\sum_{\ell=1}^{N-1}\dfrac{\bm{D}^*\cdot\bm{v}^{[\ell]}}{\lambda^{[\ell]}}\bm{v}^{[\ell]}.
        \label{eq:pattern2_oscillator_step}
    \end{equation}
\end{proposition}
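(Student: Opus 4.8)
The plan is to read off the two claimed patterns directly from the exact linear response to a constant perturbation already established, namely~(\ref{eq:LRT_oscillator_step}), by passing to the regime $t\gg\tau=2/\alpha$ (equivalently $t\to\infty$). Observe first that~(\ref{eq:LRT_oscillator_step}) presupposes $\alpha>0$; I work under this assumption throughout, which is precisely what makes a \emph{steady-state} pattern meaningful, since for $\alpha=0$ the transients do not decay at all.

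Next I would dispose of the summands with $\ell\geq 1$ in~(\ref{eq:LRT_oscillator_step}). For a connected grid $\lambda^{[\ell]}>0$, and $\Delta^{[\ell]}_{\pm}=-\alpha/2\pm\eta^{[\ell]}$ are exactly the two roots of $s^2+\alpha s+\lambda^{[\ell]}=0$ --- the Jacobian eigenvalues obtained from~(\ref{eq:oscillator_model_Jacobian_eigenvalue}) by setting $\bm{w}_1=\bm{v}^{[\ell]}$, as already noted in \autoref{remark:oscillator_eigenfrequencies}. Since this quadratic has strictly positive coefficients, both roots have strictly negative real part: if $\eta^{[\ell]}$ is real then $0<\eta^{[\ell]}<\alpha/2$, so $\Delta^{[\ell]}_{\pm}<0$; if $\eta^{[\ell]}$ is imaginary then $\Real[\Delta^{[\ell]}_{\pm}]=-\alpha/2<0$; and in the borderline case $\eta^{[\ell]}=0$ the ratio $(\Delta^{[\ell]}_-e^{\Delta^{[\ell]}_+t}-\Delta^{[\ell]}_+e^{\Delta^{[\ell]}_-t})/2\eta^{[\ell]}$ is understood as its limit, a multiple of $(1+\tfrac{\alpha}{2}t)e^{-\alpha t/2}$, which still vanishes. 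In every case $e^{\Delta^{[\ell]}_{\pm}t}\to 0$, so the bracket in the $\ell$-th term of~(\ref{eq:LRT_oscillator_step}) tends to $1$ and the term converges to $(\bm{D}^*\cdot\bm{v}^{[\ell]}/\lambda^{[\ell]})\,\bm{v}^{[\ell]}$. Summing over $\ell=1,\dots,N-1$ reproduces the topology-dependent block of~(\ref{eq:pattern2_oscillator_step}).

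Then I would treat the zero mode separately. Taking the Laplacian eigenvectors orthonormal as in \autoref{remark:symmetry_response_Laplacian}, we have $\lambda^{[0]}=0$ and $\bm{v}^{[0]}=\tfrac{1}{\sqrt N}(1,\dots,1)$, so the $\ell=0$ term of~(\ref{eq:LRT_oscillator_step}) equals $(\bm{D}^*\cdot\bm{v}^{[0]})(e^{-\alpha t}/\alpha^2-1/\alpha^2+t/\alpha)\bm{v}^{[0]}$; the exponential piece decays and the affine-in-$t$ remainder is $(\bm{D}^*\cdot\bm{v}^{[0]})(t/\alpha-1/\alpha^2)\bm{v}^{[0]}$. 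Substituting $\bm{D}^*\cdot\bm{v}^{[0]}=\tfrac{1}{\sqrt N}\sum_j D^*_j$ and the components of $\bm{v}^{[0]}$ turns this into $\bigl(\tfrac{t}{\alpha}-\tfrac{1}{\alpha^2}\bigr)\tfrac{1}{N}\bigl(\sum_j D^*_j\bigr)\bm{1}$. Hence, for $t\gg\tau$, the response has the asymptotic affine form $\bm{\Theta}(t)=\bigl(\tfrac{1}{N\alpha}\sum_j D^*_j\bigr)\bm{1}\,t+\delta\bm{\theta}+o(1)$, where the constant vector $\delta\bm{\theta}$ collects the $-1/\alpha^2$ contribution of the zero mode together with the $\ell\geq 1$ limits found above. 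Differentiating the linear drift gives the homogeneous grid-frequency shift $\delta\dot\theta_i=\tfrac{1}{N\alpha}\sum_j D^*_j$, which is~(\ref{eq:pattern1_oscillator_step}), and reading off the constant part gives~(\ref{eq:pattern2_oscillator_step}). I would close by recording that the discarded terms decay at least like $e^{-\alpha t/2}$ up to a linear factor, so this description becomes accurate precisely on the transient time scale $\tau=2/\alpha$ of \autoref{remark:oscillator_eigenfrequencies}.

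The step I expect to be the genuine (if modest) obstacle is conceptual rather than computational: because the true response $\bm{\Theta}(t)$ is unbounded, one must be careful about what a ``steady-state pattern'' should mean. The zero mode leaves a persistent linear drift in the phases, and the content of the proposition is exactly the clean separation of this growing part --- whose constant rate is the grid-frequency shift --- from the genuinely convergent remainder --- the phase-shift pattern. Verifying that \emph{every} nonzero mode truly decays, i.e.\ that $\Real[\Delta^{[\ell]}_{\pm}]<0$ for all $\ell\geq 1$ including the critically damped borderline, is where connectedness of the network ($\lambda^{[\ell]}>0$) enters and deserves to be stated explicitly; everything else is routine bookkeeping.
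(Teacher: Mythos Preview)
Your proposal is correct and follows essentially the same approach as the paper: take the large-$t$ limit of the explicit formula~(\ref{eq:LRT_oscillator_step}), drop the exponentially decaying transients, and split the surviving $\ell=0$ affine-in-$t$ term from the $\ell\geq 1$ constants using $\bm{v}^{[0]}=\tfrac{1}{\sqrt N}\bm{1}$. The paper's version is terser---it simply asserts the asymptotic form without your explicit verification that $\Real[\Delta^{[\ell]}_{\pm}]<0$ (including the critically damped borderline) or your discussion of what ``steady-state'' means for an unbounded response---but the route is the same.
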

\begin{proof}
By definition, the steady-state response patterns become clear by investigating the asymptotic behaviour of the responses as $t\rightarrow\infty$. For responses to a constant perturbation vector $\bm{D}(t)=\bm{D}^*$ (\ref{eq:LRT_oscillator_step}) the steady-state response reads
\begin{equation}
    \bm{\Theta}(t)\overset{t\rightarrow\infty}{\sim}
    \bm{D}^*\cdot\bm{v}^{[0]}\left(-\dfrac{1}{\alpha^2}+\dfrac{t}{\alpha}\right)\bm{v}^{[0]}+\sum_{\ell=1}^{N-1}\dfrac{\bm{D}^*\cdot\bm{v}^{[\ell]}}{\lambda^{[\ell]}}\bm{v}^{[\ell]},
\end{equation}
which consists of two characteristic patterns: i) the phases of all units drift away from the normal operation state with a constant angular velocity $(\bm{D}^*\cdot\bm{v}^{[0]})v^{[0]}_i/\alpha$, and ii) a time-independent and unit-specific phase shift. Since $\bm{v}^{[0]}=\frac{1}{\sqrt{N}}\bm{1}$, the former pattern represents the constant homogeneous shift of grid frequency (\ref{eq:pattern1_oscillator_step}) and the latter the topology-dependent phase shift (\ref{eq:pattern2_oscillator_step}).
\end{proof}

\begin{remark}
    The global grid frequency shift (\ref{eq:pattern1_oscillator_step}) is a consequence of the imbalance between power injected into and drawn from the power grid system, which is imposed by the constant perturbation $\bm{D}^*$. The topology-dependent phase shifts (\ref{eq:pattern2_oscillator_step}) at all units suggest a network-wide redistribution of power flows on the transmission lines. The first-order approximation of the change of the load $L_{ij}:=\sin(\theta_j-\theta_i)$ on line $(i,j)$, $\delta L_{ij}=\cos(\theta_j^*-\theta_i^*)(\delta\theta_j-\delta\theta_i)$, provides an indicator for the emerging risks such as overheating for heavily-loaded lines with $L_{ij}$ approaching the upper limit of its safety range.
\end{remark}

\begin{proposition}[Steady-state response pattern for a sinusoidal perturbation]
\label{prop:osillator_pattern_sinusoidal}
    For AC power grids with arbitrary topologies (\ref{eq:oscillator_model}), the steady-state responses to a sinusoidal perturbation at node $k$, i.e. $\bm{D}(t)$ with $D_i(t)=\delta_{ik}\varepsilon e^{\imath(\omega t+\varphi)}$, near a normal operation state $\bm{\theta}^*$ are constituted by a homogeneous phase shift
    \begin{equation}
        \delta\theta_i=\dfrac{\imath\varepsilon e^{\imath\varphi}}{\alpha\omega N}\quad\t{for }i\in\{1,\cdots,N\},
    \end{equation}
    and sinusoidal responses at each node with the same frequency $\omega$ and a characteristic complex amplitude 
    \begin{equation}
        R_i^{(k)}(\omega):=\sum_{\ell=0}^{N-1}\dfrac{ v_k^{[\ell]}v_i^{[\ell]}}{-\omega^2+\imath\alpha\omega+\lambda^{[\ell]}}.
    \label{eq:oscillator_response_factor}
    \end{equation}
    for node $i$.
\end{proposition}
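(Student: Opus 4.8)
The plan is to read off the large-time asymptotics of the explicit linear response $\bm{\Theta}^{(k)}(t)$ in \eqref{eq:LRT_oscillator_sinusoidal}, mode by mode; throughout I take $\alpha>0$ and the graph connected, as is implicit in the statement (the claimed $\delta\theta_i$ carries a factor $1/\alpha$, and $\lambda^{[\ell]}>0$ for $\ell\ge1$ requires connectedness). The first step is to see which transients survive. Since $\Delta^{[\ell]}_{\pm}=-\alpha/2\pm\eta^{[\ell]}$ with $\eta^{[\ell]}=\sqrt{\alpha^2/4-\lambda^{[\ell]}}$, one has $\Real[\Delta^{[\ell]}_{\pm}]<0$ for every $\ell\ge1$: if $\alpha^2/4\ge\lambda^{[\ell]}$ then $\eta^{[\ell]}$ is real with $0\le\eta^{[\ell]}<\alpha/2$, so both $\Delta^{[\ell]}_{\pm}<0$; if $\alpha^2/4<\lambda^{[\ell]}$ then $\eta^{[\ell]}$ is purely imaginary and $\Real[\Delta^{[\ell]}_{\pm}]=-\alpha/2<0$; the borderline $\eta^{[\ell]}=0$ gives the degenerate solution $\propto t e^{-\alpha t/2}\to0$, resolving the removable $2\eta^{[\ell]}$ in the denominator. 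Hence $e^{\Delta^{[\ell]}_{\pm}t}\to0$ for $\ell\ge1$, the bracket in \eqref{eq:LRT_oscillator_sinusoidal} tends to $e^{\imath\omega t}$, and the $\ell\ge1$ part of $\Theta^{(k)}_i(t)$ converges to $\varepsilon e^{\imath\varphi}e^{\imath\omega t}\sum_{\ell=1}^{N-1}\frac{v_k^{[\ell]}v_i^{[\ell]}}{-\omega^2+\imath\alpha\omega+\lambda^{[\ell]}}$.

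The zero mode is the piece to treat carefully, and it is where the non-oscillatory part of the pattern comes from. Since $\lambda^{[0]}=0$ we get $\eta^{[0]}=\alpha/2$, $\Delta^{[0]}_+=0$ and $\Delta^{[0]}_-=-\alpha$, so $e^{\Delta^{[0]}_+t}\equiv1$ does not decay while $e^{\Delta^{[0]}_-t}=e^{-\alpha t}\to0$; the bracket therefore tends to $\frac{-\alpha-\imath\omega}{\alpha}+e^{\imath\omega t}$. Multiplying by the $\ell=0$ prefactor $\frac{\varepsilon v_k^{[0]}e^{\imath\varphi}}{-\omega^2+\imath\alpha\omega}$ and by $v_i^{[0]}$, with $v_k^{[0]}=v_i^{[0]}=1/\sqrt N$, the zero mode contributes $\frac{\varepsilon e^{\imath\varphi}}{N(-\omega^2+\imath\alpha\omega)}\bigl(-1-\tfrac{\imath\omega}{\alpha}+e^{\imath\omega t}\bigr)$ to $\Theta^{(k)}_i$. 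I then split this into its constant and oscillatory parts. For the constant part the key identities are $-\omega^2+\imath\alpha\omega=\imath\omega(\alpha+\imath\omega)$ and $-1-\imath\omega/\alpha=-(\alpha+\imath\omega)/\alpha$, which reduce $\frac{\varepsilon e^{\imath\varphi}}{N\,\imath\omega(\alpha+\imath\omega)}\cdot\bigl(-\tfrac{\alpha+\imath\omega}{\alpha}\bigr)$ to $-\frac{\varepsilon e^{\imath\varphi}}{N\imath\omega\alpha}=\frac{\imath\varepsilon e^{\imath\varphi}}{\alpha\omega N}$, i.e. the node-independent phase shift $\delta\theta_i$ of the statement.

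It remains to assemble the oscillatory contributions. The oscillatory part of the zero mode, $\frac{\varepsilon e^{\imath\varphi}e^{\imath\omega t}}{N(-\omega^2+\imath\alpha\omega)}$, is exactly the $\ell=0$ term (using $\lambda^{[0]}=0$) of $\varepsilon e^{\imath\varphi}e^{\imath\omega t}\sum_{\ell=0}^{N-1}\frac{v_k^{[\ell]}v_i^{[\ell]}}{-\omega^2+\imath\alpha\omega+\lambda^{[\ell]}}$, so adding it to the $\ell\ge1$ sum from the first step gives $\varepsilon e^{\imath\varphi}e^{\imath\omega t}R^{(k)}_i(\omega)$ with $R^{(k)}_i$ as in \eqref{eq:oscillator_response_factor}. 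Thus $\Theta^{(k)}_i(t)\to\delta\theta_i+\varepsilon e^{\imath\varphi}e^{\imath\omega t}R^{(k)}_i(\omega)$ as $t\to\infty$, which is the asserted steady-state pattern: a homogeneous phase shift plus a sinusoidal response of frequency $\omega$ whose complex amplitude is proportional to the response factor $R^{(k)}_i(\omega)$. The one point requiring care is the bookkeeping of the zero mode — recognizing it as the unique non-decaying transient ($\Delta^{[0]}_+=0$) and separating its persistent DC offset (which produces $\delta\theta_i$) from its $e^{\imath\omega t}$ component (the $\ell=0$ summand of $R^{(k)}_i$); the decay of the remaining $N-1$ transients and the algebraic reduction to the stated closed forms are then routine.
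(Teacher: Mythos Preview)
Your proof is correct and follows essentially the same approach as the paper: both obtain the steady-state pattern by examining the large-$t$ asymptotics of the explicit solution \eqref{eq:LRT_oscillator_sinusoidal} mode by mode. The paper's proof, however, simply asserts the limiting form \eqref{eq:oscillator_steady_sinusoidal} without intermediate computation, whereas you supply the details the paper omits --- in particular the explicit treatment of the zero mode (where $\Delta^{[0]}_+=0$ is the sole non-decaying transient) and the algebraic reduction yielding the homogeneous offset $\delta\theta_i=\imath\varepsilon e^{\imath\varphi}/(\alpha\omega N)$.
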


\begin{proof}
The steady-state response to a sinusoidal perturbation at a single node $k$ is obtained by studying the asymptotic behaviour of the responses (\ref{eq:LRT_oscillator_sinusoidal}) has the form of
\begin{equation}
    \bm{\Theta}^{(k)}(t)\overset{t\rightarrow\infty}{\sim}\dfrac{\imath\varepsilon e^{\imath\varphi}}{\alpha\omega N}\bm{1}+e^{\imath(\omega t+\varphi)}\sum_{\ell=0}^{N-1}\dfrac{\varepsilon v_k^{[\ell]}}{-\omega^2+\imath\alpha\omega+\lambda^{[\ell]}}\bm{v}^{[\ell]},
    \label{eq:oscillator_steady_sinusoidal}
\end{equation}
which is composed of a homogeneous phase shift $\frac{\imath\varepsilon e^{\imath\varphi}}{\alpha\omega N}$ and a driven oscillation at each node. Each node's angular variable $\theta_i$ changes at the same frequency as the perturbation frequency $\omega$, but with a complex amplitude
\begin{equation}
    R_i^{(k)}(\omega)=\sum_{\ell=0}^{N-1}\dfrac{ v_k^{[\ell]}v_i^{[\ell]}}{-\omega^2+\imath\alpha\omega+\lambda^{[\ell]}}.
\end{equation}
\end{proof}

\begin{remark}[Characterization of the steady-state response patterns to a sinusoidal perturbation]
    The complex nature of the amplitude suggests shifts in the amplitude and in the phase between the perturbation signal and the response signal, which are both topology-dependent and node-specific. Hence we also refer to $R_i^{(k)}$ as the nodal response factor of node $i$ to a sinusoidal perturbation at node $k$. 

    The homogeneous phase shift contributes to neither the change of grid frequency nor the overall power flow pattern in the network, while the absolute value of $R_i^{(k)}$ determines the maximal deviation of the local grid frequency at node $i$ caused by a perturbation at node $k$ through $|\delta\dot{\theta}_i|=|\dot{\Theta}^{(k)}_i|=\varepsilon\omega|R_i^{(k)}|$. If it exceeds the safety range of normal operation, the local frequency deviation may cause damage the synchronous machine and other related grid components such as the turbine. In the rest of the subsection we focus on the steady-state response pattern constituted by the set of nodal response strengths 
    \begin{equation}
        A_i^{(k)}:=\omega\left|R_i^{(k)}\right|
        \label{eq:nodal_response_strength}
    \end{equation}
    for each node $i$ and discuss in detail its distinctive spatial distributions in different frequency regimes.
\end{remark}

\begin{figure}[ht!]
    \centering
    \includegraphics[width=0.8\columnwidth]{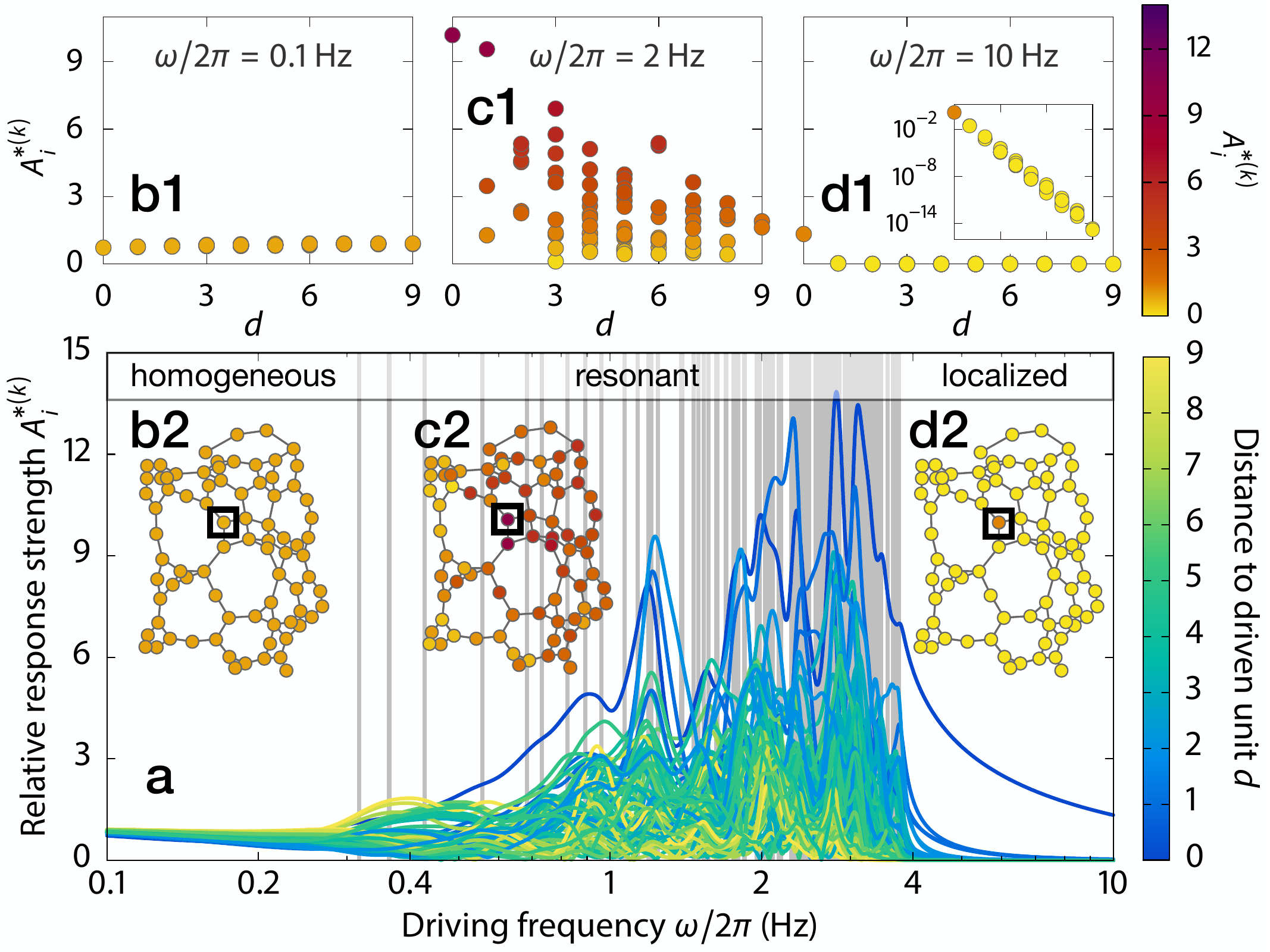}
    \caption{\textbf{Steady-state response patterns exhibit three frequency regimes.} (a) Relative response strength $A_i^{*(k)}$ (see \autoref{remark:relative_response_strength} for definition) of all $80$ nodes in an example power grid network across all three frequency regimes (homogeneous bulk, resonant, and localized responses). Vertical grey lines represent the $N-1$ eigenfrequencies. (b1,c1,d1) Qualitatively different dependencies of $A_i^{*(k)}$ on the graph-theoretic distance $d:=d(k,i)$ with three representative driving frequencies $\omega/2\pi\in\{0.1,2,10\}$ Hz of three frequency regimes. The exponential dependence of $A_i^{*(k)}$ on $d$ is illuatrated in the inset of d1. (b2,c2,d2) Distinctive response patterns for the three driving frequencies, corresponding to (b1,c1,d1). The curves in (a) are color-coded with the distance $d$, and the discs in (b1-b2,c1-c2,d1-d2) are color-coded with the relative response strength $A_i^{*(k)}$. The black square marks the perturbed node. Network settings are the same as Fig.~2 in \citeasnoun{zhang_fluctuation-induced_2019}.}
    \label{fig:regimes}
\end{figure}

\subsubsection{Regime of grid resonances}

As suggested in \autoref{remark:oscillator_eigenfrequencies}, the dynamics of the perturbed oscillator model of AC power grids can be understood in comparison with the dynamics of a driven damped harmonic oscillator. For each intrinsic under-damped oscillation mode corresponding to a non-zero Laplacian eigenvalue $\lambda^{[\ell]}>\alpha^2/4$, the oscillatory power grid system resonates if the perturbation frequency matches the corresponding eigenfrequency $\omega_{\text{eigen}}^{[\ell]}=\sqrt{\lambda^{[\ell]}-\alpha^2/4}$. Driven at frequencies close to an eigenfrequency $\omega_{\text{eigen}}^{[\ell]}$, the network responses exhibit large amplitudes since the rationalized denominator $(\omega^2-\lambda^{[\ell]})^2-\alpha^2\omega^2$ for the corresponding $\ell$-th oscillation mode is minimized. However, the response amplitudes vary greatly for different nodes in the network due to the factor $v_k^{[\ell]}v_i^{[\ell]}$. If the system dissipation is sufficiently low such that there exist $N-1$ under-damped oscillation modes, the corresponding $N-1$ eigenfrequencies form a resonance regime where the power grid system can potentially exhibit strong distributed responses across the network (cf.~\autoref{remark:oscillator_eigenfrequencies}).

We emphasize that the spatiotemporal resonance pattern is characteristic of each perturbation frequency within the resonance regime, of each specific network topology including the prior perturbation base state, and of each location of perturbation. Therefore a power grid system's responses to real-world fluctuating perturbations containing a collection of frequency components within the resonance regime are quite irregular both temporally and spatially, which makes it a non-trivial task to evaluate the risks in perturbed power grids induced by network resonances (see Sec.~\ref{subsec:dvi} for further discussions). 

\subsubsection{Homogeneous responses: the low-frequency regime}

The network response pattern for lower perturbation frequencies, i.e. the ones lower than the smallest eigenfrequency $\omega_{\text{eigen}}^{[1]}=\sqrt{\lambda^{[1]}-\alpha^2/4}$, can be understood by investigating the asymptotic behaviour of the nodal response strength $A_i^{(k)}=\omega|R_i^{(k)}|$ as the $\omega\rightarrow 0$.

\begin{proposition}[Homogeneous responses at the low frequency limit]
    As the perturbation frequency $\omega\rightarrow0$, the steady-state response strength at each node of an AC power grid system with an arbitrary topology (\ref{eq:oscillator_model}) approaches a constant value, i.e.
    \begin{equation}
        A_i^{(k)}\overset{\omega\rightarrow 0}{\sim}\frac{1}{N\alpha}.
        \label{eq:low_frequency_limit}
    \end{equation}
\end{proposition}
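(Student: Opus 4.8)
The plan is to read off the behaviour of $A_i^{(k)}(\omega)=\omega\,|R_i^{(k)}(\omega)|$ directly from the closed form \eqref{eq:oscillator_response_factor} by isolating the single spectral term that becomes singular as $\omega\to0^+$, namely the zero mode $\ell=0$. Splitting the sum,
\begin{equation}
    R_i^{(k)}(\omega)=\underbrace{\dfrac{v_k^{[0]}v_i^{[0]}}{-\omega^2+\imath\alpha\omega}}_{\text{zero mode}}+\sum_{\ell=1}^{N-1}\dfrac{v_k^{[\ell]}v_i^{[\ell]}}{-\omega^2+\imath\alpha\omega+\lambda^{[\ell]}}.
\end{equation}
For a connected graph the zero eigenvalue is simple with normalized eigenvector $\bm{v}^{[0]}=\bm{1}/\sqrt{N}$, so $v_k^{[0]}v_i^{[0]}=1/N$ for \emph{every} pair $(i,k)$, and the zero-mode term equals $\dfrac{1}{N\,\omega\,(\imath\alpha-\omega)}$, which diverges like $\omega^{-1}$.

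For the remaining modes $\ell\in\{1,\dots,N-1\}$ one has $\lambda^{[\ell]}>0$, hence the denominators $-\omega^2+\imath\alpha\omega+\lambda^{[\ell]}$ stay bounded away from $0$ as $\omega\to0$; since there are finitely many terms and the $\bm{v}^{[\ell]}$ are unit vectors, the whole remainder sum is uniformly bounded in $\omega$ (for small $\omega$) and in $i,k$. Multiplying through by $\omega$ therefore gives
\begin{equation}
    \omega R_i^{(k)}(\omega)=\dfrac{1}{N(\imath\alpha-\omega)}+O(\omega)\qquad(\omega\to0^+),
\end{equation}
with the $O(\omega)$ remainder uniform in $i$ and $k$. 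Taking the modulus (which is continuous) and letting $\omega\to0^+$,
\begin{equation}
    A_i^{(k)}=\omega\bigl|R_i^{(k)}(\omega)\bigr|\overset{\omega\rightarrow0}{\longrightarrow}\left|\dfrac{1}{\imath N\alpha}\right|=\dfrac{1}{N\alpha},
\end{equation}
which is \eqref{eq:low_frequency_limit}. Here one uses that the damping is homogeneous and strictly positive, $\alpha>0$; for $\alpha=0$ neither $A_i^{(k)}$ nor the claimed limit is finite, consistent with the divergent $0$-mode behaviour noted in \autoref{remark:distributed_arbitrary_perturbations}.

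I expect no genuine obstacle here: the argument is an elementary asymptotic expansion of a finite sum of rational functions. The only points requiring care are the uniform boundedness bookkeeping for the $\ell\ge1$ terms, and — more conceptually, this being the real content of the statement — the observation that the surviving contribution $1/(\imath N\alpha)$ carries no dependence whatsoever on the node indices $i$ and $k$. Thus the steady-state response strength becomes perfectly homogeneous across the whole network in the $\omega\to0$ limit, irrespective of the topology, the base operating state $\bm{\theta}^*$, or the location $k$ of the perturbation; this homogeneous bulk behaviour is precisely the low-frequency regime depicted in \autoref{fig:regimes}.
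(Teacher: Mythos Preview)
Your proof is correct and follows essentially the same approach as the paper: both isolate the $\ell=0$ mode as the sole divergent contribution (using $\lambda^{[0]}=0$ and $v_i^{[0]}v_k^{[0]}=1/N$) and observe that the remaining modes stay bounded. The only cosmetic difference is that the paper first separates $R_i^{(k)}$ into real and imaginary parts before taking the limit, whereas you work directly with the complex expression and split by eigenmode; your route is slightly more streamlined but the content is identical.
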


\begin{proof}
    Considering \eqref{eq:oscillator_response_factor} and $\lambda^{[0]}=0$, we derive the asymptotic behaviour of the real part and the imaginary part of the nodal response factor $R_i^{(k)}$ as $\omega\rightarrow 0$:
    \begin{align}
        \Real\left[R_i^{(k)}\right]&=\sum_{\ell=0}^{N-1}\frac{v^{[\ell]}_kv^{[\ell]}_i\left(-\omega^2+\lambda^{[\ell]}\right)}{\left(-\omega^2+\lambda^{[\ell]}\right)^2+\alpha^2\omega^2}\overset{\omega\rightarrow 0}{\sim}-\dfrac{1}{N\alpha^2}+\displaystyle\sum_{\ell=1}^{N-1}\dfrac{v^{[\ell]}_kv^{[\ell]}_i}{\lambda^{[\ell]}}\label{eq:real_response_factor_low_freq}\\
        \Imag\left[R_i^{(k)}\right]&=\sum_{\ell=0}^{N-1}\frac{v^{[\ell]}_kv^{[\ell]}_i\left(-\alpha\omega\right)}{\left(-\omega^2+\lambda^{[\ell]}\right)^2+\alpha^2\omega^2}\overset{\omega\rightarrow 0}{\sim}-\dfrac{1}{N\alpha\omega}.
        \label{eq:imag_response_factor_low_freq}
    \end{align}
As the asymptotic behaviour of the response strength $A_i^{(k)}=\omega|R_i^{(k)}|$ is dominated by the imaginary part, we have
    \begin{equation}
        A_i^{(k)}=\omega\left|R_i^{(k)}\right|\overset{\omega\rightarrow 0}{\sim}\omega\cdot\dfrac{1}{N\alpha\omega}=\dfrac{1}{N\alpha}.
    \end{equation}
\end{proof}

\begin{remark}[Consistency with the homogeneous grid frequency shift at $\omega=0$]
    The homogeneous response strength at each node as $\omega\rightarrow 0$ suggests a global shift of grid frequency inversely proportional to the network size $N$ and the system dissipation parameter $\alpha$. This result is quantitatively consistent with the homogeneous grid frequency shift induced by constant perturbations, as discussed in \autoref{prop:osillator_pattern_constant} with $D_i^*=\delta_{ik}$ if node $k$ is the perturbed one.
\end{remark}

\begin{remark}[Relative nodal response strength]
\label{remark:relative_response_strength}
    The homogeneous nodal response at the low frequency limit (\ref{eq:low_frequency_limit}) serves as a reference value for the nodal response strengths of a sinusoidally driven network. Therefore, by normalizing $A_i^{(k)}$ with its low frequency limit $\frac{1}{N\alpha}$, we define the relative nodal response strength
    \begin{align}
        A_i^{*(k)}:=\dfrac{A_i^{(k)}}{\displaystyle\lim_{\omega\rightarrow0}A_i^{(k)}}=N\alpha A_i^{(k)}
        \label{eq:relative_response_strength}
    \end{align}
    so that the nodal response strengths can be compared across networks with different sizes and dissipation values.
\end{remark}

\subsubsection{Localized responses: the high-frequency regime}
\label{subsubsec:localized_regime}

Similarly, we investigate the network response pattern in the high-frequency regime where $\omega>\omega_{\text{eigen}^{[N-1]}}$ by observing the asymptotic behaviour of the nodal response strengths $A_i^{(k)}$ as the perturbation frequency becomes sufficiently large and approaches infinity.

\begin{proposition}[Localized response patterns in the high-frequency regime]
    \label{prop:localized_pattern}
    As the perturbation frequency $\omega\rightarrow\infty$, the steady-state nodal response strength $A_i^{(k)}$ in an AC power grid system with an arbitrary topology (\ref{eq:oscillator_model}) decays as a power-law of $\omega$ with an exponent depending on the graph-theoretic distance $d$ between node $k$ and $i$, i.e.
    \begin{equation}
        A_i^{(k)}\overset{\omega\rightarrow\infty}{\sim}\left|\Phi^{[d]}_{ki}\right|\omega^{-2d-1},
		\label{eq:localized_scaling}
    \end{equation}
    where $\left|\Phi^{[d]}_{ki}\right|$ is a distance- and node-specific prefactor but independent on the perturbation frequency.
\end{proposition}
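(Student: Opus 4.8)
The plan is to extract the leading term of the large-$\omega$ expansion of the nodal response factor $R_i^{(k)}(\omega)=\sum_{\ell=0}^{N-1}\frac{v_k^{[\ell]}v_i^{[\ell]}}{-\omega^2+\imath\alpha\omega+\lambda^{[\ell]}}$ from \eqref{eq:oscillator_response_factor} and then read off $A_i^{(k)}=\omega|R_i^{(k)}|$. First I would rewrite each denominator as $-\omega^2\bigl(1-\imath\alpha/\omega-\lambda^{[\ell]}/\omega^2\bigr)$ and, for $\omega$ large enough that $|\imath\alpha/\omega+\lambda^{[\ell]}/\omega^2|<1$ for every $\ell$, expand in a geometric series:
\begin{equation}
R_i^{(k)}(\omega)=-\sum_{\ell=0}^{N-1}\frac{v_k^{[\ell]}v_i^{[\ell]}}{\omega^{2}}\sum_{m\ge0}\Bigl(\frac{\imath\alpha}{\omega}+\frac{\lambda^{[\ell]}}{\omega^{2}}\Bigr)^{m}=-\sum_{\ell=0}^{N-1}\sum_{m\ge0}\sum_{j=0}^{m}\binom{m}{j}(\imath\alpha)^{m-j}\bigl(\lambda^{[\ell]}\bigr)^{j}\,\frac{v_k^{[\ell]}v_i^{[\ell]}}{\omega^{\,m+j+2}}.
\end{equation}
The next step is to reorganise this double series by powers of $1/\omega$ and to carry out the $\ell$-sum using the spectral identity $\sum_{\ell}(\lambda^{[\ell]})^{j}v_k^{[\ell]}v_i^{[\ell]}=(\mathcal{L}^{j})_{ki}$, which follows from diagonalising the real symmetric matrix $\mathcal{L}$ in its orthonormal eigenbasis. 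Since $j\le m$ implies $j\le(m+j)/2$, the coefficient of $\omega^{-n}$ is a finite linear combination of the matrix elements $(\mathcal{L}^{j})_{ki}$ with $0\le j\le(n-2)/2$ only.

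The crucial structural input is the graph-combinatorial behaviour of $(\mathcal{L}^{j})_{ki}$. Because $\mathcal{L}_{ab}\neq0$ only when $a=b$ or $(a,b)\in E$, the entry $(\mathcal{L}^{j})_{ki}=\sum \mathcal{L}_{kj_1}\mathcal{L}_{j_1j_2}\cdots\mathcal{L}_{j_{j-1}i}$ is a weighted sum over length-$j$ walks from $k$ to $i$ in $G$ (repeated vertices allowed), so it vanishes whenever $j<d:=d(k,i)$. Moreover a walk of length exactly $d$ between two vertices at graph distance $d$ cannot contain a diagonal step, hence must be a shortest path, giving
\begin{equation}
(\mathcal{L}^{d})_{ki}=(-1)^{d}\!\!\sum_{\substack{P\ \text{a shortest}\\ k\text{-to-}i\ \text{path}}}\ \prod_{(a,b)\in P}K_{ab}\gamma_{ab},
\end{equation}
which is nonzero — all summands carry the common sign $(-1)^{d}$ so no cancellation occurs — provided at least one shortest $k$–$i$ path has strictly positive weights, as holds generically since $K_{ab}>0$ on edges and $\gamma_{ab}=\cos(\theta_b^*-\theta_a^*)>0$ away from the overload threshold $|\theta_b^*-\theta_a^*|=\pi/2$.

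Combining the two observations: every coefficient of $\omega^{-n}$ with $n<2d+2$ involves only $(\mathcal{L}^{j})_{ki}$ with $j\le(n-2)/2<d$ and therefore vanishes, while at order $n=2d+2$ the constraints $j\le m$ and $m+j=2d$ force $m=j=d$, leaving the single surviving term $-\binom{d}{d}(\imath\alpha)^{0}(\mathcal{L}^{d})_{ki}=-(\mathcal{L}^{d})_{ki}$. Hence
\begin{equation}
R_i^{(k)}(\omega)=-(\mathcal{L}^{d})_{ki}\,\omega^{-2d-2}+O\!\bigl(\omega^{-2d-3}\bigr),\qquad A_i^{(k)}=\omega\,\bigl|R_i^{(k)}\bigr|\ \overset{\omega\rightarrow\infty}{\sim}\ \bigl|(\mathcal{L}^{d})_{ki}\bigr|\,\omega^{-2d-1},
\end{equation}
since the leading coefficient is a fixed nonzero real number and the correction is of strictly lower order; this is \eqref{eq:localized_scaling} with $\Phi^{[d]}_{ki}:=-(\mathcal{L}^{d})_{ki}$. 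I expect the main obstacle to be the bookkeeping in the reorganisation step — rigorously showing that \emph{no} power of $1/\omega$ below $2d+2$ survives requires correctly pairing the geometric-series index $m$ with the $\lambda$-power index $j$ and verifying that order $2d+2$ is reached only through $m=j=d$ — together with a clean proof of the walk-counting lemma for $(\mathcal{L}^{d})_{ki}$ and a remark on the degenerate case in which every shortest path traverses an edge with $\gamma_{ab}=0$, which would push the leading exponent above $-2d-1$.
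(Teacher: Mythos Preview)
Your argument is correct and takes a genuinely different route from the paper's. The paper separates $R_i^{(k)}$ into its real and imaginary parts, reduces each to a common denominator $M(\omega)\sim\omega^{4N}$, and then has to establish (via an auxiliary Appendix) that the resulting numerators are polynomials in $\omega$ whose coefficients are themselves polynomials in $\mathcal{L}$ of controlled degree; only after this machinery is in place does the vanishing of $(\mathcal{L}^{j})_{ki}$ for $j<d$ pick out the leading power $\omega^{4N-2-2d}$ in the numerator. Your geometric-series expansion bypasses the common-denominator step entirely: it produces the $\omega^{-n}$ coefficients directly as finite linear combinations of $(\mathcal{L}^{j})_{ki}$ with $j\le(n-2)/2$, from which the same graph-theoretic vanishing immediately gives the leading term. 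Your approach is shorter and more transparent, and it yields the prefactor explicitly as $|(\mathcal{L}^{d})_{ki}|$ (together with its shortest-path interpretation and a generic nonvanishing argument), whereas the paper leaves $\Phi^{[d]}_{ki}$ defined only implicitly through its coefficient matrices. The paper's route, on the other hand, keeps the real and imaginary parts separate throughout, which makes it marginally easier to read off that the imaginary part is subleading by one power of $\omega$; in your expansion this is buried in the observation that the $m=j=d$ term carries $(\imath\alpha)^{0}$ and is therefore real.
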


\begin{proof}
    To determine the asymptotic behaviour of the response strength $A_i^{(k)}=\omega|R_i^{(k)}|$, we first reduce $\Real[R_i^{(k)}]$ and $\Imag[R_i^{(k)}]$ from \eqref{eq:real_response_factor_low_freq} and \eqref{eq:imag_response_factor_low_freq} to a common denominator $M(\omega)$ and obtain the respective numerators $N_{\text{Re}}(\omega)$ and $N_{\text{Im}}(\omega)$ as polynomials of $\omega$,
    \begin{align}
        M(\omega)&:=\displaystyle\prod_{\ell=0}^{N-1}\left[\left(-\omega^2+\lambda^{[\ell]}\right)^2+\alpha^2\omega^2\right],\\
        N^{\text{Re}}_{ki}(\omega)&:=\displaystyle\sum_{\ell=0}^{N-1}v_k^{[\ell]}v_i^{[\ell]}\left(-\omega^2+\lambda^{[\ell]}\right)Q^{[\ell]}(\omega), \text{ and}\label{eq:numerator_Re}\\
        N^{\text{Im}}_{ki}(\omega)&:=\displaystyle\sum_{\ell=0}^{N-1}v_k^{[\ell]}v_i^{[\ell]}\left(-\alpha\omega\right)Q^{[\ell]}(\omega) \quad\text{with}\label{eq:numerator_Im}\\
        Q^{[\ell]}(\omega)&:=\prod_{\substack{\ell'=0,\ell'\neq \ell}}^{N-1}\left[\left(-\omega^2+\lambda^{[\ell']}\right)^2+\alpha^2\omega^2\right].
    \end{align}
    The asymptotic behaviour of $M(\omega)$, $N^{\text{Re}}_{ki}(\omega)$ and $N^{\text{Im}}_{ki}(\omega)$ as $\omega\rightarrow\infty$ is dominated by the respective leading terms with the highest power of $\omega$. The denominator scales asymptotically as
    \begin{equation}
        M(\omega)\overset{\omega\rightarrow\infty}{\sim}\omega^{4N}.
        \label{eq:denominator_asymptotic}
    \end{equation}
    For the numerators the leading term depends on a common product $Q^{[\ell]}(\omega)$. As shown in \autoref{sec:proof_Q}, \st{$Q^{[\ell]}(\omega)$ explicitly depends on $\lambda^{[\ell]}$ in the form of} $Q^{[\ell]}(\omega)$ can be written in terms of $\lambda^{[\ell]}$ and other variables that are dependent on the underlying matrix $\mathcal{L}$ but independent of the choice of $\ell$. Thus, we define
    \begin{equation}
        Q(\lambda^{[\ell]},\omega):=\sum_{j=0}^{2N-2}C^{[j]}(\lambda^{[\ell]})\omega^{4N-4-2j}
        \label{eq:polynomial_Q}
    \end{equation}
     where the coefficients $C^{[j]}(\lambda^{[\ell]})$ are polynomials in $\lambda^{[\ell]}$ of degree $j$. Inserting the expression of $Q(\lambda^{[\ell]},\omega)$ (\ref{eq:polynomial_Q}) to the numerators (\ref{eq:numerator_Re}) and (\ref{eq:numerator_Im}), we obtain
    \begin{align}
        N^{\text{Re}}_{ki}&=\sum_{\ell=0}^{N-1}v_k^{[\ell]}v_i^{[\ell]}\sum_{j=0}^{2N-1}F^{[j]}(\lambda^{[\ell]})\omega^{4N-2-2j}\text{ and }
	    N^{\text{Im}}_{ki}&=\sum_{\ell=0}^{N-1}v_k^{[\ell]}v_i^{[\ell]}\sum_{j=0}^{2N-2}G^{[j]}(\lambda^{[\ell]})\omega^{4N-3-2j}
	    \label{eq:numerator_polynomials}
    \end{align}
    where $F^{[j]}(\lambda^{[\ell]})$ and $G^{[j]}(\lambda^{[\ell]})$ are also polynomials in $\lambda^{[\ell]}$ of degree $j$ and can be written in terms of $C^{[j]}(\lambda^{[\ell]})$ as
    \begin{align}
        F^{[j]}(\lambda^{[\ell]})&=
        \begin{cases}
            -C^{[j]}(\lambda^{[\ell]})   & j=0\\
            -C^{[j]}(\lambda^{[\ell]})+\lambda^{[\ell]}C^{j-1}(\lambda^{[\ell]})  & 1\leq j \leq 2N-2\\
            \lambda^{[\ell]}C^{j-1}(\lambda^{[\ell]}) & j=2N-1
        \end{cases}  \label{eq:coeff_real_numerator}\\
        G^{[j]}(\lambda^{[\ell]})&=-\alpha C^{[j]}(\lambda^{[\ell]}),\quad j\in\{0,1,\cdots,2N-2\}.
    \end{align}
    Considering the numerators $N^{\text{Re}}_{ki}(\omega)$ and $N^{\text{Im}}_{ki}(\omega)$ as the $ki$-th elements of numerator matrices $\mathcal{N}^{\text{Re}}$ and $\mathcal{N}^{\text{Im}}$, we can conveniently write (\ref{eq:numerator_polynomials}) in a matrix form
    \begin{equation}
        \mathcal{N}^{\text{Re}}=\sum_{j=0}^{2N-1}{\Phi}^{[j]}\omega^{4N-2-2j},\quad
	    \mathcal{N}^{\text{Im}}=\sum_{j=0}^{2N-2}{\Gamma}^{[j]}\omega^{4N-3-2j}
    \end{equation}
    with the coefficient matrices ${\Phi}^{[j]}:=\mathcal{V}\mathcal{F}^{[j]}\mathcal{V}^{\text{T}}$ and ${\Gamma}^{[j]}:=\mathcal{V}\mathcal{G}^{[j]}\mathcal{V}^{\text{T}}$. Here $\mathcal{V}:=(v^{[0]},\cdots,v^{[N-1]})$ and $\mathcal{F}^{[j]}$, $\mathcal{G}^{[j]}$ are diagonal matrices with $\mathcal{F}^{[j]}_{ii}:=F^{[j]}(\lambda^{[i]})$ and $\mathcal{G}^{[j]}_{ii}:=G^{[j]}(\lambda^{[i]})$, respectively. By spelling out the polynomials \begin{equation}
        F^{[j]}(\lambda^{[\ell]})=\sum_{m=0}^jf^{[j]}_m\cdot(\lambda^{[\ell]})^m,\quad
        G^{[j]}(\lambda^{[i]})=\sum_{m=0}^jg^{[j]}_m\cdot(\lambda^{[\ell]})^m
    \end{equation}
    with coefficients $f_m^{[j]},g_m^{[j]}\in\mathbb{R}$, we can see the diagonal matrices $\mathcal{F}^{[j]}$ and $\mathcal{G}^{[j]}$ are polynomials of a diagonal matrix $\Lambda$ with $\Lambda_{ii}:=\lambda^{[i]}$
    \begin{equation}
        \mathcal{F}^{[j]}=\sum_{m=0}^jf^{[j]}_m\Lambda^m,\quad
        \mathcal{G}^{[j]}=\sum_{m=0}^jg^{[j]}_m\Lambda^m,
    \end{equation}
    so that the coefficient matrices $\Phi^{[j]}$ and $\Gamma^{[j]}$ can be written as
    \begin{align}
        {\Phi}^{[j]}&=\mathcal{V}\left(\sum_{m=0}^{j}f^{[j]}_m\Lambda^m\right)\mathcal{V}^{\text{T}}=\sum_{m=0}^{j}f^{[j]}_m\left(\mathcal{V}\Lambda^m\mathcal{V}^{\text{T}}\right)=\sum_{m=0}^{j}f^{[j]}_m\mathcal{L}^m,\label{eq:coeff_matrix_phi}\\
        {\Gamma}^{[j]}&=\mathcal{V}\left(\sum_{m=0}^{j}g^{[j]}_m\Lambda^m\right)\mathcal{V}^{\text{T}}=\sum_{m=0}^{j}g^{[j]}_m\left(\mathcal{V}\Lambda^m\mathcal{V}^{\text{T}}\right)=\sum_{m=0}^{j}g^{[j]}_m\mathcal{L}^m,\label{eq:coeff_matrix_gamma}
    \end{align}
    indicating that they are polynomials of the weighted graph Laplacian matrix $\mathcal{L}$ of degree $j$, as $\mathcal{V}\Lambda\mathcal{V}^{\text{T}}\equiv\mathcal{L}$. In short, the numerators $\mathcal{N}^{\text{Re}}$ and $\mathcal{N}^{\text{Im}}_{ki}$ are in fact polynomials in $\omega$, with its coefficient being also a polynomial in $\mathcal{L}$,
    \begin{equation}
        \mathcal{N}^{\text{Re}}=\sum_{j=0}^{2N-1}\left(\sum_{m=0}^{j}f^{[j]}_m\mathcal{L}^m\right)\omega^{4N-2-2j},\quad
	    \mathcal{N}^{\text{Im}}=\sum_{j=0}^{2N-2}\left(\sum_{m=0}^{j}g^{[j]}_m\mathcal{L}^m\right)\omega^{4N-3-2j}.
    \end{equation}
    We note that as the index $j$ increases, the powers of $\omega$ decrease and the degrees of the polynomials $\Phi^{[j]}(\mathcal{L})$ and $\Gamma^{[j]}(\mathcal{L})$, i.e. the coefficients of $\omega$, increase. For sufficiently large perturbation frequency $\omega$, the leading terms in the numerators which dominate the asymptotic behaviors would be the ones with the highest degrees of $\omega$ with nonzero coefficients. We know from graph theory that $(\mathcal{L}^m)_{ij}\neq0$ only for node pair $(i,j)$ with the graph theoretic distance  $d(i,j)$ between them, i.e. the length of the shortest path from $j$ to $i$ on the unweighted graph defined by $\mathcal{L}$,  satisfying $d(i,j)\leq m$. Therefore, the first terms in $N^{\text{Re}}_{ki}(\omega)$ and $N^{\text{Im}}_{ki}(\omega)$ with the highest degrees of $\omega$ have exactly zero coefficients, i.e. $\Phi^{[j]}=0$ and $\Gamma^{[j]}=0$, for all $j<d(k,i)$. Consequently, the leading term in the numerators are
    \begin{equation}
        N^{\text{Re}}_{ki}(\omega)\overset{\omega\rightarrow\infty}{\sim}\Phi^{[d]}_{ki}\omega^{4N-2-2d},\quad
        N^{\text{Im}}_{ki}(\omega)\overset{\omega\rightarrow\infty}{\sim}\Gamma^{[d]}_{ki}\omega^{4N-3-2d}
    \end{equation}
    with $d:=d(k,i)$. Together with the asymptotic behavior of the denominator (\ref{eq:denominator_asymptotic}), we obtain 
    \begin{equation}
        A_i^{(k)}=\omega\left|R_i^{(k)}\right|\overset{\omega\rightarrow \infty}{\sim}\omega\cdot\left|\dfrac{\Phi^{[d]}_{ki}\omega^{4N-2-2d}}{\omega^{4N}}\right|=\left|\Phi^{[d]}_{ki}\right|\omega^{-2d-1}.
    \end{equation}
\end{proof}

\begin{remark}[Localized response patterns in the high-frequency limit]
    \autoref{prop:localized_pattern} implies that the response amplitude of the grid frequency $A_i^{(k)}$ decays exponentially with the graph-theoretic distance $d$ between the perturbed node and the responding node. The response amplitude also decays as a power law in $\omega$ for fixed $d$. In the high-frequency limit, a network's response to the perturbation is restricted to the perturbed node, i.e.
    \begin{equation}
        % \lim_{\omega\rightarrow\infty}A_i^{(k)} \omega=\lim_{\omega\rightarrow\infty}\left|\Phi^{[d]}_{ki}\right|\omega^{-2d}\propto \left|\Phi^{[d]}_{ki}\right|\delta_{ki}
        \lim_{\omega\rightarrow\infty}\dfrac{A_i^{(k)}}{A_i^{(k)}}
        \overset{\omega\rightarrow \infty}{\sim}\lim_{\omega\rightarrow\infty}\dfrac{\left|\Phi^{[d]}_{ki}\right|\omega^{-2d-1}}{\left|\Phi^{[0]}_{ki}\right|\omega^{-1}}
        =\lim_{\omega\rightarrow\infty}\left|\Phi^{[d]}_{ki}\right|\omega^{-2d}=\delta_{ki}
    \end{equation}
    with $\delta_{ki}$ being the Kronecker delta function.
\end{remark}

\begin{figure}[!ht]
    \centering
    \includegraphics[width=0.8\columnwidth]{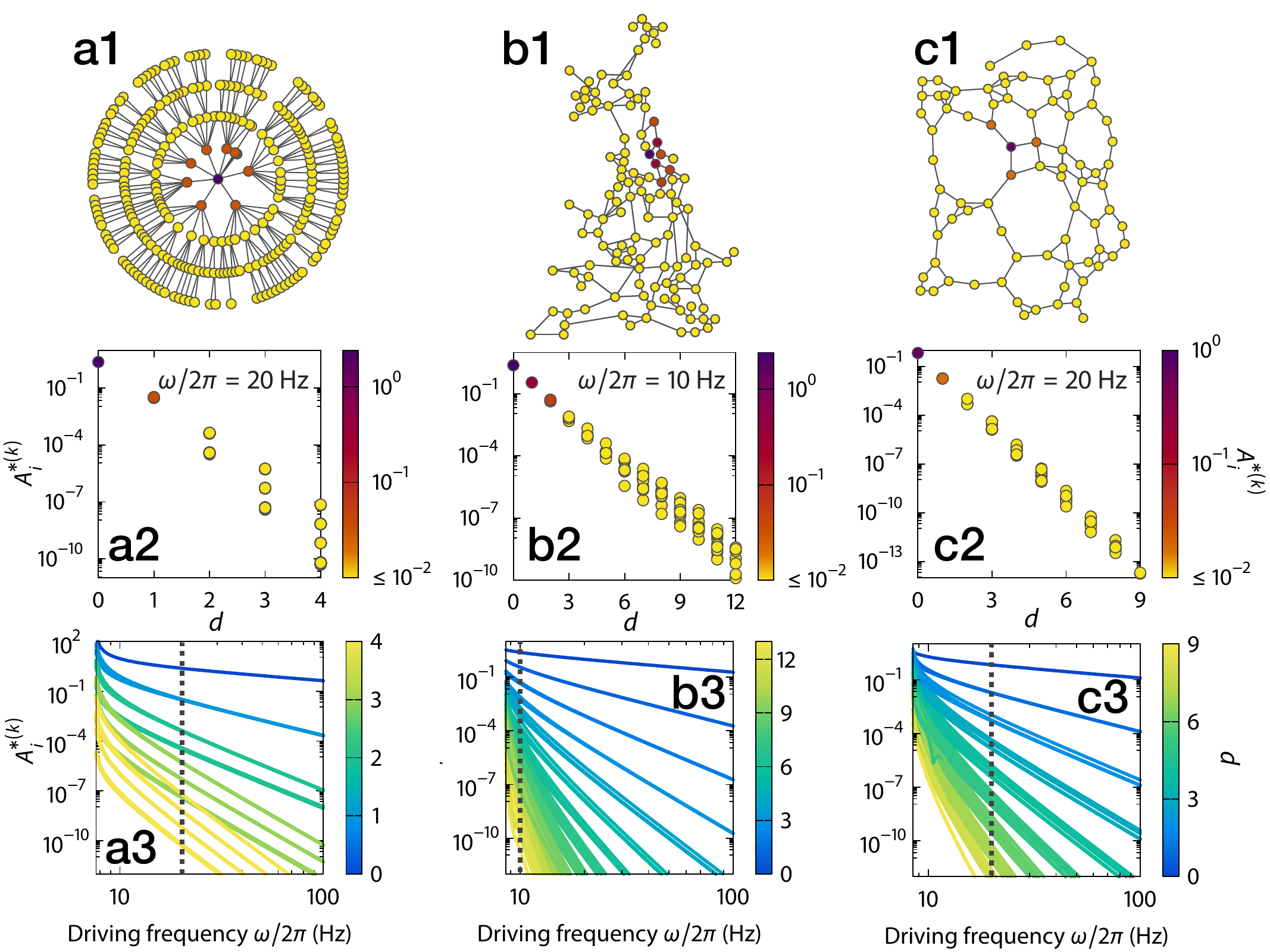}
    \caption{\textbf{Topological localization of network responses.} For frequencies larger than all eigenfrequencies and across network types (row 1) the response amplitudes (\ref{eq:relative_response_strength}) decays exponentially with shortest-path distance $d$ (row 2) and algebraically with driving frequency $\omega$ (row 3) (cf.~\autoref{prop:localized_pattern}). Dashed vertical lines in row 3 indicate the displayed frequency responses in row 2. Columns display graphs and responses for a random tree (column a), the topology of the British high voltage transmission grid (column b) and a random power grid network topology generated according to \citeasnoun{schultz_random_2014} (column c). Network settings: $(N,N_g,P_g,P_c,K_g,K_c,\alpha)=(264,24,10\t{ s}^{-2},-1\t{ s}^{-2},200\t{ s}^{-2},20\t{ s}^{-2},1\t{ s}^{-1})$ for column a, $(120,30,39\t{ s}^{-2},-13\t{ s}^{-2},390\t{ s}^{-2},390\t{ s}^{-2},1\t{ s}^{-1})$ for column b, and $(80,20,39\t{ s}^{-2},-13\t{ s}^{-2},390\t{ s}^{-2},390\t{ s}^{-2},1\t{ s}^{-1})$ for column c \protect\footnotemark.} 
    \label{fig:localized}
\end{figure}
    
\begin{remark}[Localized response patterns in networks of multi-dimensional dynamical systems]
    We consider networks of $N$ diffusively coupled identical units governed by $n$-dimensional dynamics. Each unit $i$ has $n$ state variables $(x_i^{[0]},x_i^{[1]}\cdots,x_i^{[n-1]})$ and is governed by dynamics
    \begin{align}
        \begin{cases}
            \dot{x}_i^{[0]}&=x_i^{[1]}\nonumber\\
            \dot{x}_i^{[1]}&=x_i^{[2]}\nonumber\\
            \cdots\nonumber\\
            \dot{x}_i^{[n-1]}&=f(x_i^{[0]},x_i^{[1]}\cdots,x_i^{[n-1]})+\displaystyle\sum_{j=1}^N g(x_j^{[0]}-x_i^{[0]}),\nonumber
        \end{cases}
    \end{align}
    where $f:\mathbb{R}^N\rightarrow\mathbb{R}$ and $g:\mathbb{R}\rightarrow\mathbb{R}$ are functions respectively representing the intrinsic and the coupling dynamics of units and allowing for a stable fixed point of the system. If unit $k$ is sinusoidally driven with frequency $\omega\rightarrow\infty$, we conjecture that the amplitude $\tilde{A}_{i,m}^{(k)}$ of the sinusoidal response in state variable $D_t^m x_i$ of unit $i$ is given by
    \begin{equation}
        \tilde{A}_{i,m}^{(k)}\overset{\omega\rightarrow \infty}{\sim}\left|\Psi^{[d]}_{ki}\right|\omega^{-n(d+1)+m},
        \label{conj:localize_pattern}
    \end{equation}
    where $|\Psi^{[d]}_{ki}|$ is a distance- and node-specific prefactor but independent on the perturbation frequency.

    The response of a sinusoidally driven damped harmonic oscillator with $\tilde{A}_{i,m}^{(k)}\overset{\omega\rightarrow \infty}{\sim}\omega^{-2}$ can be seen as a special case of (\ref{conj:localize_pattern}) with $n=2,d=0$ and $m=0$. For networks of Kuramoto phase oscillators with $n=1$, (\ref{conj:localize_pattern}) is proven to be valid [see the Supplementary of \citeasnoun{zhang_fluctuation-induced_2019}]. For the oscillator model of power grid with $n=2$ and $m=1$, (\ref{conj:localize_pattern}) reduces to  \autoref{prop:localized_pattern}.
\end{remark}

\footnotetext{Here in network settings $N_g$ is the number of power generating units with power injection $P_g>0$ and the rest units are power consuming units with $P_c<0$. The transmission lines connected to generators have capacity $K_g$ and the rest lines have capacity $K_c$.}

\begin{remark}[Generalizability of the steady-state response patterns in three frequency regimes]
	In the above discussions of the steady responses patterns in three frequency regimes in Sec.~\ref{subsec:steady_patterns}, we do not make any assumptions on the network topology. Therefore, our results on the characteristics of the homogeneous, the resonance and the localized response patterns in three regimes hold for arbitrary network topologies. Nevertheless, the evaluation of the parameters and the prefactors, such as $\Phi_{ki}^{[d]}$ in \eqref{eq:localized_scaling}, is network- and topology-dependent by definition.
\end{remark}

\subsection{Topological factor of transient spreading dynamics}
\label{subsec:transient_patterns}

In the following we focus on the transient response of AC power grids to external perturbations, which is referred to as the network responses close to the time of perturbation and thus describes the spatiotemporal pattern in the perturbation spreading process across the network. Particularly, we demonstrate how to extract the role of the  network topology in the spreading pattern based on the LRT of the oscillator model.

To investigate the transient response close to the onset of perturbation at $t=0$, we Taylor-expand the linear response (\ref{eq:LRT_oscillator_sinusoidal}) at node $i$ to a sinusoidal perturbation at node $k$ in powers of $t$ as
\begin{equation}
    \Theta_i^{(k)}(t)=\sum_{n=0}^{\infty}\dfrac{D_t^n{\Theta}_i^{(k)}(0)}{n!}t^n
    \label{eq:transient_Taylor_expansion}
\end{equation}
around $t=0$, which is characterized by the time derivatives of the linear response at $t=0$. Here $D^n_t:=\frac{\mathrm{d}^n}{\mathrm{d}t^n}$ is Euler's notation for differential operator. The $n$-th order time derivative of the linear response at $t=0$ is
\begin{equation}
    D^n_t\Theta_i^{(k)}(0)=\sum_{\ell=0}^{N-1}
	\dfrac{v_k^{[\ell]}v_i^{[\ell]}\varepsilon e^{\imath\varphi}}{-\omega^2+\imath\alpha\omega+\lambda^{[\ell]}}
	\left[\dfrac{\left(\Delta^{[\ell]}_+\right)^n\left(\Delta^{[\ell]}_--\imath\omega\right)-\left(\Delta^{[\ell]}_-\right)^n\left(\Delta^{[\ell]}_+-\imath\omega\right)}{2\eta^{[\ell]}}+(\imath\omega)^n\right].
	\label{eq:transient_derivative}
\end{equation}
The transient response of a power grid network can thus be estimated by the first non-zero term in the power series of $t$ (\ref{eq:transient_Taylor_expansion}). Interestingly, the resulting series does not start with low powers of $t$ such as $t^0$ or $t^1$ as typical for common Taylor expansions. Instead, it typically starts with large powers of $t$ as the following proposition illustrates.

\begin{proposition}[Leading-term approximation of transient response]
\label{prop:transient_leading_term}
    The transient response at node $i$ in an AC power grid network (\ref{eq:oscillator_model}) to a sinusoidal perturbation of frequency $\omega$ at node $k$ with an onset at $t=0$ is approximated by the $(2d+2)$-nd term in the Taylor expansion of the linear response (\ref{eq:LRT_oscillator_sinusoidal}) around $t=0$,
    \begin{equation}
        \Theta_i^{(k)}(t)=\dfrac{\varepsilon e^{\imath\varphi}(-1)^{d}\left(\mathcal{L}^{d}\right)_{ki}}{\left(2d+2\right)!}t^{2d+2}+O(t^{2d+3}).
        \label{eq:leading_term_approximation}
    \end{equation}
    Here $d:=d(k,i)$ denotes the graph-theoretic distance between node $k$ and node $i$.
\end{proposition}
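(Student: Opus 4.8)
\emph{Proof proposal.} The plan is to avoid expanding the closed form \eqref{eq:transient_derivative} term by term and instead read the Taylor coefficients directly off the linear response dynamics \eqref{eq:response_dynamics_oscillator_general} with homogeneous damping, $\ddot{\bm\Theta}+\alpha\dot{\bm\Theta}+\mathcal{L}\bm\Theta=\bm D(t)$, where $\bm D(t)=\varepsilon e^{\imath(\omega t+\varphi)}\bm{e}_k$ with $\bm{e}_k$ the $k$-th standard basis vector and, by the setup of \autoref{prop:localized_pattern}, $\bm\Theta(0)=\dot{\bm\Theta}(0)=\bm 0$. Differentiating this identity $n$ times in $t$ and evaluating at $t=0$ gives the recursion $D_t^{n+2}\bm\Theta(0)=(\imath\omega)^n\varepsilon e^{\imath\varphi}\bm{e}_k-\alpha\,D_t^{n+1}\bm\Theta(0)-\mathcal{L}\,D_t^{n}\bm\Theta(0)$. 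An easy induction then shows $D_t^{n}\bm\Theta(0)=\varepsilon e^{\imath\varphi}\,p_n(\mathcal{L})\,\bm{e}_k$, where the scalar polynomials $p_n$ obey $p_0=p_1\equiv 0$, $p_2\equiv 1$, and $p_{n+2}(x)=(\imath\omega)^n-\alpha\,p_{n+1}(x)-x\,p_n(x)$.

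Next I would track the degrees. By induction on the recursion, $\deg p_n\le\lfloor n/2\rfloor-1$ for $n\ge 2$; moreover, for even index $n=2m$ the term $-x\,p_{2m-2}(x)$ strictly dominates the others, so $\deg p_{2m}=m-1$ with leading coefficient $(-1)^{m-1}$. Writing $p_n(x)=\sum_{m\ge 0}a_{n,m}x^m$, this records the two facts I will use: all monomials in $p_n$ with $n\le 2d+1$ have exponent $\le d-1$, and $p_{2d+2}(x)=(-1)^d x^{d}+(\text{terms of degree}<d)$.

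The last ingredient is the graph-theoretic support property of powers of the weighted Laplacian already invoked in the proof of \autoref{prop:localized_pattern}: $(\mathcal{L}^m)_{ki}=0$ whenever $m<d(k,i)=:d$. Combining these, $D_t^{n}\Theta_i^{(k)}(0)=\varepsilon e^{\imath\varphi}\sum_{m}a_{n,m}(\mathcal{L}^m)_{ki}$ vanishes for every $n\le 2d+1$, since only exponents $m\le d-1$ occur there; hence the Taylor series \eqref{eq:transient_Taylor_expansion} carries no term below $t^{2d+2}$. For $n=2d+2$ only the top monomial survives the projection onto the $(k,i)$ entry, so $D_t^{2d+2}\Theta_i^{(k)}(0)=\varepsilon e^{\imath\varphi}(-1)^d(\mathcal{L}^{d})_{ki}$, and dividing by $(2d+2)!$ as in \eqref{eq:transient_Taylor_expansion} yields precisely the stated leading term of \eqref{eq:leading_term_approximation}; because the linear response \eqref{eq:LRT_oscillator_sinusoidal} is an entire function of $t$, the remaining terms of its Taylor series form the convergent $O(t^{2d+3})$ tail. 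One may add that $(\mathcal{L}^{d})_{ki}$ is generically nonzero — it equals a sum over the length-$d$ walks from $k$ to $i$, which are exactly the geodesics and all enter with the common sign $(-1)^d$ — so the $(2d+2)$-nd term is genuinely the leading one.

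I expect the only delicate point to be the degree and leading-coefficient bookkeeping for the $p_n$: one must verify that the parity works out so that the first possibly nonvanishing Taylor coefficient sits at $t^{2d+2}$ rather than $t^{2d+1}$, which is exactly the content of $\deg p_n=\lfloor n/2\rfloor-1$, and that no lower-degree monomials of $p_{2d+2}(\mathcal{L})$ contaminate the $(k,i)$ entry — this is taken care of by the cancellation $(\mathcal{L}^m)_{ki}=0$ for $m<d$. Everything else is routine induction together with the reuse of results established earlier in the paper.
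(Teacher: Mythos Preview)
Your proof is correct and takes a genuinely different route from the paper's. The paper works from the closed-form eigenmode solution \eqref{eq:LRT_oscillator_sinusoidal}: it writes $D_t^{n}\Theta_i^{(k)}(0)=\varepsilon e^{\imath\varphi}\sum_\ell v_k^{[\ell]}v_i^{[\ell]}F_n(\lambda^{[\ell]})$ and then devotes Appendix~\ref{sec:proof_F} to proving, by a separate induction on the explicit expressions in $\Delta^{[\ell]}_{\pm}$ and $\eta^{[\ell]}$, that $F_n$ is a polynomial in $\lambda^{[\ell]}$ of degree $\lfloor n/2\rfloor-1$ with the claimed leading coefficient; only afterwards does it use $\sum_\ell v_k^{[\ell]}v_i^{[\ell]}(\lambda^{[\ell]})^m=(\mathcal{L}^m)_{ki}$ and the support property of Laplacian powers. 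You bypass the spectral representation entirely by differentiating the ODE \eqref{eq:response_dynamics_oscillator_general} and extracting the recursion $p_{n+2}(x)=(\imath\omega)^n-\alpha p_{n+1}(x)-x\,p_n(x)$; the degree and leading-coefficient bookkeeping then follows from a one-line induction on this recursion rather than from the computations of Appendix~\ref{sec:proof_F}. Your argument is shorter and more elementary, and it makes the mechanism --- each application of $\mathcal{L}$ costs two time derivatives --- transparent. The paper's approach, on the other hand, stays inside the eigenbasis framework used throughout Section~\ref{sec:patterns}, so it yields the subleading information in \eqref{eq:derivative_leading_term} (the odd-$n$ coefficients involving $\alpha$ and $\omega$) for free, which your recursion also contains but does not display. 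Both proofs share the same final two ingredients: $(\mathcal{L}^m)_{ki}=0$ for $m<d$, and the sign observation that the geodesic contributions to $(\mathcal{L}^d)_{ki}$ do not cancel.
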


\begin{proof}
    To find out the leading term in the Taylor series of the linear response (\ref{eq:transient_Taylor_expansion}), we study the summand of the $\ell$-th eigenmode in the derivative of the linear response $D^n_t\Theta_i^{(k)}(0)$  (\ref{eq:transient_derivative}), which we denote as $v_k^{[\ell]}v_i^{[\ell]}\varepsilon e^{\imath\varphi}F_n(\lambda^{[\ell]})$ for convenience\footnote{Please not that the functions $F_n$ here and $F^{[j]}$ in \eqref{eq:coeff_real_numerator} are different from each other though both are polynomials in $\lambda^{[\ell]}$.}. In the summand, the function $F_n(\lambda^{[\ell]})$ appears to be a division of two polynomials of $\lambda^{[\ell]}$. In fact, it can be shown that the leading term of  $F_n(\lambda^{[\ell]})$ (denoted as $\lt{F_n(\lambda^{[\ell]})}$), i.e. the term with the highest order of $\lambda^{[\ell]}$ is
    \begin{align}
        \lt{F_n(\lambda^{[\ell]})}=\left\lbrace
		  \begin{array}{ll}
			(-1)^{\frac{n-1}{2}}\left(-\imath\omega+\dfrac{n-1}{2}\alpha\right)\left(\lambda^{[\ell]}\right)^{\frac{n-3}{2}}& \quad \text{if } n \text{ is odd,}\\
		   (-1)^{\frac{n-2}{2}}\left(\lambda^{[\ell]}\right)^{\frac{n-2}{2}}& \quad \text{if } n \text{ is even.}
		  \end{array}\right.
		 \label{eq:derivative_leading_term}
    \end{align}
    A derivation of the result (\ref{eq:derivative_leading_term}) is given in \autoref{sec:proof_F}. We note that, for $n=0$ and $n=1$, $F_n(\lambda^{[\ell]})=0$, which is a consequence of the choice of initial condition: the linear response and its first derivative are supposed to be zero at $t=0$, as the responses $\Theta_i^{(k)}$ and the frequency response $\dot{\Theta}_i^{(k)}$ are zero at the onset of perturbation. For $n\geq2$, (\ref{eq:derivative_leading_term}) indicates a monotonic relation between the degree of $F_n(\lambda^{[\ell]})$ as a polynomial of $\lambda^{[\ell]}$ and the order of derivative $n$: $n=2\deg[F_n(\lambda^{[\ell]})]+2$.
    
    As we have shown in Sec.~\ref{subsubsec:localized_regime}, sums of the form of $\sum_{\ell=0}^{N-1}v_k^{[\ell]}v_i^{[\ell]}P^j(\lambda^{[\ell]})$, where $P^j(\lambda^{[\ell]})$ represents a general polynomial of $\lambda^{[\ell]}$ of degree $j$, can be seen as $[P^j(\mathcal{L})]_{ki}$, the $ki$-th element of the matrix $P^j(\mathcal{L})$, a polynomial of $\mathcal{L}$ with degree $j$. Applying this result to the derivatives of the linear response (\ref{eq:transient_derivative}), we find that $D^n_t\Theta_i^{(k)}(0)$ can be considered as the $ki$-th element of matrix $F_n(\mathcal{L})$, a polynomial of $\mathcal{L}$. The leading term of $D^n_t\Theta_i^{(k)}(0)$ is thus given by
    \begin{equation}
        \lt{D^n_t\Theta_i^{(k)}(0)}=\varepsilon e^{\imath\varphi} \lt{F_n(\mathcal{L})}_{ki},
    \end{equation}
    which contains $(\mathcal{L}^m)_{ki}$ with $m=\frac{n-3}{2}$ if $n$ is odd and $m=\frac{n-2}{2}$ if $n$ is even. We notice that for a given node pair $(k,i)$ at distance $d$, we have $(\mathcal{L}^m)_{ki}=0$ for all $m<d$ because no path of length $m<d$ can connect nodes $k$ and $i$. Therefore all terms in the Tayler series ($\ref{eq:transient_Taylor_expansion}$) with the leading term's degree lower than $d$ vanish. The first non-zero term in the series thus equals the leading term of the $(2d+2)$-th derivative of the linear response,
    \begin{equation}
        D^{2d+2}_t\Theta_i^{(k)}(0)=\varepsilon e^{\imath\varphi} \lt{F_{2d+2}(\mathcal{L})}_{ki}=\varepsilon e^{\imath\varphi}(-1)^d(\mathcal{L}^d)_{ki},
    \end{equation}
    because all other terms contain $(\mathcal{L}^m)_{ki}$ with $m<d$ and thus vanish. Taken together, the transient linear response near $t=0$ can be approximated as
    \begin{equation}
        \Theta_i^{(k)}(t)=\sum_{n=2d+2}^{\infty}\dfrac{D_t^n\Theta_i^{(k)}(0)}{n!}t^n=\dfrac{\varepsilon e^{\imath\varphi}(-1)^{d}\left(\mathcal{L}^{d}\right)_{ki}}{\left(2d+2\right)!}t^{2d+2}+O(t^{2d+3}).
    \end{equation}
\end{proof}

\begin{remark}[Topological factor in perturbation spreading]
\label{remark:topological_factor}
    The leading-term approximation of the linear response (\autoref{prop:transient_leading_term}) provides a way to disentangle the impact of various factors on the dynamical spreading process in power grid networks. Specifically, the impact of the specific network topology, including the interaction structure between units and the system's base state at $t=0$, is reflected in the factor $(\mathcal{L}^d)_{ki}$ in the leading-term approximation. It satisfies
    \begin{equation}
        \left(\mathcal{L}^{d}\right)_{ki}=\sum_{\mathcal{P}_{k\rightarrow i}^{d}}\ \prod_{(u,v)\in\mathcal{P}_{k\rightarrow i}^{d}}\mathcal{L}_{uv},
    \end{equation}
    suggesting that it can be interpreted as the product of the edge weights along a shortest path $\mathcal{P}_{k\rightarrow i}^{d}$ between node $k$ and $i$, summed over all shortest paths. This insight provides guidelines for manipulating the perturbation spreading dynamics in power grid networks through changing the underlying topology. Numerical evidences show that the topological factor that revealed by the leading-term approximation also enables a master function approach to accurately predict threshold-crossing arrival times in power grid networks [\citet{zhang_topological_2020}].
\end{remark}

\begin{remark}[Scaling behaviours in transient responses]
    The leading-term approximation of the transient response (\ref{eq:leading_term_approximation}) reveals two scaling behaviours as $t\rightarrow0$ (cf.~Fig.~\ref{fig:spreading}). First, the transient response grows algebraically in time with a distance-dependent exponent: $\Theta_i^{(k)}(t)\sim C_d t^{2d+2}$. Here $C_d:={\varepsilon e^{\imath\varphi}(-1)^{d}\left(\mathcal{L}^{d}\right)_{ki}}/{\left(2d+2\right)!}$ is a time independent prefactor but depends on signal magnitude, topology, base operating state and inter-node distance $d$. Second, the transient response decays nearly exponentially with distance $d$, since the factor $t^{2d+2}$ dominates the asymptotic behaviour of the response at large but finite distances as $t\rightarrow0$.
\end{remark}

\begin{figure}[ht]
    \centering
    \includegraphics[width=0.8\columnwidth]{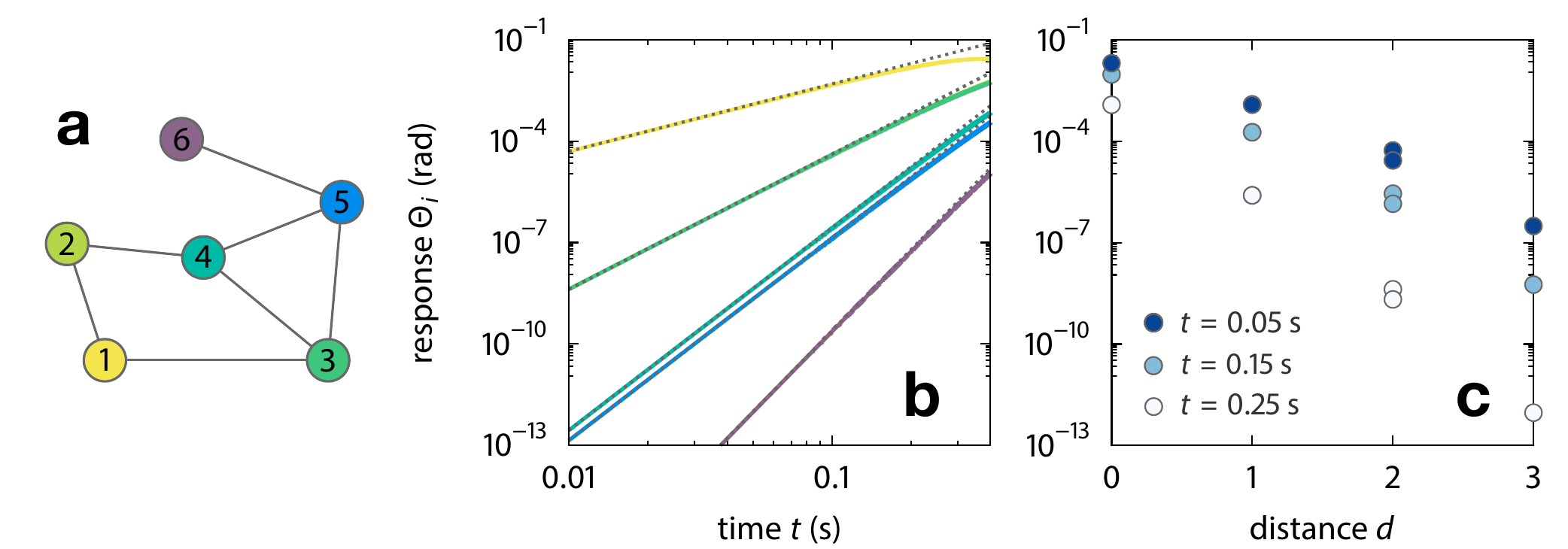}
    \caption{\textbf{Transient Network Response Dynamics} exhibits algebraic growth with time and exponential decay with shortest-path distance. 
    (a) Basic network of $N=6$ units illustrates (b,c) transient algebraic responses [color-coded as units in (a)] to a sinusoidal perturbation at node $1$ that increase like $\Theta_i^{(k)}(t)\sim C_d t^{2d+2}$ as $t \rightarrow 0 $, with time independent constant $C_d$ that depends on signal magnitude, topology, base operating state and inter-node distance $d$, see \eqref{eq:leading_term_approximation}. Thus, responses (b) algebraically increase with time $t$ at any given unit and (c) at any given time, they decay nearly exponentially with shortest-path distance $d=d(k,i)$ between the perturbed unit $k$ and the observed unit $i$. The grey dotted lines in (b) indicate the leading-term approximations. Network settings: $(N,N_g,P_g,P_c,K_g,K_c,\alpha)=(6,3,1\t{ s}^{-2},-1\t{ s}^{-2},10\t{ s}^{-2},10\t{ s}^{-2},1\t{ s}^{-1})$. For the perturbation signal $(\varepsilon,\omega/2\pi,\varphi)=(1,1\t{ Hz},0\t{ rad})$.}
    \label{fig:spreading}
\end{figure}

\begin{remark}[Generalizability of the transient spreading patterns]
	The above discussions on the spatio-temporal pattern of transient spreading do not involve any assumptions on the underlying network topology. Thus the form of the leading term approximation of the transient network response (Proposition \ref{prop:transient_leading_term}) does not depend on the specific choice of network topology. We underline that the evaluation of the topological factor in perturbation spreading (Remark \ref{remark:topological_factor}) is indeed topology- and node-specific, as it captures the local interaction structure and consists of all shortest paths between the perturbed node and the responding node.
\end{remark}

\subsection{Nodal vulnerability to unpredictable fluctuations}
\label{subsec:dvi}

In Sec.~\ref{subsec:steady_patterns} and Sec.~\ref{subsec:transient_patterns} we show how the distributed response patterns of power grid systems, in the steady-state ($t\rightarrow\infty$) and in the transient stage ($t\rightarrow0$), are analytically extracted from the LRT of the oscillator model. The response patterns are numerically proven to be highly accurate [\citet{zhang_fluctuation-induced_2019,zhang_topological_2020}], but the results are valid only for given perturbation signals, hence deterministic. Meanwhile, real-world power grid systems are perturbed by power fluctuations whose exact time series are hardly predictable. In this section we discuss how LRT helps to estimate network responses to random perturbations.

As discussed in Sec.~\ref{subsec:steady_patterns}, power grid systems respond resonantly to perturbations with frequencies falling in the band of network's eigenfrequencies $I_{\t{res}}:=[\omega_{\t{eigen}}^{[1]},\omega_{\t{eigen}}^{[N-1]}]$, exhibiting the most irregular network-wide spatiotemporal patterns, compared to the almost homogeneous pattern for lower frequencies and the localized pattern for higher frequencies. Moreover, the resonance response pattern varies drastically for different perturbation frequencies and for different locations of perturbation. Therefore, estimating the nodal responses for random perturbation signals involving frequency components within $I_{\t{res}}$ is a task not only of practical significance regarding the operational safety of power grid systems, but also with a high theoretical complexity.

\begin{definition}[Dynamic vulnerability index (DVI) for random network resonances]
\label{def:dvi}
    In an AC power grid system (\ref{eq:oscillator_model}) perturbed by a random fluctuation at node $k$, characterized by a power spectral density $S(\omega)$ with frequency components $\omega\in I_{\mathrm{res}}=[\omega_{\mathrm{eigen}}^{[1]},\omega_{\mathrm{eigen}}^{[N-1]}]$, the nodal Dynamic Vulnerability Index (DVI) is defined as
    \begin{equation}
        \mathrm{DVI}_i^{(k)}:=\int_{I_{\mathrm{res}}}{S(\omega)}^{\frac{1}{2}}\left|\sum_{\ell=0}^{N-1}\dfrac{\imath\omega v^{[\ell]}_kv^{[\ell]}_i}{-\omega^2+\imath\alpha\omega+\lambda^{[\ell]}}\right|\ud\omega.
        \label{eq:dvi}
    \end{equation}
\end{definition}

\begin{proposition}[DVI estimates ranking of the nodal all-time-high steady frequency responses]
    Let an AC power grid system (\ref{eq:oscillator_model}) be perturbed by a time-dependent fluctuation at node $k$. A random signal time series is characterized by a power spectral density $S(\omega)$, i.e., the strength $\varepsilon$ of its frequency component $\varepsilon e^{\imath(\omega t+\varphi)}$ is frequency dependent and follows $\varepsilon(\omega)\propto S(\omega)^{\frac{1}{2}}$, and the corresponding phase $\varphi$ is randomly drawn from the uniform distribution on $[0,2\pi)$, independently for each realization of the fluctuation time series. Suppose the fluctuation signal is composed of frequency components with $\omega\in I_{\mathrm{res}}=[\omega_{\mathrm{eigen}}^{[1]},\omega_{\mathrm{eigen}}^{[N-1]}]$, the ranking $\sigma_{\mathrm{ATH}}$ of the nodal all-time-high steady frequency response magnitude $\max_{t\in[0,T]}|\dot{\Theta}^{(k)}_i(t)|$ in an observation window $T$ approaches the ranking $\sigma_{\mathrm{DVI}}$ of the DVI defined in \autoref{def:dvi} as the observation time window goes to infinity, i.e.
    \begin{equation}
        \lim_{T\rightarrow\infty}\sigma_{\mathrm{ATH}}(i)=\sigma_{\mathrm{DVI}}(i)
        \quad \text{for all }i\in\{1,\cdots,N\}.
    \end{equation}
\end{proposition}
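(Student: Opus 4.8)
The plan is to sandwich the all-time-high steady frequency response between a fixed positive multiple of $\mathrm{DVI}_i^{(k)}$ from above and the same multiple from below as $T\to\infty$, which forces the two rankings to coincide. The starting point is the single-frequency building block: by \autoref{prop:osillator_pattern_sinusoidal} (see \eqref{eq:oscillator_steady_sinusoidal}), the steady-state response of node $i$ to a sinusoidal drive $D_m(t)=\delta_{mk}\,\varepsilon e^{\imath(\omega t+\varphi)}$ at node $k$ equals, modulo transients decaying like $e^{-\alpha t/2}$, the sum of the constant homogeneous phase shift $\imath\varepsilon e^{\imath\varphi}/(\alpha\omega N)$ and the driven oscillation $\varepsilon e^{\imath(\omega t+\varphi)}R_i^{(k)}(\omega)$; differentiating in $t$ kills the constant term, so the steady frequency response is $\dot{\Theta}_i^{(k)}(t)=\imath\omega\,\varepsilon e^{\imath(\omega t+\varphi)}R_i^{(k)}(\omega)$, with $t$-independent modulus $\varepsilon\,\omega|R_i^{(k)}(\omega)|=\varepsilon A_i^{(k)}(\omega)$ (cf.~\eqref{eq:nodal_response_strength}). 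Here and below I take $\alpha>0$, so that $R_i^{(k)}$ is smooth and pole-free on the compact resonance band $I_{\mathrm{res}}=[\omega_{\mathrm{eigen}}^{[1]},\omega_{\mathrm{eigen}}^{[N-1]}]$, which in particular stays bounded away from $\omega=0$.

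Next I would superpose. By linearity of the response dynamics \eqref{eq:response_dynamics_oscillator_general} (cf.~\autoref{remark:distributed_arbitrary_perturbations}), the steady frequency response to the composite fluctuation at node $k$ — whose component at frequency $\omega$ has strength $\varepsilon(\omega)=c\,S(\omega)^{1/2}$ and random phase $\varphi(\omega)$ — is $\dot{\Theta}_i^{(k)}(t)=c\sum_{\omega}\imath\omega\,S(\omega)^{1/2}e^{\imath(\omega t+\varphi(\omega))}R_i^{(k)}(\omega)$, the sum running over the frequency components in $I_{\mathrm{res}}$. The triangle inequality then gives, for \emph{every} $t$,
\[
\bigl|\dot{\Theta}_i^{(k)}(t)\bigr|\;\le\;c\sum_{\omega}S(\omega)^{1/2}\Bigl|\sum_{\ell=0}^{N-1}\frac{\imath\omega\,v_k^{[\ell]}v_i^{[\ell]}}{-\omega^2+\imath\alpha\omega+\lambda^{[\ell]}}\Bigr|\;=\;c\,\mathrm{DVI}_i^{(k)},
\]
recognising the right-hand side, once the sum over components is read as the integral over $I_{\mathrm{res}}$, as the quantity in \autoref{def:dvi}. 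Hence $\max_{t\in[0,T]}\bigl|\dot{\Theta}_i^{(k)}(t)\bigr|\le c\,\mathrm{DVI}_i^{(k)}$ for all $T$ and all $i$: the DVI is a universal ceiling for the all-time-high response.

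The crux is to show this ceiling is asymptotically attained, i.e.~that for every $\epsilon>0$ there is a $T_0$ with $\max_{t\in[0,T]}\bigl|\dot{\Theta}_i^{(k)}(t)\bigr|\ge c\,\mathrm{DVI}_i^{(k)}-\epsilon$ once $T\ge T_0$. Writing the $\omega$-component of $\dot{\Theta}_i^{(k)}$ in polar form $\rho_\omega e^{\imath\psi_\omega}$ with $\rho_\omega=c\,\omega S(\omega)^{1/2}|R_i^{(k)}(\omega)|$ and $\psi_\omega=\tfrac{\pi}{2}+\varphi(\omega)+\arg R_i^{(k)}(\omega)$, the response modulus $\bigl|\sum_\omega\rho_\omega e^{\imath(\omega t+\psi_\omega)}\bigr|$ attains its bound $\sum_\omega\rho_\omega$ precisely when the phases $\{\omega t+\psi_\omega\}$ are mutually congruent modulo $2\pi$. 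For a rationally independent set of drive frequencies — the generic case, and the one to which the general case reduces by an arbitrarily small perturbation of the frequencies, which perturbs the relevant sum negligibly because the DVI integrand is continuous on $I_{\mathrm{res}}$ — Kronecker's simultaneous approximation theorem (equivalently, equidistribution of the linear flow $t\mapsto(\omega t)_{\omega}$ on the torus) yields arbitrarily large times $t$ at which every $\omega t+\psi_\omega$ lies within any prescribed tolerance of $0\bmod 2\pi$; at such a $t$ the response modulus is within $\epsilon$ of $c\,\mathrm{DVI}_i^{(k)}$, and taking such a $t\le T$ with $T$ large completes the step. Note that the random phases $\varphi(\omega)$ merely translate the targets $\psi_\omega$, which Kronecker still reaches; the randomness governs the finite-$T$ statistics, not the $T\to\infty$ limit.

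Combining the ceiling with the asymptotic attainment gives $\lim_{T\to\infty}\max_{t\in[0,T]}\bigl|\dot{\Theta}_i^{(k)}(t)\bigr|=c\,\mathrm{DVI}_i^{(k)}$ for each node $i$ with one and the same $c>0$; hence, provided the $N$ values $\mathrm{DVI}_i^{(k)}$ are pairwise distinct (a mild genericity assumption that makes the ranking unambiguous), the ordering of the all-time-high magnitudes eventually coincides with that of the DVI values, i.e.~$\lim_{T\to\infty}\sigma_{\mathrm{ATH}}(i)=\sigma_{\mathrm{DVI}}(i)$ for all $i$. I expect the asymptotic-attainment step to be the main obstacle: converting the heuristic ``given a long enough observation window the frequency components eventually realign in phase'' into a proof requires the equidistribution input together with a little bookkeeping about rational independence of the drive frequencies, and the clean identification of the attained value with the DVI \emph{integral} is cleanest in the dense-spectrum limit, where that integral is the limit of the Riemann sums $\sum_\omega S(\omega)^{1/2}\rho_\omega/c$ of the single-component response strengths.
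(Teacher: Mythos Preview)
Your proposal is correct and follows essentially the same route as the paper: derive the single-frequency steady response $\dot{\Theta}_i^{(k)}=\imath\varepsilon\omega R_i^{(k)}e^{\imath(\omega t+\varphi)}$, superpose by linearity, and argue that the supremum over $t$ of the modulus of the superposition equals the sum (integral) of the component moduli $c\,\mathrm{DVI}_i^{(k)}$ via asymptotic phase alignment. You are slightly more explicit than the paper in two places --- you state the triangle-inequality upper bound separately, and you name Kronecker/equidistribution for the attainment step where the paper invokes ergodic-theoretic non-resonance of frequencies and cites Dumas --- and you also flag the genericity needed for an unambiguous ranking and the sum-versus-integral subtlety, both of which the paper leaves implicit.
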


\begin{proof}
    To analyse the steady network response, i.e. the response in a time window $T\rightarrow\infty$, to a perturbation signal composed of a range of frequencies, we make use of the steady-state response of the oscillator model to a sinusoidal signal (\ref{eq:oscillator_steady_sinusoidal}). The steady-state response $\dot{\Theta}_i(t)$ of the frequency at node $i$ to a perturbation signal $\varepsilon e^{\imath(\omega t+\varphi)}$ at node $k$ is given by
    \begin{equation}
        \dot{\Theta}_i^{(k)}(t)\overset{t\rightarrow\infty}{\sim}\sum_{\ell=0}^{N-1}\dfrac{\imath\varepsilon\omega v_k^{[\ell]}{v}_i^{[\ell]}}{-\omega^2+\imath\alpha\omega+\lambda^{[\ell]}}e^{\imath(\omega t+\varphi)}=\imath\varepsilon\omega R_i^{(k)}e^{\imath(\omega t+\varphi)},
        \label{eq:steady_frequency_response_sinusoidal}
    \end{equation}
    which can be seen as a driven oscillation with a complex amplitude $\imath\varepsilon\omega R_i^{(k)}$. Here $R_i^{(k)}$ is the response factor defined in (\ref{eq:oscillator_response_factor}) characterizing the response strength at individual nodes in a network. The complex amplitude gives rise to an amplitude shift $|\imath\varepsilon\omega R_i^{(k)}|$ and a phase shift $\arg(\imath\varepsilon\omega R_i^{(k)})$, which both are specific to the perturbation strength $\varepsilon$ and the perturbation frequency $\omega$.
    
    As a consequence of the linear nature of LRT, the nodal frequency response to a temporally fluctuating perturbation signal containing a spectrum of frequency components is obtained by summing the response $\dot{\Theta}_i^{(k)}(\varepsilon,\omega,t)$ (\ref{eq:steady_frequency_response_sinusoidal}) over all frequency components (see also \autoref{remark:distributed_arbitrary_perturbations}). Particularly, for perturbation signals which are characterized by a specific power spectral density (PSD) $S(w)$, the strength of its frequency component $\varepsilon e^{\imath(\omega t+\varphi)}$ can be expressed in terms of the frequency as $\varepsilon(\omega)\propto S(\omega)^{\frac{1}{2}}$ while no assumptions is made on the choice of its phase $\varphi$. For instance, in modern power grids integrated with renewable energies, the power fluctuations from wind and solar energy are characterized by a power law PSD with the Kolmogorov exponent $-5/3$ [\citet{anvari_short_2016}]. For such resonant perturbations, the nodal frequency response reads
    \begin{equation}
        \dot{\Theta}_i^{(k)}(S(\omega),t)\overset{t\rightarrow\infty}{\sim}\displaystyle\int_{I_{\text{res}}}\imath cS(\omega)^{\frac{1}{2}}\omega R_i^{(k)}e^{\imath(\omega t+\varphi)}\ud\omega.
        \label{eq:steady_stochastic_resonance}
    \end{equation}
    Here $c$ is a scaling factor between the power distribution over frequencies of a specific signal and the (normalized) PSD. Independent of the specific realization of the fluctuation signal, the largest possible magnitude of the frequency response (\ref{eq:steady_stochastic_resonance}) is reached only when the involved frequencies are finite in number and non-resonant to each other\footnote{In this specific context we adopt the definition of resonance between frequencies in ergodic theory. The frequencies, as elements of a vector $\bm{\omega}$, are called \emph{resonant} to each other if there exists a nonzero integer vector $\mathbf{m}\cdot\bm{\omega}=0$. Otherwise the frequencies are called \textit{non-resonant} to each other.} so that all oscillating responses for each frequency component $\omega$ (\ref{eq:steady_frequency_response_sinusoidal}) would eventually align. The largest possible magnitude of frequency response time series is given by the sum over $\omega$ of the magnitude of each frequency response $|\imath cS(\omega)^{\frac{1}{2}}\omega R_i^{(k)}|$. For finite time series with $M$ data points and time step $\Delta T$, its frequency components $\omega_n=\frac{2\pi n}{M\Delta T}$ with $n \in \{1,\cdots,\frac{M}{2}\}$ given by discrete Fourier transform are apparently resonant to each other. However, for time series with a fixed sampling rate, the longer the observation time window $T$, the smaller the frequency interval $\Delta \omega=2\pi/T$ and thus the more frequency components exist within the given interval of interest $I_{\text{res}}$. As $T$ approaches infinity, the order of the resonant frequencies $\{\omega_n\}$ becomes sufficiently high so that the alignment of the phases can be attained at a finite rate \cite{dumas_ergodization_1991}. Therefore, as $T\rightarrow\infty$, the all-time-high frequency response approaches the sum of the frequency-specific response magnitudes, which is proportional to the DVI given in (\ref{eq:dvi}):
    \begin{equation}
        \lim_{T\rightarrow\infty}\max_{t\in[0,T]}\left|\dot{\Theta}^{(k)}_i(S(\omega),t)\right|=\int_{I_{\text{res}}}cS(\omega)^{\frac{1}{2}}\left|\imath \omega R_i^{(k)}\right|\ud\omega=c\,\t{DVI}_i^{(k)}.
        \label{eq:ath_approaches_dvi}
    \end{equation}
    If one only considers the relative ranking of the all-time-high frequency response of nodes in a given network, but not their absolute values, the overall scaling factor $c$ in (\ref{eq:ath_approaches_dvi}) does not play a role in the ranking and we finally arrive at
    \begin{equation}
        \lim_{T\rightarrow\infty}\sigma_{\mathrm{ATH}}(i)=\sigma_{\mathrm{DVI}}(i)
        \quad \text{for all }i\in\{1,\cdots,N\}.
        \label{eq:ranking_converge}
    \end{equation}
\end{proof}

\begin{remark}[Generalization of DVI and convergence time]
    The DVI defined in \eqref{eq:dvi} provides a measure to estimate the relative nodal risk from the power grid network resonances induced by a unpredictable perturbation signal and thus helps to identify the most vulnerable nodes in networks with arbitrary topologies exhibiting particularly strong resonant responses. The integration interval of the frequencies in DVI is chosen to be the band of eigenfrequencies $I_{\t{res}}$ for a specific network so that its irregular resonance patterns are covered, however it is not a must. In principle, any frequency range can be chosen for DVI to estimate the relative strength of the  all-time-high responses in the specific frequency range. However, one should note that the timescale for the ranking of all-time-high responses to converge to the ranking given by DVI (in \ref{eq:ath_approaches_dvi} and \ref{eq:ranking_converge}) depends on the chosen frequency range. For instance, the convergence time would be longer if lower frequencies are included in the integration interval of DVI.
\end{remark}

\section{Role of LRT in uncovering response patterns}
\label{sec:role_LRT}

In the previous section we elaborated how network-wide dynamic response patterns of power grid systems can be extracted from the explicit solution of nodal responses given by the LRT of the oscillator model (see Sec.~\ref{sec:LRT_models}). In this section we summarize and compare the role of LRT in revealing different categories of the dynamic response patterns, such as the patterns emerging on different timescales, and in responses to perturbations with different levels of randomness and magnitude.

\subsection{Transient vs. steady-state responses}
\label{subsec:role_steady_transient}

As discussed in Sec.~\ref{subsec:steady_patterns} and Sec.~\ref{subsec:transient_patterns}, power grid transmission networks exhibit distinctive spatiotemporal response patterns on different timescales. The transient spreading pattern (see Sec.~\ref{subsec:transient_patterns}) of an external perturbation signal in a normally operating power grid network appears close to the time of impact $t=0$. It is characterized by a set of points in time, at which the impact arrives at individual units in the network. To a large extent, the topological dependence of the arrival times can be captured by a \textit{topological factor} which arises from the leading-term approximation of the linear response. The steady-state response patterns, in contrast, emerge as $t\rightarrow\infty$ (see Sec.~\ref{subsec:steady_patterns}), where the nodal responses to a sinusoidal perturbation converge to sinusoidal oscillations as well, but with various amplitudes. Consequently the set of the nodal response amplitudes, a time-invariant but frequency-dependent feature of the oscillating responses, constitute the steady-state response patterns characterizing three frequency regimes. The time scale separating transient from steady-state regimes is not a universal constant but intricately depends on several factors including network size $N$ and network topology, damping constant $\alpha$, and the specific node location we are interested in within the network.

Both transient and steady-state response patterns have been revealed and characterized through asymptotic analyses of the explicit solution of the linear nodal responses (\ref{eq:LRT_oscillator_sinusoidal}). The solution depends explicitly on time while the dependence on the network topology is implicit, embedded in the eigenvalues and eigenvectors of the weighted graph Laplacian matrix $\mathcal{L}$. Through asymptotic analyses, either with respect to time $t$ or the perturbation frequency $\omega$, the lengthy solution of the linear nodal response is reduced to one term per eigenmode that dominates the asymptotic behaviour as the variable $t$ or $\omega$ approaches its corresponding limit (see \autoref{prop:localized_pattern} and \autoref{prop:transient_leading_term}). As the contribution of each eigenmode contains the ``overlap factor'' $v_i^{[\ell]}v_k^{[\ell]}$ of the perturbed node $k$ and the responding node $i$, the powers of the Laplacian eigenvalue $(\lambda^{[\ell]})^m$ that is involved in the dominating term in each eigenmode $\ell$ translates to elements of the power of the Laplacian matrix $(\mathcal{L}^m)_{ki}$ through the summation over all $N$ eigenmodes $\ell\in\{0,\cdots,N-1\}$. Furthermore, with the help of the result from graph theory that $(\mathcal{L}^m)_{ki}\equiv0$ if $m\in\mathbb{N}_0$ is smaller than $d(k,i)$, the shortest path distance between node $k$ and $i$, the dependence of the nodal linear responses on graph-theoretic distance emerges. In this way one obtains the asymptotic spatiotemporal response patterns that depends explicitly on distance. 
 
A major difference between the patterns we uncover in steady-state responses and in transient responses is, the asymptotic behaviors of the steady-state response patterns are exact, in the sense that the higher order terms are negligible at the observed limits of the perturbation frequency $\omega$ (see Sec.~\ref{subsec:steady_patterns}). Meanwhile, the contribution of the higher order terms are not negligible in the patterns in transient responses: the leading-term approximation of the response exhibits a diverging error as the perturbation spreads further and the arrival time grows larger. Numerical simulations show that the higher order terms accounts for about $10\%$ of the actual arrival time of perturbations [\citet{zhang_topological_2020}], which is significant in predicting the perturbation spreading behaviour in real-world power grid systems. However, by means of numerical techniques, we can still use the topological factor proposed in \autoref{remark:topological_factor} to estimate the contribution of higher order terms $O(2d+3)$ in the Taylor series (\ref{eq:leading_term_approximation}) in a specific network ensemble and give accurate predictions for the actual arrival times of the impact of a perturbation [\citet{zhang_topological_2020}]. Related recent works \citet{wolter2018quantifying,schroder2019dynamic} studied transient propagation of perturbations in networked systems consisting of one-dimensional dynamical units. One main finding is a similar scaling of the unit's state variables $x_i(t)$ (or their deviations from a base state) with time $t$ as $x_i(t)\sim t^d$ where $d$ is the shortest-path distance between perturbed node and the node $i$ the response in measured at.

\subsection{Responses to deterministic perturbations vs. responses to unpredictable perturbations}
\label{subsec:role_deterministic_stochastic}

The power grid response patterns we discuss in this article can be classified into two categories: the ones emerging in the deterministic responses to a given perturbation signal, such as the transient responses (Sec.~\ref{subsec:transient_patterns}) and the steady-state responses (Sec.~\ref{subsec:steady_patterns}) to a given signal, and the ones that estimate the cumulative impact of an unpredictable signal on a power grid network, such as the DVI measuring the nodal risk of network resonances (Sec.~\ref{subsec:dvi}). Both categories of power grid response patterns are discovered based on the explicit solution of the linear nodal responses (\ref{eq:LRT_oscillator_sinusoidal}) that derived from the LRT.

Looking more closely, one finds that the estimated patterns in the cumulative responses can be seen as a result built upon the deterministic steady-state linear responses with one more dimension, i.e. the specific properties of the perturbation signals. At its core, the steady-state network frequency responses $\dot{\bm{\Theta}}$ (\ref{eq:steady_frequency_response_sinusoidal}) to a single-frequency sinusoidal oscillation $D_k(t)$ at a given node $k$ is a mapping $f_{G,\bm{\theta}^*}:\mathbb{R}\rightarrow\mathbb{R}^N$ with $f_{G,\bm{\theta}^*}$ depending on the underlying network topology $G$ and network's base state $\bm{\theta}^*$ prior to perturbations. The fluctuating nature of the perturbation signal adds another dimension to the responses: the amplitude of $D_k$ is no longer a given constant $\varepsilon$, but becomes further dependent of the frequency $\omega$ through the PSD $S(\omega)$ of the signal, i.e. $\varepsilon(\omega)=S(\omega)^{\frac{1}{2}}$. In this way, the unknown or unpredictable temporally detailed features of the perturbations are integrated into the LRT framework for networked dynamical systems, which extends standard LRT statements and enables us to estimate features of network responses beyond the deterministic realm of systems driven with known signals. 

We emphasize that, due to the intrinsic irregularity of fluctuating perturbation signals, the errors of the estimates for the associated responses appear to be significantly higher than the ones for the deterministic responses [compare \citet{zhang_fluctuation-induced_2019} and \citet{zhang_vulnerability_2020}]. In a finite signal time series characterized by a PSD function, the contained frequencies are finite in number in the considered frequency interval (such as the resonance regime $I_{\t{res}}$), and apparently resonant to each other, which leads to a  deviation in the ranking of the all-time-high nodal responses to the ranking given by indices computed \textit{a priori} (such as the DVI discussed in Sec.~\ref{subsec:dvi}). Additionally, realistic perturbation signals do not follow exactly the characteristic PSD, such that randomness also exists in the amplitudes of the frequency components. Nevertheless, compared to the deterministic patterns which gives only \textit{a posteriori} information of network responses, estimates such as the DVI given by the extended LRT may provide a useful guiding tool for risk assessments in real-world power grid systems.

\subsection{Small responses vs. large responses}
\label{subsec:role_small_large}

As the name suggests, LRT provides the linear approximation of a system's response to a perturbation close to a fixed point of a networked dynamical system. Therefore the solution given by LRT intrinsically deviates from the actual system responses due to the neglected higher order terms in the system's collective nonlinear dynamics. As the system being driven further and further away from the fixed point, the responses typically increase and so do the estimation error of the LRT. However the range of validity of the LRT, as well as how its error grows with the perturbation, is usually nontrivial and system-dependent.

For the oscillator model of power grid networks, the error of the solution given by LRT (\ref{eq:LRT_oscillator_step} and \ref{eq:LRT_oscillator_sinusoidal}) follows the same trend and grows with an increasingly stronger perturbation signal. However, numerical evidence  shows that the error increases mildly with the magnitude of the perturbation until it blows up close to a bifurcation point [\citet{zhang_fluctuation-induced_2019}]. In the linearized dynamics of the oscillator model at a fixed point $\bm{\theta}^*$ (\ref{eq:response_dynamics_oscillator_general}), the deviation of the nonlinear coupling terms $\sin(\theta_j-\theta_i)$ for all edges $(i,j)\in E$ to their values $\sin(\theta^*_j-\theta^*_i)$ at the fixed point are represented by the first-order approximations $\cos(\theta^*_j-\theta^*_i)(\Theta_j-\Theta_i)$, vectorized as the term $-\mathcal{L}\bm{\Theta}$ in (\ref{eq:response_dynamics_oscillator_general}). For power grid systems working at a stable operation state without any transmission line  overloaded ($|\theta_j^*-\theta_i^*|\leq\frac{\pi}{2}$ for all edges $(i,j)\in E$, see \autoref{prop:stability}), the linear approximation breaks down only when the system is driven far enough from the fixed point $\bm{\theta}^*$ and goes close to the point where one of the lines $(i,j)$ is fully loaded, i.e. $\sin(\theta_j-\theta_i)=1$. In this regime, the linear approximation diverges from the actual state of the system and the error grows explosively. As elaborated in an article by \citeasnoun{manik_supply_2014}, when one of the transmission line is fully loaded, the power grid system reaches a bifurcation point where the initial stable fixed point is lost and the oscillating units in the system becomes desynchronized. Therefore, the LRT of the oscillator model of the power grid systems, together with all of the derived response patterns, are generally valid as long as none of the lines become overloaded and the entire system becomes unstable [see \citet{zhang_fluctuation-induced_2019} for quantitative results of the LRT errors].

\section{Conclusions and outlook}
\label{sec:outlook}

\subsection{Conclusions and discussions}
In this work we systematically discuss how linear response theory (LRT) may shed light on the spatio-temporal response patterns emerging in networked dynamical systems under time-dependent perturbations. We exemplify a full analysis for model dynamics of power grid systems which are inevitably exposed to fluctuating power injections from renewable energy sources. Beyond previous works, we integrate and present all details required for a full mathematical analysis, specifically demonstrate how to evaluate the generally intricate, multiple-sum expressions determining spatio-temporal response patterns in a useful way and highlight how different results interconnect, for instance between transient and long-term dynamics or between different types of perturbations and across topologies. We introduce the main ideas of LRT and its general requirements for applicability on system settings (cf. Sec.~\ref{sec:general}). We explicate various aspects of application to models of power grid systems, such as i) the solution of linear responses of the stationary DC power flow model and of the dynamic oscillator model of AC power grids (cf. Sec.~\ref{sec:LRT_models}), and ii) how it helps to identify dynamic patterns in network-wide responses which provide theoretical guidelines for power grid design, control and risk assessments.

Although LRT has been widely used as a powerful tool in analyzing various response dynamics of many complex networked dynamical systems, the works presented in this article provide a fresh methodological angle to approach the problem. For power grid systems, LRT has been used to estimate e.g. quadratic performance measures for the network's overall excursion away from synchrony [\citet{tyloo_robustness_2018}, \citet{tyloo_key_2019}, \citet{plietzsch_bounds_2019}, \citet{coletta_transient_2018}] and the variance of the frequency response increment distribution [\citet{haehne_propagation_2019}]. For connectome dynamics in brain, LRT has also been used to analytically estimate the covariance of Gaussian linear model of the stochastically perturbed system [\citet{tononi_measure_1994}], which links the structural and the functional connectivities between brain regions [\citet{zamora-lopez_cortical_2010}, \citet{wang_hierarchical_2019}].

The work presented in this article approaches the dynamic network responses in a way different from the above-mentioned works: instead of quantifying the stochastic features of the overall or distance-specific responses directly based on the linear responses, we start from explicating the deterministic solution of network-wide responses to a single-frequency signal and using methods from graph theory and asymptotic analysis to extract spatiotemporal response patterns. These may be interpreted in a physically intuitive way. Specifically, the three frequency regimes of steady-state response patterns (Sec.~\ref{subsec:steady_patterns}) and the master curve of transient perturbation spreading (Sec.~\ref{subsec:transient_patterns}) are entirely deterministic. Especially, the former work on fluctuation-induced network resonances can be seen as a direct generalization of the classical resonance phenomenon of a single driven damped harmonic oscillator to oscillators interacting on networks. The emerging pattern constituted by the estimations of the all-time-high nodal response magnitudes to a irregularly varying perturbation (Sec.~\ref{subsec:dvi}), is also a straightforward result derived from the spatial patterns in the network-wide responses to a deterministic (periodic) signal.

\subsection{Challenges and future work}
Future work regarding the response theory for networked dynamical systems may follow several directions.

First, exact analytical solutions, and even many asymptotic results of the linear responses of general networked dynamical systems, as well as the response patterns emerging from these solutions, remain unknown to date. Applying the LRT presented in this work on a networked dynamical system, we employed several conditions on the system's dynamics (see Sec.~\ref{sec:general}) to explicate a full analysis without too many notational and other complications. The conditions include homogeneous nodal dynamics, a diffusive coupling term $g_{ij}(x_j-x_i)$, and the evenness of the coupling function's sensitivity $\frac{\ud g_{ij}}{\ud(x_j-x_i)}$ to small changes in the difference of nodal states $x_j-x_i$, such that a symmetric weighted Laplacian matrix arises in the linearized response dynamics of the system at the fixed (operating) point (\ref{eq:response_dynamics_Laplacian}). The presence of a symmetric Laplacian matrix in the linearized dynamics ensures the option to express the linear responses in the Laplacian eigenbasis, which plays a critical role in linking the response at a specific node to the graph-theoretic distance to the perturbation. Thereby it is critical also in uncovering the topological structure of the dynamical response patterns across the network. However, for many networked dynamical systems, such as the third-order model of power grid dynamics including the voltage dynamics [\citet{machowski_power_2008}], such preconditions are not fulfilled. One way to overcome this theoretical barrier and to extend LRT to such networks of dynamical systems is to transform the system's state variables to another coordinate system where the Jacobian matrix $\mathcal{J}$ in (\ref{eq:LRT_general}) is diagonal or almost diagonal (such as in the Jordan normal forms of $\mathcal{J}$). In this way explicit solutions for linear responses can be obtained in the new coordinate system where dynamic response patterns can be identified in similar ways presented in this work. 

Second, one could use LRT to develop strategies to control the impact of fluctuations on networked dynamical systems such as power grids. So far, we gained insights into the spatiotemporal structure of the responses across networks and developed indices to estimate the nodal risks against external  fluctuations. The next step towards more reliable and more robust power grid systems would be to utilize the obtained understanding to develop countermeasures against the risks, e.g. to suppress the potentially dangerous responses such as network resonances and to slow down the spreading of the impact of a sudden drop of injected power. A potential way to achieve such tasks could be to manipulate discovered response patterns by changing the interaction structure of the power grid.

Third, research on how different classes of network topologies potentially impact response patterns through their specific characteristics of their eigensystems. Progress in this direction seems hard, because one would need to be able to characterize, e.g., eigenvectors of graph ensembles such that they directly help to extract useful information from complex expressions like \eqref{eq:oscillator_response_factor} or even \eqref{eq:LRT_oscillator_sinusoidal} specifically for that ensemble.

Fourth, the LRT per se could be extended by considering also the higher-order approximations of the system's responses, cf. [\citet{Thuemler2022abstract}]. In the current work, we demonstrated that many features of collective response patterns of networked dynamical systems with nonlinear couplings, such as the oscillator model of power grids, are dominantly captured by the first-order (i.e. linear) approximation of the system's responses, yet nonlinear effects may also play a role for other systems with certain forms of the intrinsic nodal dynamics or the interaction dynamics between nodes, specifically if we ask for the loss of solutions near operating states [\citet{Thuemler2022abstract}]. Therefore it would be desirable if the contributions of the higher-order approximations of the system's responses can be estimated. An open question also here is how such nonlinear effects depend on the interaction topology of the network.

We conjecture that general network dynamical systems, also beyond power grids, similarly respond in characteristic ways to external input signals, making the systems non-equilibrium and often non-stationary, and to be described by non-autonomous deterministic or stochastic evolution equations. Several of the analysis steps presented above hint that the key methodological tools are either readily transferable to more general systems' settings or may be adapted to such settings. Candidate classes of systems include networks of multi-dimensional units, with discrete or hybrid dynamics, with delayed interactions or with spatially or temporally correlated stochastic inputs. Application areas may range from gene regulatory networks and metabolic circuits in cell biology to the controlled self-organized dynamics of engineered systems with feedback, from complex mechatronic systems to swarms of autonomous aerial vehicles.

\begin{acknowledgements}
    The authors acknowledge the support from the Deutsche Forschungsgemeinschaft (DFG; German Research Foundation) under Germany’s Excellence Strategy EXC-2068-390729961- Cluster of Excellence Physics of Life, the Center for Advancing Electronics at TU Dresden, by the German Federal Ministry for Research and Education (BMBF grants no. 03SF0472F and no. 03EK3055F), the National Natural Science Foundation of China (grant no. 12161141016), Shanghai Municipal Science and Technology Major Project (grant no. 2021SHZDZX0100), and the Shanghai Municipal Commission of Science and Technology Project (grant nos. 18ZR1442000 and 19511132101).
\end{acknowledgements}

\appendix
\section{Proof of \autoref{eq:polynomial_Q}}
\label{sec:proof_Q}
\begin{proposition}
    Given $\omega>0,\alpha>0$ and $0=\lambda^{[0]}<\cdots<\lambda^{[N-1]}$ as defined in \autoref{subsec:LRT_oscillator_model}, the product $Q^{[\ell]}(\omega)$
    \begin{equation}
        Q^{[\ell]}(\omega):=\prod_{\substack{\ell'=0,\ell'\neq \ell}}^{N-1}\left[\left(-\omega^2+\lambda^{[\ell']}\right)^2+\alpha^2\omega^2\right]
    \end{equation}
    that appears in the numerators of the real part and of the imaginary part of the nodal response strength (\ref{eq:nodal_response_strength}) explicitly depends on $\lambda^{[\ell]}$ and can be expressed as
    \begin{equation}
        Q(\lambda^{[\ell]},\omega)=\sum_{j=0}^{2N-2}C^{[j]}(\lambda^{[\ell]})\omega^{4N-4-2j}
    \end{equation}
    with the coefficient $C^{[j]}(\lambda^{[\ell]})$ is a polynomial of $\lambda^{[\ell]}$ with degree $j$.
    \label{prop:Q}
\end{proposition}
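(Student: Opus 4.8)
The plan is to factor the elementary building block $(-\omega^2+\lambda)^2+\alpha^2\omega^2$ and then read off its dependence on a single eigenvalue via polynomial division by the characteristic polynomial of $\mathcal{L}$. Set $z_\pm=z_\pm(\omega):=\omega^2\pm\imath\alpha\omega$, so that $(-\omega^2+\lambda)^2+\alpha^2\omega^2=(\lambda-z_+)(\lambda-z_-)$ for every $\lambda$. Then I would rewrite
\[
  Q^{[\ell]}(\omega)=\prod_{\ell'\neq\ell}(\lambda^{[\ell']}-z_+)\cdot\prod_{\ell'\neq\ell}(\lambda^{[\ell']}-z_-)=S_\ell(z_+)\,S_\ell(z_-),\qquad S_\ell(t):=\prod_{\ell'\neq\ell}(t-\lambda^{[\ell']}).
\]
The crucial point is that $S_\ell$ is exactly the polynomial quotient $R(t)/(t-\lambda^{[\ell]})$, where $R(t):=\det(tI-\mathcal{L})=\prod_{\ell'}(t-\lambda^{[\ell']})=\sum_{k=0}^{N}a_kt^k$ (with $a_N=1$) is the characteristic polynomial of $\mathcal{L}$. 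Synthetic division gives $S_\ell(t)=\sum_{k=0}^{N-1}b_k(\lambda^{[\ell]})t^k$ with $b_{N-1}=1$ and $b_{k-1}=a_k+\lambda^{[\ell]}b_k$. Unwinding this recursion shows each $b_k$ is a polynomial in $\lambda^{[\ell]}$ of degree exactly $N-1-k$ with leading term $(\lambda^{[\ell]})^{N-1-k}$, while all remaining data, namely $\alpha$ and the coefficients $a_k$ (the elementary symmetric functions of all the eigenvalues, equivalently invariants of $\mathcal{L}$), are independent of $\ell$. This already legitimizes writing $Q^{[\ell]}(\omega)=Q(\lambda^{[\ell]},\omega)$.

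Next I would expand $Q(\lambda^{[\ell]},\omega)=\sum_{k,k'=0}^{N-1}b_k(\lambda^{[\ell]})\,b_{k'}(\lambda^{[\ell]})\,z_+^k z_-^{k'}$ and collect powers of $\omega$. Since $\omega\mapsto-\omega$ interchanges $z_+$ and $z_-$, this double sum is invariant under $\omega\mapsto-\omega$ (the substitution merely swaps the indices $k$ and $k'$), hence it is a polynomial in $\omega^2$. Its top power is $\omega^{4(N-1)}=\omega^{4N-4}$ with coefficient $b_{N-1}^2=1$, and its lowest power is $\omega^0$; so it can be written as $\sum_{j=0}^{2N-2}C^{[j]}(\lambda^{[\ell]})\,\omega^{4N-4-2j}$, and it remains only to bound the degree of each $C^{[j]}$ in $\lambda^{[\ell]}$.

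For the degree bound I would track a single monomial. Binomial expansion of $z_+^k z_-^{k'}$ yields monomials $\omega^{2(k+k')-(p+q)}$ with $0\le p\le k$ and $0\le q\le k'$, whose attached coefficient $b_k(\lambda^{[\ell]})\,b_{k'}(\lambda^{[\ell]})$ has degree $(N-1-k)+(N-1-k')=2N-2-(k+k')$ in $\lambda^{[\ell]}$. For such a monomial to feed $\omega^{4N-4-2j}$ one needs $2(k+k')-(p+q)=4N-4-2j$, and since $p+q\ge 0$ this forces $k+k'\ge 2N-2-j$; hence its contribution to $C^{[j]}$ has $\lambda^{[\ell]}$-degree $2N-2-(k+k')\le j$. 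Thus $\deg_{\lambda^{[\ell]}}C^{[j]}\le j$. Moreover $0\le 2N-2-j\le 2(N-1)$, so there is at least one pair $(k,k')$ with $k+k'=2N-2-j$; for such a pair only the $p=q=0$ monomial feeds $\omega^{4N-4-2j}$, and the leading terms of $b_k,b_{k'}$ then contribute $+1$ to the coefficient of $(\lambda^{[\ell]})^j$ in $C^{[j]}$, with no higher-degree or competing terms available to cancel it. Hence $\deg_{\lambda^{[\ell]}}C^{[j]}=j$ exactly, as claimed.

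The argument is elementary once the factorization $(-\omega^2+\lambda)^2+\alpha^2\omega^2=(\lambda-z_+)(\lambda-z_-)$ and the identification $S_\ell=R/(t-\lambda^{[\ell]})$ are in place. I expect the only fiddly step to be the last paragraph: one must correctly pair the power-of-$\omega$ constraint $k+k'\ge 2N-2-j$ with the $\lambda^{[\ell]}$-degree $2N-2-(k+k')$ of the matching coefficient, so that the two bounds combine termwise to give degree $\le j$ — the cleanness comes precisely from the bound holding for each monomial separately, so that no cancellation argument is needed.
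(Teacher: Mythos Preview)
Your proof is correct and takes a genuinely different route from the paper's. The paper expands each factor as the real trinomial $\omega^4+(\alpha^2-2\lambda^{[\ell']})\omega^2+(\lambda^{[\ell']})^2$, views a term of the product as arising from a tripartition $(s_1,s_2,s_3)$ of the index set $S_\ell=\{0,\dots,N-1\}\setminus\{\ell\}$, and then proves by a three-step induction that the partial sums $Y^{[\ell]}_{b,c}$ (summing the contributions with $|s_2|=b$, $|s_3|=c$) are polynomials in $\lambda^{[\ell]}$ of degree $b+2c$, using the recursion $Y^{[\ell]}_{b,c}=Y_{b,c}-(\alpha^2-2\lambda^{[\ell]})Y^{[\ell]}_{b-1,c}-(\lambda^{[\ell]})^2 Y^{[\ell]}_{b,c-1}$ with the $Y_{b,c}$ independent of $\ell$.

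Your argument replaces this with the complex factorization $(-\omega^2+\lambda)^2+\alpha^2\omega^2=(\lambda-z_+)(\lambda-z_-)$ and the identification $S_\ell(t)=R(t)/(t-\lambda^{[\ell]})$ with $R=\det(tI-\mathcal{L})$. This makes the $\ell$-independence of the coefficients conceptually transparent (everything is built from the fixed $a_k$'s and $\alpha$), and synthetic division yields $\deg_{\lambda^{[\ell]}}b_k=N-1-k$ with unit leading coefficient without any induction. The degree bound for $C^{[j]}$ then falls out of a single monomial count rather than a recursive argument. One small sharpening: in your last step, \emph{every} pair $(k,k')$ with $k+k'=2N-2-j$ contributes $+1$ to the $(\lambda^{[\ell]})^j$ coefficient (not just one), but since all these contributions have the same sign, the conclusion that the degree is exactly $j$ is unaffected. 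The paper's approach stays within real arithmetic and is closer in spirit to the symmetric-function manipulations used elsewhere in the article; yours is shorter and more structural.
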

\begin{proof}
    To prove the proposition, we first rewrite the factors in $Q^{[\ell]}(w)$ by ordering the terms according to the degree of $\omega$:
    \begin{equation}
        Q^{[\ell]}(w)=\prod_{\substack{\ell'=0,\ell'\neq \ell}}^{N-1}\left[\omega^4+\left(\alpha^2-2\lambda^{[\ell']}\right)\omega^2+\left(\lambda^{[\ell']}\right)^2\right]=:\prod_{\substack{\ell'=0,\ell'\neq \ell}}^{N-1}\sum_{m=1}^3r_m(\lambda^{[\ell']})\omega^{2(3-m)}.
        \label{eq:Q_expanded}
    \end{equation}
    According to the distributivity of multiplication over addition, it is clear from (\ref{eq:Q_expanded}) that each term in $Q^{[\ell]}(w)$, a polynomial of $\omega$ with degree $4N-4$, can be seen as the product of three factors: $\prod_{\ell'\in {s_1}}r_1(\lambda^{[\ell']})\omega^4$, $\prod_{\ell'\in {s_2}}r_2(\lambda^{[\ell']})\omega^2$ and $\prod_{\ell'\in {s_3}}r_3(\lambda^{[\ell']})$, where sets $s_1$, $s_2$, and $s_3$ have $a:=|s_1|$, $b:=|s_2|$ and $c:=|s_3|$ elements respectively and together form a partition $P_{\ell}(a,b,c)$ of the set of the indices of the $N-1$ eigenmodes $S_{\ell}:=\{0,...,N-1\}\backslash\{\ell\}$. Here $a,b,c\in \mathbb{N}_0$ and satisfy $a+b+c=N-1$. Since $r_1$, $r_2$ and $r_3$ are polynomials of $\lambda^{[\ell']}$ with degree $0$, $1$ and $2$, a term in $Q^{[\ell]}(w)$ with $\omega^{4a+2b}$ would have a coefficient involving a multiplication of $2b+c$ Laplacian eigenvalues $\lambda^{[\ell']}$ with $\ell'\in S_{\ell}$. Denoting $j=b+2c$, we can write the coefficient of the term with degree $4a+2b=4N-4-2j$ as

	\begin{equation}
		C^{[\ell]}_j=\sum_{\substack{a+b+c=N-1\\b+2c=j}}\sum_{P_{\ell}(a,b,c)}\prod_{p\in s_2}\left(\alpha^2-2\lambda^{[p]}\right)\prod_{q\in s_3}\left(\lambda^{[q]}\right)^2,
		\label{eq:C_j}
	\end{equation}

    which is a sum over all possible partitions $P_{\ell}(a,b,c)$ satisfying $a+b+c=N-1$ and $b+2c=j$.
    
    In the following we show that the coefficient $C^{[\ell]}_j$ is a polynomial of $\lambda^{[\ell]}$ with degree $j$, i.e. $\deg[C^{[\ell]}_j(\lambda^{[\ell]})]=j$. For convenience of notation in the proof, we define the sum of coefficients involving both $r_2$ and $r_3$ over $s_2\in\binom{S_{\ell}}{b}$ and $s_3\in\binom{S_{\ell}\backslash s_2}{c}$, i.e. all possible partitions $P_{\ell}(a,b,c)$ of $S_{\ell}$ as

	\begin{equation}
		Y^{[\ell]}_{b,c}:=\sum_{s_2\in\binom{S_{\ell}}{b},s_3\in\binom{S_{\ell}\backslash s_2}{c}}\prod_{p\in s_2}\left(\alpha^2-2\lambda^{[p]}\right)\prod_{q\in s_3}\left(\lambda^{[q]}\right)^2.
        \label{eq:Yl_bc}
    \end{equation}
    Here $\binom{S_{\ell}}{b}$ denotes all possible $b$-subsets of $S_{\ell}$. Similarly, we define the sum of the coefficients over all possible partitions of $S:=\{0,...,N-1\}$ as

    \begin{equation}
        Y_{b,c}:=\sum_{s_2\in\binom{S}{b},s_3\in\binom{S\backslash s_2}{c}}\prod_{p\in s_2}\left(\alpha^2-2\lambda^{[p]}\right)\prod_{q\in s_3}\left(\lambda^{[q]}\right)^2.
        \label{eq:Y_bc}
    \end{equation}
    In case $b=0$ or $c=0$, the corresponding product is omitted. It is clear that $Y_{b,c}$, including special cases $Y_{b,0}$ and $Y_{0,c}$ are constants independent of $\lambda^{[\ell]}$. Using definition (\ref{eq:Yl_bc}) and (\ref{eq:Y_bc}), we can write $C_j^{[\ell]}$ in \ref{eq:C_j} as $\sum_{\substack{a+b+c=N-1,b+2c=j}}Y^{[\ell]}_{b,c}$. To prove \autoref{prop:Q}, we only need to prove $Y^{[\ell]}_{b,c}$ is a polynomial of $\lambda^{[\ell]}$ with degree $j=b+2c$, i.e.
    \begin{equation}
        \deg\left[Y^{[\ell]}_{b,c}(\lambda^{[\ell]})\right]=b+2c.
        \label{eq:Yl_bc_polynomial}
    \end{equation}
    Now we show (\ref{eq:Yl_bc_polynomial}) in three steps. All subproofs are given by mathematical induction.
    
    \begin{enumerate}[\hspace{4pt}(P1)]
        \item[Step 1:] First, we show that the sum of $r_2$-related factors over $s_2\in\binom{S_{\ell}}{b}$ is a polynomial of $\lambda^{[\ell]}$ with degree $b$. That is, $\deg[Y^{[\ell]}_{(b,0)}(\lambda^{[\ell]})]=b$.
        \begin{enumerate}[\hspace{-12pt}(P1)]
            \item For $b=1$, we have $Y^{[\ell]}_{1,0}=Y_{1,0}-(\alpha^2-2\lambda^{[\ell]})$, which is a polynomial of $\lambda^{[\ell]}$ with degree $1$ since $Y_{1,0}$ is a constant independent of $\lambda^{[\ell]}$.
            \item If the statement holds for $b=n-1$, i.e. $\deg[Y^{[\ell]}_{n-1,0}(\lambda^{[\ell]})]=n-1$, then for $b=n$ we have $Y^{[\ell]}_{n,0}=Y_{n,0}-\left(\alpha^2-2\lambda^{[\ell]}\right)Y^{[\ell]}_{n-1,0}$, satisfying $\deg[Y^{[\ell]}_{n,0}(\lambda^{[\ell]})]=n$.
        \end{enumerate}
        
        \item[Step 2:] Second, we show that the sum of $r_3$-related factors over $s_3\in\binom{S_{\ell}}{c}$ is a polynomial of $\lambda^{[\ell]}$ with degree $2c$. That is, $\deg[Y^{[\ell]}_{(0,c)}(\lambda^{[\ell]})]=2c$.
        \begin{enumerate}[\hspace{-12pt}(P1)]
            \item For $c=1$, we have $Y^{[\ell]}_{0,1}=Y_{0,1}-(\lambda^{[\ell]})^2$, which is a polynomial of $\lambda^{[\ell]}$ with degree $2$ since $Y_{0,1}$ is a constant independent of $\lambda^{[\ell]}$.
            \item If the statement holds for $c=n-1$, i.e. $\deg[Y^{[\ell]}_{0,n-1}(\lambda^{[\ell]})]=2n-2$, then for $c=n$ we have $Y^{[\ell]}_{0,n}=Y_{0,n}-\left(\lambda^{[\ell]}\right)^2Y^{[\ell]}_{0,n-1}$, satisfying $\deg[Y^{[\ell]}_{0,n}(\lambda^{[\ell]})]=2n$.
        \end{enumerate}
        
        \item[Step 3:] Finally, we show that the sum of coefficients involving both $r_2$ and $r_3$ over $s_2\in\binom{S_{\ell}}{b}$ and $s_3\in\binom{S_{\ell}\backslash\{s_2\}}{c}$, i.e. all possible partitions $P_{\ell}(a,b,c)$, is a polynomial of $\lambda^{[\ell]}$ with degree $j=b+2c$. That is, equation (\ref{eq:Yl_bc_polynomial}).
        \begin{enumerate}[\hspace{-12pt}(P1)]
            \item For $b=1,c=1$, we have $Y^{[\ell]}_{1,1}=Y_{1,1}-(\alpha^2-2\lambda^{[\ell]})Y^{[\ell]}_{0,1}-(\lambda^{[\ell]})^2Y^{[\ell]}_{1,0}$, which is a polynomial of $\lambda^{[\ell]}$ with degree $3$ since $\deg[Y_{(1,1)}(\lambda^{[\ell]})]=0$, $\deg[Y^{[\ell]}_{(0,1)}(\lambda^{[\ell]})]=2$ and $\deg[Y^{[\ell]}_{(1,0)}(\lambda^{[\ell]})]=1$.
            \item If the statement holds for $b=m-1,c=n-1$, i.e. $\deg[Y^{[\ell]}_{m-1,n-1}(\lambda^{[\ell]})]=m+2n-3$, then for $b=m,c=n$ we have
            \begin{align}
                Y^{[\ell]}_{m,n}=&Y_{m,n}-\left(\alpha^2-2\lambda^{[\ell]}\right)Y^{[\ell]}_{m-1,n}-\left(\lambda^{[\ell]}\right)^2Y^{[\ell]}_{m,n-1}\nonumber\\
                =&Y_{m,n}-\left(\alpha^2-2\lambda^{[\ell]}\right)\left(Y_{m-1,n}-\left(\lambda^{[\ell]}\right)^2 Y^{[\ell]}_{m-1,n-1}\right)\nonumber\\
	            & -\left(\lambda^{[\ell]}\right)^2 \left(Y_{m,n-1}-\left(\alpha^2-2\lambda^{[\ell]}\right) Y^{[\ell]}_{m-1,n-1}\right).\nonumber
            \end{align}
            Taking into account that $Y_{m,n}$, $Y_{m-1,n}$ and $Y_{m,n-1}$ all have degree $0$, we can easily see that $\deg[Y^{[\ell]}_{m,n}(\lambda^{[\ell]})]=m+2n$, meaning the statement also holds for $b=m$ and $c=n$.
        \end{enumerate}
    \end{enumerate}
    
\end{proof}

\section{Proof of \autoref{eq:derivative_leading_term}}
\label{sec:proof_F}
\begin{proposition}
    The function 
    \begin{equation}
        F_n(\lambda^{[\ell]}):=\dfrac{1}{-\omega^2+\imath\alpha\omega+\lambda^{[\ell]}}\left[\dfrac{\left(\Delta^{[\ell]}_+\right)^n\left(\Delta^{[\ell]}_--\imath\omega\right)-\left(\Delta^{[\ell]}_-\right)^n\left(\Delta^{[\ell]}_+-\imath\omega\right)}{2\eta^{[\ell]}}+(\imath\omega)^n\right]
    \end{equation}
    that appears in the $n$-th order derivative of the linear response at $t=0$ (\ref{eq:transient_derivative}) has a leading term with respect to $\lambda^{[\ell]}$
    \begin{align}
        \lt{F_n(\lambda^{[\ell]})}=\left\lbrace
		  \begin{array}{ll}
			(-1)^{\frac{n-1}{2}}\left(-\imath\omega+\frac{n-1}{2}\alpha\right)\left(\lambda^{[\ell]}\right)^{\frac{n-3}{2}}& \quad \text{if } n \text{ is odd,}\\
		   \left(-\lambda^{[\ell]}\right)^{\frac{n-2}{2}}& \quad \text{if } n \text{ is even.}
		  \end{array}\right.
		 \label{eq:derivative_leading_term_appendix}
    \end{align}
    Here $\Delta^{[\ell]}_{\pm}:=-\alpha/2\pm\eta^{[\ell]}$, $\eta^{[\ell]}:=\sqrt{\alpha^2/4-\lambda^{[\ell]}}$ with $\alpha>0$, $0=\lambda^{[0]}<\cdots<\lambda^{[N-1]}$ and $\omega>0$, $n\in\mathbb{N}$, $n\geq2$.
\end{proposition}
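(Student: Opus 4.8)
The plan is to collapse $F_n$ into a closed form and then read off its leading term in $\lambda:=\lambda^{[\ell]}$. Set $a:=\Delta_+^{[\ell]}=-\tfrac{\alpha}{2}+\eta^{[\ell]}$, $b:=\Delta_-^{[\ell]}=-\tfrac{\alpha}{2}-\eta^{[\ell]}$ and $z:=\imath\omega$. Two elementary observations start things off: first, $a+b=-\alpha$ and $ab=\tfrac{\alpha^2}{4}-(\eta^{[\ell]})^2=\lambda$, so $a,b$ are exactly the roots of $x^2+\alpha x+\lambda$; second, the denominator of $F_n$ is $-\omega^2+\imath\alpha\omega+\lambda=z^2+\alpha z+\lambda=(z-a)(z-b)$, while $2\eta^{[\ell]}=a-b$ and $\Delta_\mp^{[\ell]}-\imath\omega=b-z$ resp.\ $a-z$. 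After these substitutions $F_n$ becomes a rational expression in $a,b,z$.

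First I would simplify $z^n/[(z-a)(z-b)]$ by partial fractions plus polynomial division: using $\tfrac{1}{(z-a)(z-b)}=\tfrac{1}{a-b}\big(\tfrac{1}{z-a}-\tfrac{1}{z-b}\big)$ and $z^n-a^n=(z-a)\sum_{i=0}^{n-1}a^iz^{n-1-i}$ (and the same with $b$), one obtains
\[
  \frac{z^n}{(z-a)(z-b)}=\sum_{k=0}^{n-2}h_k(a,b)\,z^{\,n-2-k}+\frac{a^n(z-b)-b^n(z-a)}{(a-b)(z-a)(z-b)},
\]
where $h_k(a,b):=\tfrac{a^{k+1}-b^{k+1}}{a-b}=\sum_{i=0}^{k}a^ib^{k-i}$ is the degree-$k$ complete homogeneous symmetric polynomial. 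Adding the remaining term $\tfrac{a^n(b-z)-b^n(a-z)}{(a-b)(z-a)(z-b)}$ of $F_n$, the two rational remainders cancel identically, since the combined numerator is $a^n\big[(b-z)+(z-b)\big]-b^n\big[(a-z)+(z-a)\big]=0$. Hence $F_n(\lambda^{[\ell]})=\sum_{k=0}^{n-2}h_k\!\big(\Delta_+^{[\ell]},\Delta_-^{[\ell]}\big)(\imath\omega)^{\,n-2-k}$. Because each $h_k$ is symmetric in $a,b$, it is a polynomial in $e_1:=a+b=-\alpha$ and $e_2:=ab=\lambda$; in particular this already shows $F_n$ is a genuine polynomial in $\lambda$ (the apparent $1/(2\eta^{[\ell]})$ singularity is removable) and explains the vanishing $F_0=F_1=0$ (empty sum).

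Next I would track the $\lambda$-degree via the generating identity $\sum_{k\ge0}h_kt^k=\big[(1-at)(1-bt)\big]^{-1}=(1-e_1t+e_2t^2)^{-1}$, which after expanding the geometric series gives $h_k=\sum_{j=0}^{\lfloor k/2\rfloor}(-1)^j\binom{k-j}{j}e_1^{\,k-2j}e_2^{\,j}$. Thus $h_k$ has $\lambda$-degree $\lfloor k/2\rfloor$, with leading term $(-\lambda)^m$ if $k=2m$ and $(m+1)(-1)^{m+1}\alpha\,\lambda^m$ if $k=2m+1$. In $F_n=\sum_{k=0}^{n-2}h_k(\imath\omega)^{n-2-k}$ the $k$-th summand therefore has $\lambda$-degree $\lfloor k/2\rfloor$, maximal at $k=n-2$. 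If $n$ is even, only $k=n-2$ attains the maximum $(n-2)/2$, giving $\lt{F_n(\lambda^{[\ell]})}=(-\lambda)^{(n-2)/2}$. If $n$ is odd, both $k=n-2$ and $k=n-3$ attain degree $(n-3)/2$, so $\lt{F_n(\lambda^{[\ell]})}$ is the sum of $\lt{h_{n-2}}$ and $(\imath\omega)\,\lt{h_{n-3}}$; inserting the two leading-term formulas with $m=(n-3)/2$ and using $(-1)^{(n-3)/2}=-(-1)^{(n-1)/2}$ collapses this to $(-1)^{(n-1)/2}\big(-\imath\omega+\tfrac{n-1}{2}\alpha\big)\lambda^{(n-3)/2}$, which is nonzero since its imaginary part is $-\omega\ne0$. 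This is exactly \eqref{eq:derivative_leading_term_appendix}.

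The only delicate point I anticipate is the odd-$n$ case, where two summands collide at the top $\lambda$-degree and must be combined with correct binomial-and-sign bookkeeping — in particular one must verify the two leading contributions do not cancel (they do not, since one is purely real in $\alpha$ and the other carries $\imath\omega$). A fully equivalent but more pedestrian route avoids the closed form: verify directly that $F_n$ obeys the three-term recursion $F_{n+1}=-\alpha F_n-\lambda F_{n-1}+(\imath\omega)^{n-1}$ with $F_0=F_1=0$ — which follows because $(\imath\omega)^{n+1}+\alpha(\imath\omega)^n+\lambda(\imath\omega)^{n-1}=(\imath\omega)^{n-1}(z-a)(z-b)$ while $a,b$ annihilate $x^2+\alpha x+\lambda$ — and then prove \eqref{eq:derivative_leading_term_appendix} by induction on $n$, the $-\lambda F_{n-1}$ term raising the degree in the even step and the $-\alpha F_n$ and $-\lambda F_{n-1}$ terms combining in the odd step; the bookkeeping is the same either way.
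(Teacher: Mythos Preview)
Your proof is correct and takes a genuinely different route from the paper's. The paper rewrites $F_n$ as $\big(\lambda f_{n-1}-\imath\omega f_n+(\imath\omega)^n\big)/(-\omega^2+\imath\alpha\omega+\lambda)$ with $f_n:=\big((\Delta_+)^n-(\Delta_-)^n\big)/(2\eta)$, then invokes the low-dissipation regime (so that $\Delta_\pm$ are complex conjugates) to separate $\mathrm{Re}\,(\Delta_+)^n$ and $\mathrm{Im}\,(\Delta_+)^n$ and proves the leading-term formulas for these by a two-step induction on $n$, finally reassembling $\lt{F_n}$. Your argument instead collapses $F_n$ to the closed form $\sum_{k=0}^{n-2}h_k(\Delta_+,\Delta_-)(\imath\omega)^{n-2-k}$, recognises the $h_k$ as complete homogeneous symmetric polynomials in the elementary symmetrics $e_1=-\alpha$, $e_2=\lambda$, and reads off the leading $\lambda$-term from the standard generating-function expansion of $h_k$. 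This buys you more: it shows at once that $F_n$ is a polynomial in $\lambda$ (without passing through the low-dissipation assumption the paper uses), it gives \emph{all} coefficients of $F_n$ rather than just the top one, and it makes the $n$-even/$n$-odd dichotomy transparent as a parity-of-$k$ statement about $\lfloor k/2\rfloor$. The paper's induction is more elementary in that it avoids generating functions altogether. Your alternative three-term recursion $F_{n+1}=-\alpha F_n-\lambda F_{n-1}+(\imath\omega)^{n-1}$ sits between the two: it is an induction on $F_n$ directly, whereas the paper inducts on the auxiliary pair $\big(\mathrm{Re}\,(\Delta_+)^n,\ f_n\big)$.
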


\begin{proof}
    Using the relation $\Delta^{[\ell]}_+\Delta^{[\ell]}_-=\lambda^{[\ell]}$ we rewrite the function under study as
    \begin{equation}
        F_n(\lambda^{[\ell]})=\dfrac{\lambda^{[\ell]}f_{n-1}(\lambda^{[\ell]})-\imath\omega f_n(\lambda^{[\ell]})+(\imath\omega)^n}{-\omega^2+\imath\alpha\omega+\lambda^{[\ell]}}
        \label{eq:polynomial_F_f}
    \end{equation}
    with
    \begin{equation}
        f_n(\lambda^{[\ell]}):=\dfrac{1}{2\eta^{[\ell]}}\left[(\Delta^{[\ell]}_+)^n-(\Delta^{[\ell]}_-)^n\right].
    \end{equation}
    It is clear from (\ref{eq:polynomial_F_f}) that the leading term of $F_n(\lambda^{[\ell]})$ depends on the leading term of $f_n(\lambda^{[\ell]})$ as
    \begin{align}
        \lt{F_n(\lambda^{[\ell]})}&=\frac{\lt{\lambda^{[\ell]}f_{n-1}(\lambda^{[\ell]})-\imath\omega f_n(\lambda^{[\ell]})+(\imath\omega)^n}}{\lt{-\omega^2+\imath\alpha\omega+\lambda^{[\ell]}}}\\
        &=\frac{1}{\lambda^{[\ell]}}\lt{\lambda^{[\ell]}f_{n-1}(\lambda^{[\ell]})-\imath\omega f_n(\lambda^{[\ell]})}.
        \label{eq:lt_F_f}
    \end{align} 
    
    Please note that $(\Delta^{[\ell]}_+)^n$ and $(\Delta^{[\ell]}_-)^n$ in $f_n(\lambda^{[\ell]})$ are a complex conjugate pair, since $\eta^{[\ell]}$ is imaginary under the low dissipation of power grid systems (see \autoref{remark:oscillator_eigenfrequencies}). Therefore we have $(\Delta^{[\ell]}_-)^{n}=\overline{(\Delta^{[\ell]}_+)^{n}}$ which leads to $f_n(\lambda^{[\ell]})=\text{Im}\,(\Delta^{[\ell]}_+)^{n}/\sqrt{\lambda^{[\ell]}-\alpha^2/4}$. Now proving (\ref{eq:derivative_leading_term_appendix}) boils down to determining the leading term of $(\Delta^{[\ell]}_+)^n$. In the following we use mathematical induction to show that leading term of the real part and the imaginary part of $(\Delta^{[\ell]}_+)^n$ follows
    \begin{align}
        \lt{\text{Re}\,(\Delta^{[\ell]}_+)^n}&=
	    \left\lbrace
	    \begin{array}{ll}
	        (-1)^{\frac{n+1}{2}}\frac{n}{2}\alpha\left(\lambda^{[\ell]}\right)^{\frac{n-1}{2}}      & \quad \text{if } n \text{ is odd,}\\
	        \left(-\lambda^{[\ell]}\right)^{\frac{n}{2}}     & \quad \text{if } n \text{ is even;}\\
	    \end{array}\right.\label{eq:lt_delta_re}\\
	    \lt{f_n(\lambda^{[\ell]})}=\lt{\dfrac{\text{Im}\,(\Delta^{[\ell]}_+)^n}{\sqrt{\lambda^{[\ell]}-\alpha^2/4}}}&=
    	\left\lbrace
	    \begin{array}{ll}
	        \left(-\lambda^{[\ell]}\right)^{\frac{n-1}{2}}      & \quad \text{if } n \text{ is odd,}\\
	        (-1)^{\frac{n}{2}}\frac{n}{2}\alpha \left(\lambda^{[\ell]}\right)^{\frac{n-2}{2}}     & \quad \text{if } n \text{ is even.}\\
	    \end{array}\right.\label{eq:lt_fn}
    \end{align}
    \begin{enumerate}
        \item[(a)] For $n=2$ and $n=3$, we can easily verify (\ref{eq:lt_delta_re}) and (\ref{eq:lt_fn}) by spelling out $(\Delta^{[\ell]}_+)^n$:
        \begin{align}
            &\lt{\text{Re}\,(\Delta^{[\ell]}_+)^2}=\lt{-\lambda^{[\ell]}+\tfrac{1}{2}\alpha^2}=-\lambda^{[\ell]},
            \nonumber\\
            &\lt{\text{Re}\,(\Delta^{[\ell]}_+)^3}=\lt{\tfrac{3}{2}\alpha\lambda^{[\ell]}-\tfrac{1}{4}\alpha^2-\tfrac{1}{4}\alpha^3}=\tfrac{3}{2}\alpha\lambda^{[\ell]},\nonumber\\
            &\lt{f_2(\lambda^{[\ell]})}=\lt{-\alpha}=-\alpha,\quad\lt{f_3(\lambda^{[\ell]})}=\lt{-\lambda^{[\ell]}+\alpha^2}=-\lambda^{[\ell]}\nonumber.
        \end{align}
        \item[(b)] Now we show that (\ref{eq:lt_delta_re}) and (\ref{eq:lt_fn}) hold for $n+1$ if they hold for $n$, no matter $n$ is odd or even. The leading term of $\Real\,(\Delta^{[\ell]}_+)^{n+1}$ and $f_{n+1}(\lambda^{[\ell]})$ can be expressed in terms of the leading term of $\Real\,(\Delta^{[\ell]}_+)^{n}$ and $f_{n}(\lambda^{[\ell]})$ as following
        \begin{align}
            \lt{\Real\,(\Delta^{[\ell]}_+)^{n+1}}&=\lt{(-\tfrac{1}{2}\alpha)\lt{\Real\,(\Delta^{[\ell]}_+)^{n}}-\lt{f_n(\lambda^{[\ell]})}(\lambda^{[\ell]}-\tfrac{1}{4}\alpha^2)},\\
            \lt{f_{n+1}(\lambda^{[\ell]})}&=\lt{\lt{\Real\,(\Delta^{[\ell]}_+)^{n}}+(-\tfrac{1}{2}\alpha)\lt{f_{n}(\lambda^{[\ell]})}}.
        \end{align}
        In case $n$ is odd, we have
        \begin{align}
            \lt{\Real\,(\Delta^{[\ell]}_+)^{n+1}}&=\lt{(-1)^{\frac{n+3}{2}}\tfrac{n}{4}\alpha^2(\lambda^{[\ell]})^{\frac{n-1}{2}}+(-\lambda^{[\ell]})^{\frac{n-1}{2}}(\lambda^{[\ell]}-\tfrac{1}{4}\alpha^2)}\nonumber\\
            &=(-\lambda^{[\ell]})^{\frac{n+1}{2}},\text{ and}\label{eq:n_odd_re}\\
            \lt{f_{n+1}(\lambda^{[\ell]})}&=\lt{ (-1)^{\frac{n+1}{2}}\tfrac{n}{2}\alpha(\lambda^{[\ell]})^{\frac{n-1}{2}}-\tfrac{1}{2}\alpha(-\lambda^{[\ell]})^{\frac{n-1}{2}}}\nonumber\\
            &=(-1)^{\frac{n+1}{2}}\tfrac{n+1}{2}\alpha(\lambda^{[\ell]})^{\frac{n-1}{2}}\label{eq:n_odd_f}
        \end{align}
        In case $n$ is even, we have
        \begin{align}
            \lt{\Real\,(\Delta^{[\ell]}_+)^{n+1}}&=\lt{(-\lambda^{[\ell]})^{\frac{n}{2}}(-\tfrac{1}{2}\alpha)+(-1)^{\frac{n}{2}}\dfrac{n}{2}\alpha (\lambda^{[\ell]})^{\frac{n-2}{2}}(\lambda^{[\ell]}-\tfrac{1}{4}\alpha^2)}\nonumber\\
            &=(-1)^{\frac{n+2}{2}}\tfrac{n+1}{2}\alpha(\lambda^{[\ell]})^{\frac{n}{2}},\text{ and}\label{eq:n_even_re}\\
            \lt{f_{n+1}(\lambda^{[\ell]})}&=\lt{(-\lambda^{[\ell]})^{\frac{n}{2}}+(-1)^{\frac{n+2}{2}}\tfrac{n+1}{2}\alpha (\lambda^{[\ell]})^{\frac{n-2}{2}}}=(-\lambda^{[\ell]})^{\frac{n}{2}}.\label{eq:n_even_f}
        \end{align}
        The results (\ref{eq:n_odd_re}, \ref{eq:n_odd_f}, \ref{eq:n_even_re}, \ref{eq:n_even_f}) agree with the statement (\ref{eq:lt_delta_re}) and (\ref{eq:lt_fn}).
    \end{enumerate}
    Combining results (\ref{eq:lt_F_f}) and (\ref{eq:lt_fn}), we arrive at the leading term of $F_n(\lambda^{[\ell]})$ as
    \begin{align}
        \lt{F_n(\lambda^{[\ell]})}&=\dfrac{1}{\lambda^{[\ell]})}\lt{(-1)^{\frac{n-1}{2}}\tfrac{n-1}{2}\alpha(\lambda^{[\ell]})^{\frac{n-1}{2}}-\imath\omega (-\lambda^{[\ell]})^{\frac{n-1}{2}}}\nonumber\\
        &=(-1)^{\frac{n-1}{2}}\left(-\imath\omega+\tfrac{n-1}{2}\alpha\right)(\lambda^{[\ell]})^{\frac{n-3}{2}} \quad \text{if } n \text{ is odd, and}\\
        \lt{F_n(\lambda^{[\ell]})}&=\dfrac{1}{\lambda^{[\ell]})}\lt{(-1)^{\frac{n-2}{2}}(\lambda^{[\ell]})^{\frac{n}{2}}-\imath\omega(-1)^{\frac{n}{2}}\tfrac{n}{2}\alpha(\lambda^{[\ell]})^{\frac{n-2}{2}}}\nonumber\\
        &=(-\lambda^{[\ell]})^{\frac{n-2}{2}}\quad \text{if } n \text{ is even.}
    \end{align}
\end{proof}

\bibliographystyle{dcu}
\bibliography{FluctuationResponsePatterns}

\label{lastpage}

\end{document}